\newcommand{\R}[2]{{R}_{#1 \rightarrow #2}}
\newcommand{\ind}[2]{\mathsf{ind}_{#1} (#2)}
\newcommand{\Fs}{F_{\mathsf{s}}}
\newcommand{\FsI}{F_{\mathsf{s,I}}}
\newcommand{\FsII}{F_{\mathsf{s,II}}}
\newcommand{\MsI}{\bM_{\mathsf{s,I}}}
\newcommand{\CsI}{\bC_{\mathsf{s,I}}}
\newcommand{\MsII}{\bM_{\mathsf{s,II}}}
\newcommand{\CsII}{\bC_{\mathsf{s,II}}}
\newcommand{\bEI}{\bE_{\mathsf{I}}}
\newcommand{\bEII}{\bE_{\mathsf{II}}}
\newcommand{\MsIOv}{\overline{\bM}_{\mathsf{s,I}}}
\newcommand{\MsIUn}{\underline{\bM}_{\mathsf{s,I}}}
\newcommand{\key}{\cQ}
\newcommand{\mat}[2]{#1(#2)}
\newcommand{\lp}{\left(}
\newcommand{\rp}{\right)}
\newcommand{\lb}{\left[}
\newcommand{\rb}{\right]}
\newcommand{\lc}{\left\{}
\newcommand{\rc}{\right\}}
\newcommand{\cnd}{\Big|}
\newcommand{\cEI}{\mathcal{E}_{\mathsf{I}}}
\newcommand{\cEII}{\mathcal{E}_{\mathsf{II}}}
\newtheorem{lm}{Lemma}
\newtheorem{defi}{Definition}
\newtheorem{prop}{Proposition}
\newtheorem{thm}{Theorem}
\newtheorem{remark}{Remark}
\newtheorem{ex}{Example}
\newtheorem{property}{Property}
\newcommand{\bA}{\mathbf{A}}
\newcommand{\bB}{\mathbf{B}}
\newcommand{\bC}{\mathbf{C}}
\newcommand{\bD}{\mathbf{D}}
\newcommand{\bE}{\mathbf{E}}
\newcommand{\bM}{\mathbf{M}}
\newcommand{\bR}{\mathbf{R}}
\newcommand{\bX}{\mathbf{X}}
\newcommand{\bXi}{\mathbf{\Xi}}
\newcommand{\bbJ}{\mathbb{J}}
\newcommand{\bbI}{\mathbb{I}}
\newcommand{\pJ}{\tilde{\mathbb{J}}}
\newcommand{\lbl}[1]{\langle #1 \rangle}
\newcommand{\cA}{\mathcal{A}}
\newcommand{\cG}{\mathcal{G}}
\newcommand{\cH}{\mathcal{H}}
\newcommand{\cI}{\mathcal{I}}
\newcommand{\cJ}{\mathcal{J}}
\newcommand{\cK}{\mathcal{K}}
\newcommand{\cL}{\mathcal{L}}
\newcommand{\cN}{{N}}
\newcommand{\cP}{\mathcal{P}}
\newcommand{\cQ}{\mathcal{Q}}
\newcommand{\cS}{{S}}
\newcommand{\cT}{\mathcal{T}}
\newcommand{\cV}{\mathcal{V}}
\newcommand{\cW}{\mathcal{W}}
\newcommand{\cX}{\mathcal{X}}
\newcommand{\cY}{\mathcal{Y}}
\newcommand{\up}[1]{\overline{#1}}
\newcommand{\down}[1]{\underline{#1}}
\newcommand{\X}[2]{\chi_{#2}^{#1}}
\newcommand{\setl}{\prec}
\newcommand{\nsl}{\lhd}
\renewcommand{\det}[1]{\mathsf{det}\left(#1\right)}
\newcommand{\bxi}{{\mathbf{\Xi}}}
\newcommand{\ubxi}{\up{\mathbf{\Xi}}}
\newcommand{\dbxi}{\down{\mathbf{\Xi}}}
\newcommand{\sbxi}{\widehat{\mathbf{\Xi}}}
\newcommand{\pbxi}{\widetilde{\mathbf{\Xi}}}
\newcommand{\rk}[1]{\text{rank}\left(#1\right)}
\begin{document}
    
\title{Secure Determinant Codes \\for Distributed Storage Systems}
 
 \author{
  \IEEEauthorblockN{
    Adel~Elmahdy\textsuperscript{\textsection}, Michelle~Kleckler\textsuperscript{\textsection} and Soheil~Mohajer%
    \thanks{This work was supported in part by the National Science Foundation under Grants CCF-1617884 and CCF-1749981. A preliminary version of this work was presented in part at the 2019 and 2020 IEEE International Symposium on Information Theory.}
    \thanks{The authors are with the Department of Electrical and Computer Engineering, University of Minnesota, Minneapolis, MN, 55455 USA (e-mail: adel@umn.edu; kleck023@umn.edu; soheil@umn.edu). Corresponding author: S.~Mohajer.}
  }
}

\maketitle
\begingroup\renewcommand\thefootnote{\textsection}
\footnotetext{Equal contribution.}
\endgroup

\begin{abstract}
The information-theoretic secure exact-repair regenerating codes for distributed storage systems (DSSs) with parameters $(n,k=d,d,\ell)$ are studied in this paper. We consider distributed storage systems with $n$ nodes, in which the original data can be recovered from any subset of $k=d$ nodes, and the content of any node can be retrieved from those of any $d$ helper nodes. Moreover, we consider two secrecy constraints, namely, Type-I, where the message remains secure against an eavesdropper with access to the content of any subset of up to $\ell$ nodes, and Type-II, in which the message remains secure against an eavesdropper who can observe the incoming repair data from all possible nodes to a fixed but unknown subset of up to $\ell$ compromised nodes. Two classes of secure determinant codes are proposed for Type-I and Type-II secrecy constraints. Each proposed code can be designed for a range of per-node storage capacity and repair bandwidth for any system parameters. They lead to two achievable secrecy trade-offs, for Type-I and Type-II security. 
\end{abstract}

\begin{IEEEkeywords}
Distributed storage systems, exact-repair regenerating codes, information-theoretic security.
\end{IEEEkeywords}

\section{Introduction}
\IEEEPARstart{W}{ith} the rise in demand and interest for data-driven technology and cloud computing, the size of data and the number of users who wish to access them continue to grow rapidly. This necessitates the need for more efficient as well as secure data storage mechanisms. Recently, the focus of the storage industry has been shifted from central systems to distributed storage systems (DSS). In such systems, data are coded and stored over a set of $n$ storage nodes. These nodes, however, are subject to temporal and permanent failures. Hence, redundancy among the contents of the nodes and node repair mechanisms are essential to retrieve the contents of failed nodes. 

Traditionally, simple coding techniques such as replication-based codes or Reed-Solomon codes have been used to encode data in DSS. While replication-based codes are optimum for node repair, they are very inefficient in terms of storage efficiency. On the other hand, Reed-Solomon codes that require the minimum storage overhead for a given level of reliability impose very heavy network traffic for their repair mechanism. This is due to the fact that they need to download the entire data before the content of a single node (which is a relatively small portion of the data) can be recovered. 

The family of \emph{regenerating codes}, introduced by Dimakis et al.~\cite{dimakis2010network}, strikes a balance between the storage overhead and the cost of node repair (i.e., the communication cost associated with the bandwidth needed to send the repair data) for DSS. In an $(n,k,d)$-DSS with regenerating code parameters $(\alpha,\beta,F)$, the data of size $F$ is encoded into $n$ segments, each is of size $\alpha$ symbols and stored on one storage node. Such systems satisfy two prime properties: (i) data recovery property; and (ii) node repair property. The data recovery property ensures that the original data can be recovered from the content of \emph{any} collection of $k$ nodes. Furthermore, the node repair property guarantees that upon failure of any node, it can be replaced by a new node which, together with the other nodes, maintains the properties of the original system. Such a replacement node can be generated by accessing a collection of $d$ helper nodes and downloading $\beta$ repair symbols from each of them.  Ideally, it is desired to design systems with a small per-node storage $\alpha$ and low per-node repair-bandwidth $\beta$. However, there is a tension between the two parameters that prevents both parameters from being simultaneously minimized. 

There are two types of repair mechanisms: \textit{functional repair} \cite{dimakis2010network}; and \textit{exact repair} \cite{rashmi2011optimal,lin2014unified,ye2017explicit,sasidharan2016explicit,li2017generic,ye2017explicit2,ramkumar2020codes,maturana2020convertible}. In the functional repair, the content of a failed node will be replaced by new content so that the subsequent set of nodes maintains the data recovery property. It is shown in~\cite{dimakis2010network} that there is a fundamental trade-off between $\alpha$ and $\beta$ for the functional repair, which is given~by
\begin{align} 
\label{reg_trade}
    F\leq \sum\nolimits_{i=0}^{k-1}\min\{\alpha,(d-i)\beta\}.
\end{align}
This equation describes a piecewise linear curve in the $\alpha$-$\beta$ plane for a given $F$ and a tuple of system parameters $(n,k,d)$. The extreme points, i.e., the minimum achievable values of $\alpha$ and $\beta$ on the trade-off curve, are called the minimum storage regeneration (MSR) and the minimum bandwidth regeneration (MBR) points, respectively. For functional repair regenerating codes, the upper bound in~\eqref{reg_trade} is achievable, and hence, the resulting storage-bandwidth trade-off is optimum \cite{dimakis2010network}.

On the other hand, in the exact repair, the content of a failed node should be exactly retrieved in the repair process. Exact repair codes are favored in practice because the file recovery process and the meta-data in the system do not change over time. Due to the more stringent constraints, the achievable $(\alpha,\beta)$-region of exact-repair regenerating codes is potentially smaller than that of the functional-repair codes. In contrast to the functional repair, characterization of the optimal storage-bandwidth trade-off for exact-repair regenerating codes remains open for general system parameters.

One of the promising families of exact-repair regenerating codes is the \emph{determinant code}, which is initially proposed for an~$(n,k=d,d)$-DSS in~\cite{elyasi2016determinant, elyasi2019determinant}. A determinant code at \emph{mode} $m\in [d]$ a code with parameters $\alpha=\binom{d}{m}$ and $\beta=\binom{d-1}{m-1}$. It is shown in~\cite{elyasi2016determinant, elyasi2019determinant} that a determinant code at mode $m$ is capable of storing up to $F=m\binom{d+1}{m+1}$ symbols. We refer to Section~\ref{sec:det} for a brief overview of determinant codes. The constraint of $k=d$ is later relaxed in~\cite{elyasi2020cascade}, where the family of \emph{cascade codes} for any tuple~$(n,k,d)$ is introduced. There are~$d$ different determinant codes for a DSS with parameters~$(n,k=d,d)$, each with an individual~$(\alpha,\beta)$ pair. This, together with memory-sharing techniques, leads to a piecewise linear achievable trade-off curve with $d$ corner points. This trade-off includes the optimal points which are only known for specific system parameters, and matches the best known outer bound for linear exact-repair regenerating codes~\cite{elyasi2015linear, prakash2015storage, duursma2015shortened}. 

In many applications, the data stored in a DSS are sensitive and need to be protected against unauthorized or malicious users who wish to access (passive adversary model) or modify (active adversary model) the data. This motivates the idea of information-theoretic secure regenerating codes that guarantee no information leakage about the secure data to an eavesdropper with limited access to the system. Two types of eavesdroppers attacks are studied in the literature \cite{shah2011information}, namely (i) Type-I eavesdropper, who has access to the \emph{contents} of a fixed but unknown set of up to $\ell$ nodes; and (ii) Type-II eavesdropper, who has access to the \emph{incoming repair data} to a fixed but unknown set of up to $\ell$ nodes. We refer to such a system as an~$(n,k,d,\ell)$ system of either Type-I or Type-II. It is worth noting that, due to the node repair mechanism of the system, a Type-II eavesdropper can reconstruct the content of the compromised nodes, and hence is stronger than a Type-I eavesdropper with the same parameter~$\ell$.  The goal of designing secure regenerating codes is to construct codes that ensure the security of the stored data against eavesdroppers, in addition to the data recovery and node repair mechanisms. The performance metric is the \emph{secrecy capacity}, that is, the size of the data that can be securely stored in a code with given parameters $(\alpha, \beta)$. Equivalently, for a given file size, we are interested in characterizing all pairs of $(\alpha,\beta)$, for which a regenerating code with the desired security constraint exists. While it is desired to simultaneously minimize both $\alpha$ and $\beta$, there is a trade-off between the two parameters, and (for an optimum code) one can be decreased only at the cost of increasing the other. Thus, we seek the optimum trade-off between $\alpha$ and $\beta$ for which a certain secrecy capacity can be attained. 

\subsection{Related Works}
Upper bounds on the secrecy capacity for Type-I and Type-II eavesdroppers are presented in~\cite{pawar2011securing}. It is shown that for an $(n,k,d,\ell)$-DSS, the size of secure data in the presence of Type-I or Type-II eavesdroppers, denoted by $\FsI$ and $\FsII$, respectively, must satisfy
\begin{align} 
    \FsII \leq \FsI  \leq \sum\nolimits_{i=\ell}^{k-1}\min\{\alpha,(d-i)\beta\}.
    \label{sec_trade}
\end{align}
The MSR and MBR points on this trade-off have been studied in~\cite{shah2011information}, where information-theoretic security is guaranteed for the MBR point for all feasible $(n,k,d, \ell)$ systems for both Type-I and Type-II security.  Moreover, asymptotically optimal schemes are introduced for the MSR point for feasible $(n,k,d\geq 2k-2, \ell)$ systems under the Type-I security constraint~\cite{shah2011information}. Tandon et al.~\cite{tandon2014new} developed a new upper bound on the secure storage capacity of an $(n, k, d, \ell=1)$-DSS with Type-II eavesdroppers, outperforming the bounds of \cite{pawar2011securing}. Rawat et al.~\cite{rawat2013optimal} proposed tighter bounds on the secrecy capacity for MSR codes, and provided secure coding schemes that achieve their bounds for both Type-I and Type-II eavesdroppers with at most $\ell=2$ compromised nodes, under the assumption that the Type-II attacked nodes are among the systemic nodes.   Goparaju~et~al.~\cite{goparaju2013data} improved these bounds and proved that, under the class of \emph{linear} regenerating codes, the construction proposed in~\cite{rawat2013optimal} provides an optimum MSR code for any number of compromised nodes. However, for the codes in~\cite{goparaju2013data}, the repair process is only guaranteed for the systematic nodes, and hence the set of Type-II compromised nodes is limited to subsets of the systematic ones.

Tandon et al. \cite{tandon2016toward} characterized the secure trade-off region of an $(n, k, d)$-DSS for $n \leq 4$ and $\ell < k$ in the presence of Type-I and Type-II adversaries. Moreover, those results are extended for an $(n, k=n\hspace{-1pt}-\hspace{-1pt}1, d=n\hspace{-1pt}-\hspace{-1pt}1, \ell=n\hspace{-1pt}-\hspace{-1pt}2)$ DSS. The first trade-off curve with \emph{multiple} corner points was precisely characterized in \cite{shao2017tradeoff} for the a~$(7,6,6)$ system, which is secured against $\ell=1$ Type-II eavesdroppers. The Type-II secrecy capacity for some range of specific system parameters is characterized in \cite{ye2017rate}. It is shown in~\cite{ye2019secure} that the trade-off reduces to a single point when the system parameters are $(n=d+1,k=d,d,\ell)$, and $\ell \geq \Bigl\lceil{\frac{d-1}{4}}\Bigr\rceil$.

Recently, Kruglik~\cite{kruglik2020secure} investigated the security issues of MBR array codes under a special type of eavesdroppers that can attack all storage nodes in a distributed storage system, but only access a small number of symbols stored in each node. The author proposed an explicit construction of MBR array codes that is secure against such an eavesdropper. An upper bound on the secure storage capacity of such systems is established and shown to be tight by providing a coding scheme that achieves the bound. Gulcu~\cite{gulcu2020secure} considered the problem of repairing a node in a secure DSS, which is developed based on the Reed-Solomon codes. The author proposed a secure node repair algorithm that performs near-optimal in terms of the bandwidth under a low-rate Reed-Solomon code. Due to the increasing storage requirement for blockchains, coding theoretic techniques have been proposed to alleviate the storage cost and the bootstrap cost that would help more miners enter the market. Gadiraju~et~al.~\cite{gadiraju2020secure} proposed a sharding protocol that is based on exact repair secure regenerating codes. It is shown that the proposed protocol is storage and bandwidth efficient for a single node failure.  Moreover, an equivalence between the process of bootstrapping a node and repairing a failed node is established to demonstrate that the bootstrap cost is low as compared to uncoded sharding. Liang~et~al.~\cite{liang2020secure} designed a secure data storage system and a recovery scheme for blockchain-based industrial networks. The proposed regeneration code exhibits simple coding characteristics and excellent capability of local repair. Furthermore, experiments show that the proposed scheme reduces the repair overhead of local code in data storage nodes and enhances the data integrity in the blockchain.

\subsection{Main Contributions}
In this paper, we generalize $(n,k=d,d)$ determinant codes \cite{elyasi2019determinant}, which is a class of (non-secure) optimum exact-repair regenerating codes, to achieve information-theoretic security in the presence of Type-I or Type-II eavesdroppers. We summarize the main results of this paper as follows:
\begin{itemize}[leftmargin=*]
    \item We provide explicit code constructions with a fairly small field size for Type-I and Type-II secure determinant codes for an $(n,k=d,d)$-DSS and for any number of compromised nodes $1\leq \ell \leq k$. We characterize the achievable trade-offs, that consist of $k=d$ corner points for  Type-I security, and at most $k=d$ corner points under Type-II security.
    \item We characterize the number of linearly independent variables observed by both Type-I and Type-II eavesdroppers, to determine the number of random keys required to guarantee security.
    \item We prove that the proposed code constructions satisfy three properties: (i) data recovery property, (ii) node repair property; and (ii) Type-I or Type-II security constraints.
    \item We prove the optimality of the proposed constructions among all determinant-based codes. More precisely, we show that the proposed Type-I and Type-II secure determinant codes store the maximum secure file size $\Fs$ that can be securely stored in a determinant code.
\end{itemize}
A summary of the main results of this paper has been presented in \cite{kleckler2019secure} for Type-I secure determinant codes, and \cite{kleckler2020secure} for Type-II secure determinant codes. This paper presents complete proofs of all results, introduces new results about the optimality of the code construction of Type-II secure determinant codes, and provides numerous illustrative examples to compare between Type-I and Type-II secure code constructions, compared to the non-secure version of determinant codes.

\subsection{Notation} 
For integers $a$ and $b$ we use $[a:b]$ to denote the set of integers $\{a,a+1,\ldots, b\}$, and $[b] = [1 : b]$. Note that $[a:b]$ is an empty set if $a>b$. For integers $0\leq a \leq b$, we define $\binom{b}{a}=\frac{b!}{a! (b-a)!}$. Furthermore, we define $\binom{b}{a}=0$, if $a>b$ or $a<0$. We use lowercase letters (e.g., $n$) to refer to (real and finite field) numbers, and random variables and random vectors are indicated by capital letters (e.g. $S$). Calligraphic letters (e.g., $\cI$) denote sets, and $|\cI|$ denotes the cardinality of~$\cI$. Boldface capital letters (e.g. $\bA$) denote matrices, and~$\bA^{-1}$ denotes the inverse of matrix $\bA$. For a matrix $\bA$, $\bA(i,j)$ denotes the matrix entry at row $i$ and column $j$. Moreover, $\bA(i,:)$ and $\bA(:,j)$ denote the $i$th row and $j$th column of matrix $\bA$, respectively. Furthermore, for sets $\cI$ and $\cJ$, the submatrix of $\bA$ obtained by  rows $i\in\cI$ and columns $j\in\cJ$ is denoted by $\bA(\cI,\cJ)$. For a set $\cI$ and a set member $x \in \cI$, we define $\text{ind}_\cI(x) \coloneqq |\{y\in\cI : y \leq x \}|$. Moreover, the largest and smallest entries of the set $\cI$ are denoted by $\max \cI$ and $\min \cI$, respectively. For sets $\cI$ and $\cJ$, $\cJ\subseteq\cI$ means  that $\cJ$ is a subset of $\cI$. Also, we define the set difference as $\cI\setminus\cJ\coloneqq\{x\in \cI: x\notin \cJ\}$. Furthermore, when $|\cI|=|\cJ|$, we say $\cI$ is lexicographically smaller than $\cJ$ and denote it by $\cI \prec \cJ$, if $\min \cI\setminus \cJ < \min \cJ\setminus \cI$. For example, $\{1,2,5\}\prec\{1,3,4\}$. All symbols in the code construction are assumed to be elements of a Galois field $\mathbb{F}_q$ for some prime power  $q$, and the entropy function $H(\cdot)$ is computed in base~$q$.

\subsection{Paper Outline}
The remainder of the paper is organized as follows. We first present the problem formulation and the main results of this work in Section~\ref{sec:probForm_mainRes}. The code construction of the non-secure determinant codes is reviewed in Section~\ref{sec:det}. In Section~\ref{sec:typeI}, we discuss Type-I secure determinant codes. More specifically, the code construction is proposed in Section~\ref{sec:codeConstruct_I}, an illustrative example is provided in Section~\ref{sec:codeEx_I}, the proposed construction of determinant codes is proved to satisfy Type-I security constraint in Section~\ref{sec:achvProof_I}, and finally, the optimality of the proposed Type-I secure code construction for determinant codes is shown in Section~\ref{sec:converseProof_I}. Section~\ref{sec:typeII} is dedicated to Type-II secure determinant codes, where the code construction is presented in Section~\ref{sec:codeConstruct_II} followed by an illustrative example in Section~\ref{sec:codeEx_II}, the security property of the proposed construction is proved in Section~\ref{sec:achvProof_II}, and the optimality of the proposed code construction for determinant codes is established in Section~\ref{sec:converseProof_II}. Finally, the paper is concluded in Section~\ref{sec:conclusion}. The paper has six appendices, where the proofs of some of the technical claims are presented. 

\section{Problem Formulation and Main Results}
\label{sec:probForm_mainRes}
\subsection{Problem Formulation}
We study the fundamental trade-off between the per-node storage and repair bandwidth for secure distributed storage systems under both Type-I and Type-II eavesdroppers. 

An exact-repair regenerating code with system parameters $(n,k,d )$ and code parameters $(\Fs, \alpha, \beta)$ maps a secure message $\cS$ of size $\Fs$ symbols (i.e., $H(S)=\Fs$) to $n$ codewords, namely $\cN_1, \cN_2, \dots, \cN_n$, each of size $H(\cN_i)\leq \alpha$ symbols for $i\in [n]$. The codewords should satisfy the following properties: 
\begin{enumerate}
    \item \textbf{Data Recovery}: The original file can be reconstructed from the content of any set of $k$ nodes, that is, 
    \begin{align}
    H (\cS | \{\cN_i: i\in \cK\}) = 0, \:\text{ for any } 
    \cK\subseteq [n] \text{ and } |\cK|=k.
    \label{eq:recpvery}
    \end{align}
    \item \textbf{Exact Node Repair}: 
    Whenever a node $f \in [n]$ fails and becomes inaccessible, it can be repaired and its content $\cN_f$ can be \emph{exactly} reconstructed from the repair data of size at most $\beta$ symbols received from any collection of $d$ helper nodes. More precisely, for every failed node $f$,  every set of helper nodes $\cH\subseteq [n]\setminus\{f\}$ with $|\cH|=d$, and every helper node $h\in \cH$,  there exists repair data encoders that generate\footnote{When helper nodes in $\cH$ contribute to repair a failed node $f$, the repair data sent from a helper node $h\in \cH$ to $f$ may depend on the identity of other contributing helper nodes, and it is more appropriate to be denoted by $\R{h}{f}^\cH$. However, in this paper, we are using the helper independent construction for determinant codes~\cite{elyasi2019determinant}, in which $\R{h}{f}^\cH$ does not depend on $\cH$. Hence, for ease of notation, we use $\R{h}{f}$ to refer to the repair data sent from $h$ to $f$.} $\R{h}{f}$, the  repair data\footnote{We define $\R{h}{h}=\varnothing$ to be a dummy variable with zero entropy for convenience.} that are sent from the helper node $h$ to the failed node $f$, that satisfy
    \begin{align}
        &H(\R{h}{f}| \cN_h) = 0, \:\text{ for } h\in [n]\setminus\{f\},\nonumber
        \\
        &H(\R{h}{f}) \leq \beta, \:\text{ for } h\in [n]\setminus\{f\},\label{eq:repair}\\
        &H(\cN_f | \{\R{h}{f}\!: h\!\in\! \cH\}) \!=\! 0, \:\text{ for } \cH \!\subseteq\![n]\!\setminus\! \{f\},\: |\cH|\!=\!d.\nonumber
    \end{align}
\end{enumerate}
Next, we explain the information-theoretic secrecy constraints for Type-I and Type-II security.
\begin{itemize}
    \item \textbf{Type-I Security}:
    An $(n,k,d)$ exact regenerating code is called an $(n,k,d,\ell)$ \emph{Type-I secure code} if an eavesdropper with access to the \emph{content} of an arbitrary subset of at most $\ell$ nodes (with $\ell<k$) cannot learn \textit{anything} about the secure message $\cS$. That is, for any set of compromised nodes $\cL\subseteq [n]$ accessed by the eavesdropper with $|\cL|\leq\ell$, we define the set  eavesdropper's observed variables by $\cEI(\cL) = \{\cN_i: i\in \cL\}$, and have 
    \begin{align} 
        I(\cS;\cEI(\cL))=0, \qquad 
        \forall \cL \subseteq [n] \text{ and } |\cL|\leq\ell.
        \label{req_mut_I}
    \end{align}
    \item \textbf{Type-II Security}:
    An $(n,k,d)$ exact regenerating code is called an $(n,k,d,\ell)$ \emph{Type-II secure code} if an eavesdropper with access to all the \emph{incoming repair data from all possible helpers to a fixed but unknown subset of at most $\ell$ compromised nodes} (with $\ell \leq k$) can not learn \textit{any} information about the secure message $\cS$. That is, for any set of compromised nodes  ${\cL\subseteq [n]}$ accessed by the eavesdropper with $|\cL|\leq\ell$, and ${\cEII(\cL) = \{\R{h}{i}: i\in \cL, h\in [n]\setminus\{i\}\}}$, we have 
    \begin{align} 
        I(\cS;\cEII(\cL))=0, \qquad 
        \forall \cL \subseteq [n] \text{ and } |\cL|\leq\ell.
        \label{req_mut_II}
    \end{align}
\end{itemize}
We denote the maximum size of a secure message that can be stored in a Type-I secure DSS by $\FsI$. Similarly, the maximum size of a secure message that can be stored in a DSS with Type-II secrecy constraint is denoted by $\FsII$. 

\begin{remark}
    The secrecy constraints in \eqref{req_mut_I} and \eqref{req_mut_II} imply security in an information-theoretic sense: the eavesdropper with unbounded computational power, unlimited amount of time, and full knowledge of the underlying code construction would not be able to learn anything about the secure message from the observation.
\end{remark}

\begin{remark}
    The Type-II secrecy constraint in \eqref{req_mut_II} is stronger than the Type-I secrecy constraint in \eqref{req_mut_I}. This is due to the fact that the content of a node can be exactly retrieved from the repair data coming from any $d\leq n-1$  nodes. More precisely, we have $H(\cEI(\cL)|\cEII(\cL), \cS) = H(\cEI(\cL)|\cEII(\cL))=0$ for every $\cL\subseteq [n]$. Therefore, 
    \begin{align}
        I(\cS;\cEI(\cL)) 
        &= H(\cS) - H(\cS|\cEI(\cL))
        \nonumber\\
        &\leq H(\cS) - H(\cS|\cEI(\cL),\cEII(\cL))
        \nonumber\\
        &= H(\cS) - H(\cS|\cEII(\cL))\nonumber\\
        &\qquad - H(\cEI(\cL)|\cEII(\cL), \cS) + H(\cEI(\cL)|\cEII(\cL))
        \nonumber\\
        &= H(\cS) - H(\cS|\cEII(\cL))
        \nonumber\\
        &= I(\cS;\cEII(\cL)). 
        \nonumber
    \end{align} 
    This implies that Type-II security is more stringent compared to Type-I security, and hence $\FsII \leq \FsI$ for any pair of $(\alpha, \beta)$ and any set of system parameters $(n,k,d)$. 
\end{remark}   

\subsection{Main Results}
For an $(n,k,d,\ell)$ distributed storage system with code parameters $(\alpha,\beta)$, the main goal is to characterize the maximum $\Fs$ for which there exists a secure $(\Fs, \alpha, \beta)$ exact-repair regenerating code against a particular type of eavesdroppers. Since parameters $\alpha$, $\beta$, and $\Fs$ scale linearly together, this is equivalent to characterizing the trade-off between the normalized parameters $\alpha/\Fs$ and $\beta/\Fs$. Characterization of this trade-off is an open problem for general $(n,k,d,\ell)$ systems. In this paper, we focus on the systems with $k=d$, and introduce a family of secure codes based on determinant codes operating at different trade-off points. This leads to an achievable trade-off, which establishes an upper bound on the optimum trade-off. We also prove that this bound is tight for determinant codes, that is, the proposed code constructions offer the maximum secure capacity within the class of determinant codes. In what follows, we present the main results of this paper on developing secure determinant codes for Type-I and Type-II security.

\subsubsection*{\textbf{Type-I Security}}
Let $\FsI$ be the size of the Type-I secure message. The following theorem characterizes the set of achievable tuples for a family of Type-I secure determinant codes. 
\begin{thm}
    For an $(n,k=d,d,\ell)$ distributed storage system with Type-I security constraint, the tuples in the convex hull of $\left\{\left(\alpha^{(m)}, \beta^{(m)}, \FsI^{(m)} \right): m \in [d]\right\}$ with 
    \begin{align}
        \left\{
        \begin{array}{l}
            \alpha^{(m)} = \binom{d}{m},\\
            \beta^{(m)} = \binom{d-1}{m-1},\\
            \FsI^{(m)} = (d-\ell)\binom{d}{m}-\binom{d}{m+1} + \binom{\ell}{m+1},
        \end{array}
        \right.
       \label{eq:thm_achv_I}
    \end{align}
    for $m\in [d]$ are achievable through an explicit and efficient code construction with a fairly small field size. The symbols in the code construction are elements of a Galois field $\mathbb{F}_q$, where $q$ can be any prime power satisfying $q > n$.
\label{thm:achv_I}
\end{thm}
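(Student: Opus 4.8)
The plan is to establish achievability mode by mode: for each $m \in [d]$ I would construct an explicit Type-I secure determinant code attaining the corner point $(\alpha^{(m)}, \beta^{(m)}, \FsI^{(m)})$, and then obtain every tuple in the convex hull by memory-sharing across the $d$ modes. The starting point is the non-secure determinant code at mode $m$ reviewed in Section~\ref{sec:det}, which stores $F^{(m)} = m\binom{d+1}{m+1}$ data symbols in $n$ nodes of size $\alpha^{(m)} = \binom{d}{m}$ each, repairs any failed node from any $d$ helpers with $\beta^{(m)} = \binom{d-1}{m-1}$ symbols apiece, and recovers all data from any $d$ nodes. The idea is to reserve a block of these data symbols for uniform i.i.d. random keys $K$ and load the remaining entries with the secure message $\cS$. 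A direct computation with Pascal's identity (using $(m+1)\binom{d}{m+1} = (d-m)\binom{d}{m}$) shows that $F^{(m)} - \FsI^{(m)} = \ell\binom{d}{m} - \binom{\ell}{m+1}$, a quantity I will call $R^{(m)}$; thus the construction fills the full capacity of the code with $\FsI^{(m)}$ message symbols and $R^{(m)}$ keys. The number $R^{(m)}$ is exactly the count of linearly independent symbols a Type-I eavesdropper can observe over $\ell$ nodes: the naive total $\ell\binom{d}{m}$ is reduced by $\binom{\ell}{m+1}$ linear dependencies inherent to the determinant structure.

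Data recovery and exact repair would then come essentially for free. Because the message symbols and the keys enter the encoder identically as "data symbols," the secure code is the non-secure determinant code with a particular (adversary-unknown) input, so the properties in \eqref{eq:recpvery}--\eqref{eq:repair} are inherited verbatim from Section~\ref{sec:det}: any $d$ nodes linearly recover all $F^{(m)}$ data symbols, hence $\cS$, and any $d$ helpers exactly repair a failed node sending $\beta^{(m)}$ symbols each. The field-size hypothesis $q > n$ is precisely what guarantees the any-$d$-rows-invertibility of the Vandermonde-type encoder underlying both properties.

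The substance of the proof is the Type-I secrecy requirement \eqref{req_mut_I}. The reduction I would use is standard: it suffices to show $H(K \mid \cEI(\cL), \cS) = 0$ for every $\cL$ with $|\cL| \le \ell$, i.e., that the eavesdropper's observations together with the message determine all keys. Combining this with $H(K) = R^{(m)}$ (keys uniform and independent of $\cS$) and $H(\cEI(\cL)) \le R^{(m)}$ (the independent-observation count) through the chain rule gives $H(\cEI(\cL)\mid \cS) = H(\cEI(\cL))$, whence $I(\cS; \cEI(\cL)) = 0$. Establishing $H(K \mid \cEI(\cL), \cS) = 0$ is a purely linear-algebraic statement: after subtracting off the known message contribution, the map from the $R^{(m)}$ keys to the symbols the eavesdropper sees on $\cL$ must be invertible, so the relevant $R^{(m)} \times R^{(m)}$ coefficient matrix must be certified non-singular over $\mathbb{F}_q$.

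The hard part will be this invertibility argument, which has two coupled components. First, I would have to prove that the number of linearly independent symbols on any $\ell$ nodes is exactly $R^{(m)}$, i.e., pin down the $\binom{\ell}{m+1}$ dependencies, a combinatorial rank computation over the subset-indexed columns of the determinant code. Second, I would have to exhibit a deliberate placement of the keys among the data symbols for which the induced key-to-observation system is solvable, reducing it via the determinant/cofactor identities that give these codes their name to the non-vanishing of certain minors of the encoder, guaranteed by $q > n$. Once each corner point $(\alpha^{(m)}, \beta^{(m)}, \FsI^{(m)})$ is achieved, memory-sharing across the $d$ modes, legitimate since $\alpha$, $\beta$, and $\FsI$ all add under concatenation, yields the entire convex hull and completes the proof.
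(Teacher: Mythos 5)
Your overall architecture coincides with the paper's: take the non-secure determinant code at mode $m$, fill it with $\FsI^{(m)}$ message symbols plus $R^{(m)} = \ell\binom{d}{m} - \binom{\ell}{m+1}$ uniform independent keys, inherit data recovery and exact repair verbatim from the non-secure construction, reduce secrecy to the two facts $H(\cEI(\cL)) \leq R^{(m)}$ and $H(\key \mid \cEI(\cL), \cS) = 0$ (exactly the paper's Lemma~\ref{lm:secProof_I_lm2} and Lemma~\ref{lm:secProof_I_lm1}), and obtain the convex hull by memory-sharing. The counting identity $F^{(m)} - \FsI^{(m)} = R^{(m)}$ and the mutual-information chain are also as in the paper.

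The genuine gap is that the two items you flag as ``the hard part'' \emph{are} the proof, and your proposal leaves both open --- most critically, it never commits to the one design decision on which everything hinges: \emph{where} the keys go. The paper places every key in the top $\ell$ rows of the message matrix $\MsI$ and every message symbol in the bottom $d-\ell$ rows, and requires the encoder $\mathbf{\Psi}$ to satisfy, in addition to the usual any-$d$-rows invertibility, that every $\ell \times \ell$ submatrix of its first $\ell$ columns is invertible (automatic for Vandermonde, which is why $q>n$ suffices). With this placement, key recovery is not a monolithic ``certify an $R^{(m)} \times R^{(m)}$ matrix is non-singular'' problem as you frame it: it is a structured recursion, decoding the columns of $\MsI$ in reverse lexicographic order, where in each column the bottom $d-\ell$ entries are either message symbols (known from $\cS$) or parity symbols depending only on lexicographically larger, already-decoded columns, after which the top $\ell$ entries follow by inverting the $\ell\times\ell$ Vandermonde block $\mathbf{\Psi}(\cL,[1:\ell])$. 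Without naming the placement, the existence claim is not self-evident --- an arbitrary placement fails (e.g., interchanging the roles of the top and bottom rows leaks message symbols directly to the eavesdropper) --- so ``there exists a placement making the system solvable'' is precisely what must be demonstrated, not deferred. Two smaller imprecisions: the entropy bound on $\cEI(\cL)$ needs only an upper bound $H(\cEI(\cL)) \leq R^{(m)}$ (proved in the paper via the repair structure of determinant codes), not the exact rank statement you propose; and ``non-vanishing of certain minors of the encoder, guaranteed by $q>n$'' is too strong, since over $\mathbb{F}_q$ not every minor of a Vandermonde matrix is non-zero --- only the specific submatrices listed above, which are themselves Vandermonde, are guaranteed invertible.
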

We present a code construction with the parameters of  Theorem~\ref{thm:achv_I} in Section~\ref{sec:codeConstruct_I}. An illustrative example is then given in Section~\ref{sec:codeEx_I}. Finally, the achievability proof is provided in Section~\ref{sec:achvProof_I}.

\subsubsection*{\textbf{Type-II Security}}
Let $\FsII$ be the secrecy capacity of the Type-II  regenerating code. The following theorem characterizes the set of achievable tuples for a family of Type-II secure determinant codes.  
\begin{thm}
    For an $(n,k=d,d,\ell)$ distributed storage system with a Type-II security constraint, all tuples in the convex hull of $\left\{\left(\alpha^{(m)}, \beta^{(m)}, \FsII^{(m)} \right): m \in [d-\ell]\right\}$ with
    \begin{align}
        \left\{
        \begin{array}{l}
            \alpha^{(m)} = \binom{d}{m},\\
            \beta^{(m)} = \binom{d-1}{m-1},\\
            \FsII^{(m)} = m\binom{d-\ell+1}{m+1},
        \end{array}
        \right.
      \label{eq:thm_achv_II}
    \end{align}
    for $m\in[d]$ are achievable through an explicit and efficient code construction with a fairly small field size. The symbols in the code construction are elements of a Galois field $\mathbb{F}_q$, where $q$ can be any prime power satisfying $q > n$.
\label{thm:achv_II}
\end{thm}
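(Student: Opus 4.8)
The plan is to obtain the Type-II secure code at each mode $m$ by injecting random keys into the non-secure determinant code of Section~\ref{sec:det}, whose data matrix $\bM$ carries $m\binom{d+1}{m+1}$ free symbols while attaining $\alpha^{(m)}=\binom{d}{m}$ and $\beta^{(m)}=\binom{d-1}{m-1}$. Since the target $\FsII^{(m)}=m\binom{d-\ell+1}{m+1}$ is exactly the non-secure capacity with $d+1$ replaced by $d-\ell+1$, the gap
\begin{align}
\mu^{(m)} \coloneqq m\left[\binom{d+1}{m+1}-\binom{d-\ell+1}{m+1}\right] = m\sum\nolimits_{j=d-\ell+1}^{d}\binom{j}{m}
\nonumber
\end{align}
is precisely the number of i.i.d.\ uniform key symbols $\cQ$ I would inject (the last equality is a Pascal-type telescoping). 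Concretely, I would fill $\FsII^{(m)}$ of the free inputs of $\bM$ with the message $\cS$ and the remaining $\mu^{(m)}$ with $\cQ$, placing the keys in coordinates chosen so that the key-columns of the eavesdropper map span its whole view (made precise below). Because the determinant code is linear and its recovery/repair are algebraic identities independent of the input values, both \eqref{eq:recpvery} and \eqref{eq:repair} are inherited verbatim: any $k=d$ nodes reconstruct the full input $(\cS,\cQ)$ and hence $\cS$, exact helper-independent repair is unaffected by how inputs are labelled, and the field-size requirement $q>n$ carries over from the node-encoding matrix. Thus only the secrecy claim \eqref{req_mut_II} needs a new argument. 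The full convex hull over $m\in[d-\ell]$ then follows by space-sharing independent single-mode codes on disjoint portions of the file, so it suffices to achieve each corner; modes $m>d-\ell$ are excluded because $\binom{d-\ell+1}{m+1}=0$ there, i.e.\ nothing can be stored securely beyond mode $d-\ell$.

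For secrecy, I would first write the eavesdropper's view explicitly. Fix $\cL$ with $|\cL|\le\ell$. By linearity, every repair symbol $\R{h}{i}$ with $i\in\cL$ and $h\in[n]\setminus\{i\}$ is a fixed linear functional of $(\cS,\cQ)$, so the whole view is a fixed linear image $\cEII(\cL)=\bE\,\bx$ of the input vector $\bx=(\cS,\cQ)$; write $\bE_{\cS}$ and $\bE_{\cQ}$ for its message- and key-column blocks. The heart of the proof is the rank evaluation: I expect the Type-II view to have dimension exactly $\rk{\bE}=\mu^{(m)}$, with the worst case attained by every $\ell$-subset $\cL$ by the Vandermonde-type symmetry of the node-encoding matrix. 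This is the main obstacle, and is genuinely harder than the Type-I count: the repair data fed into one node by all $n-1$ potential helpers spans strictly more directions than that node's stored content, so I must aggregate these directions over $\ell$ nodes and pin their number down against the subset-indexed structure of determinant codes. I anticipate this count again telescopes through Pascal's identity to the closed form above, matching $\mu^{(m)}$.

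Granting the rank computation, secrecy follows from a standard masking argument. Since $\cQ$ is independent of $\cS$ and $(\cS,\cQ)$ deterministically fixes $\cEII(\cL)$,
\begin{align}
I(\cS;\cEII(\cL)) &= H(\cEII(\cL)) - H(\cEII(\cL)\mid\cS),\nonumber\\
H(\cEII(\cL)\mid\cS) &= H(\cQ)-H(\cQ\mid\cEII(\cL),\cS).\nonumber
\end{align}
It therefore suffices to establish (i) $H(\cEII(\cL))\le\mu^{(m)}=H(\cQ)$, which is immediate from $\rk{\bE}=\mu^{(m)}$, and (ii) $H(\cQ\mid\cEII(\cL),\cS)=0$, i.e.\ once the message contribution $\bE_{\cS}\cS$ is cancelled the keys are recoverable from the residual $\bE_{\cQ}\cQ$. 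Condition (ii) is exactly where the choice of key coordinates is used: it amounts to $\bE_{\cQ}$ having full column rank $\mu^{(m)}$, and combined with $\rk{\bE}=\mu^{(m)}$ it forces $\mathrm{col}(\bE_{\cS})\subseteq\mathrm{col}(\bE_{\cQ})$, so the entire view is already spanned by the keys. Both conditions thus reduce to the single statement $\rk{\bE_{\cQ}}=\rk{\bE}=\mu^{(m)}$. Combining (i)--(ii) with $I(\cS;\cEII(\cL))\ge0$ yields $I(\cS;\cEII(\cL))=0$ for all admissible $\cL$, as required.
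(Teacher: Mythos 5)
Your overall architecture coincides with the paper's: inject $|\key| = m\binom{d+1}{m+1}-m\binom{d-\ell+1}{m+1}$ uniform keys and $\FsII^{(m)}$ message symbols into the free ($\cV$/$\cW$-type) positions of the determinant-code message matrix, inherit data recovery and node repair from the non-secure code, and then run the two-step masking argument, namely (i) $H(\cEII(\cL))\leq|\key|$ and (ii) $H(\key \mid \cEII(\cL),\cS)=0$, which together force $I(\cS;\cEII(\cL))=0$. This is exactly the chain in Section~\ref{sec:achvProof_II}, where (i) and (ii) are Lemmas~\ref{lm:secProof_II_lm2} and~\ref{lm:secProof_II_lm1}.

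The genuine gap is that you never exhibit the key placement and you explicitly defer the two facts that carry all of the content: you ``expect'' and ``anticipate'' that $\rk{\bE}=\mu^{(m)}$ and that the key block $\bE_{\key}$ has full column rank, but prove neither, for any concrete placement. In the paper, (i) does hold for any placement, but it is not a telescoping exercise you can leave implicit --- it invokes the repair-data rank count of \cite[Theorem~3]{elyasi2019determinant} (among all repair data flowing into $|\cL|$ nodes, only $m\binom{d+1}{m+1}-m\binom{d-|\cL|+1}{m+1}$ symbols are linearly independent). Fact (ii), by contrast, is false for a generic placement and is where the construction is decided: the paper puts the secure symbols only in the block $\bD$ of the decomposition~\eqref{eq:D-decompos} (row indices and column labels inside $[\ell+1:d]$), and its proof of Lemma~\ref{lm:secProof_II_lm1} needs three structural ingredients that have no counterpart in your write-up: the parity symbols inside $\bD$ involve only message symbols (Remark~\ref{rmk:bD}), the top $\binom{d}{m}-\binom{d-\ell}{m}$ rows of the concatenated repair encoder $\bxi^{\cL}$ form a full-rank matrix $\ubxi^{\cL}$ (Lemma~\ref{prop:fullrank_uxi}, whose proof via the block lower-triangular permutation of Proposition~\ref{prop:block-diagonal} occupies Appendix~\ref{app:fullRank}), and Condition~(C2) making $\mathbf{\Psi}(\cL,[1:\ell])$ invertible, so that $\bC$ and then $\bA,\bB$ can be peeled off from the eavesdropper's view given $\cS$. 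Note also that your single rank statement must hold \emph{simultaneously} for every $\ell$-subset $\cL\subseteq[n]$, which is precisely what the Vandermonde choice of $\mathbf{\Psi}$ and the triangular structure of $\bxi^{\cL}$ guarantee. As it stands, your proposal is a correct reduction of the theorem to the statement $\rk{\bE_{\key}}=\rk{\bE}=\mu^{(m)}$, but not a proof of that statement, and proving it is essentially the whole of Appendices~\ref{prf:lm:secProof_II_lm2}, \ref{prf:lm:secProof_II_lm1}, and~\ref{app:fullRank}.
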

The code construction with the parameters of Theorem~\ref{thm:achv_II} is presented in Section~\ref{sec:codeConstruct_I}. An illustrative example is then given in Section~\ref{sec:codeEx_II}. Finally, the achievability proof is provided in Section~\ref{sec:achvProof_II}. 

Figure~\ref{fig:type_I_ex} depicts the secure storage-bandwidth trade-off curves for different numbers of Type-I eavesdroppers, while Figure~\ref{fig:type_II_ex} captures the secure storage-bandwidth trade-off curves. Both figures are for a system with $k=d=15$, and each curve shows the trade-off for one value of $\ell\in\{0,1,2,3\}$. Note that when $\ell=0$, there are no security constraints, and the trade-off curves reduce to that in \cite{elyasi2019determinant}.  

Figures~\ref{fig:type_I_II_ell1}~and~\ref{fig:type_I_II_ell2} compare the secure storage-bandwidth trade-off curves for an $(n,k,d,\ell)=(n,30,30,\ell)$-DSS for Type-I and Type-II secure determinant codes for $\ell=1$ and $\ell=2$, respectively. One interesting observation is that the MBR points (i.e., at $m=1$) for Type-I and Type-II secure determinant codes are identical for any $\ell \leq k$, and the corresponding maximum secure file size is given by
\begin{align*}
	\FsI^{(1)} = \FsII^{(1)} = \frac{1}{2} (d-\ell+1) (d-\ell).
\end{align*}

\begin{figure}
	\centering
	\subfloat[]{
		\includegraphics[width=0.4\textwidth]{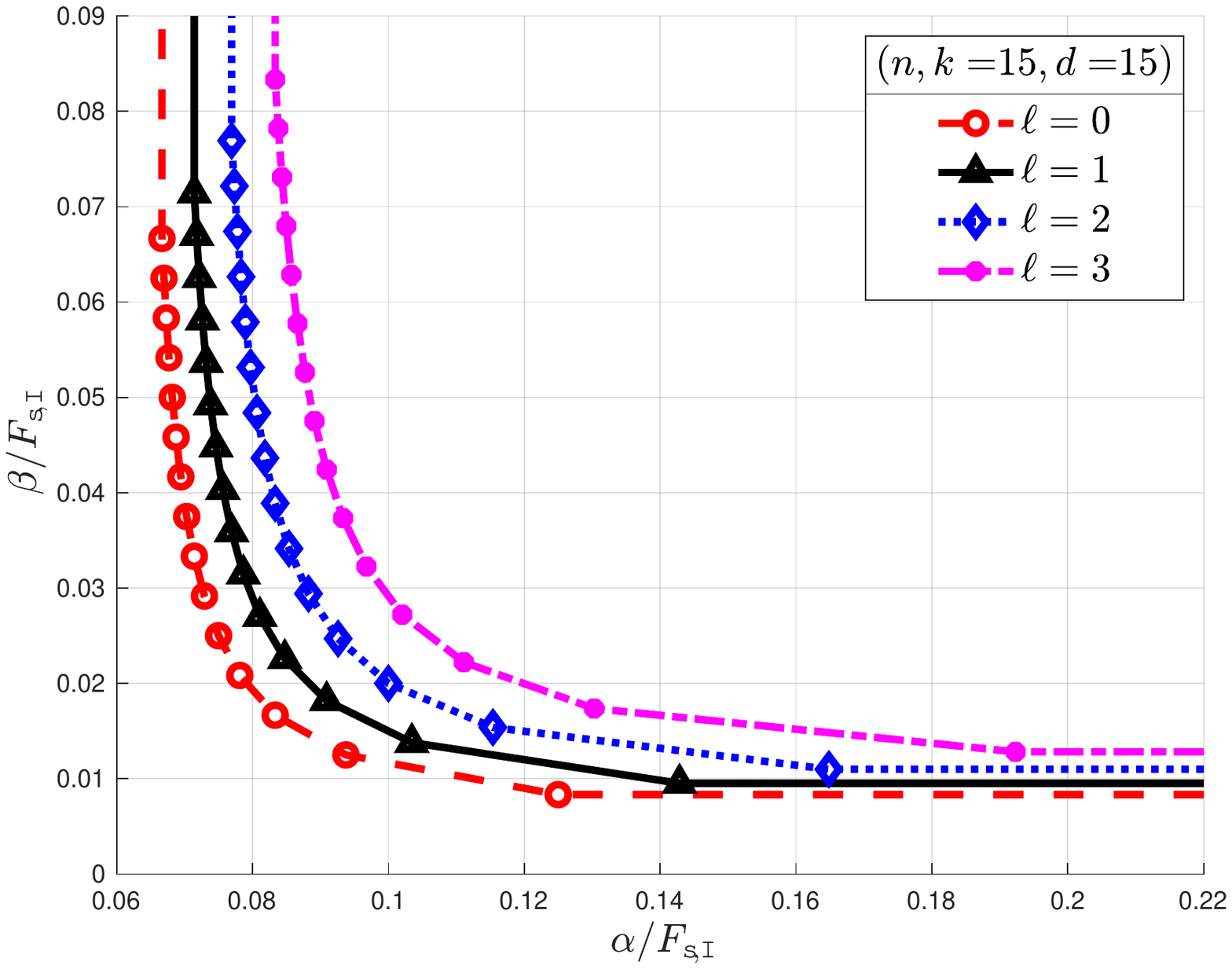}
		\label{fig:type_I_ex}
	}
	\hspace{10mm}
	\subfloat[]{
	\includegraphics[width=0.4\textwidth]{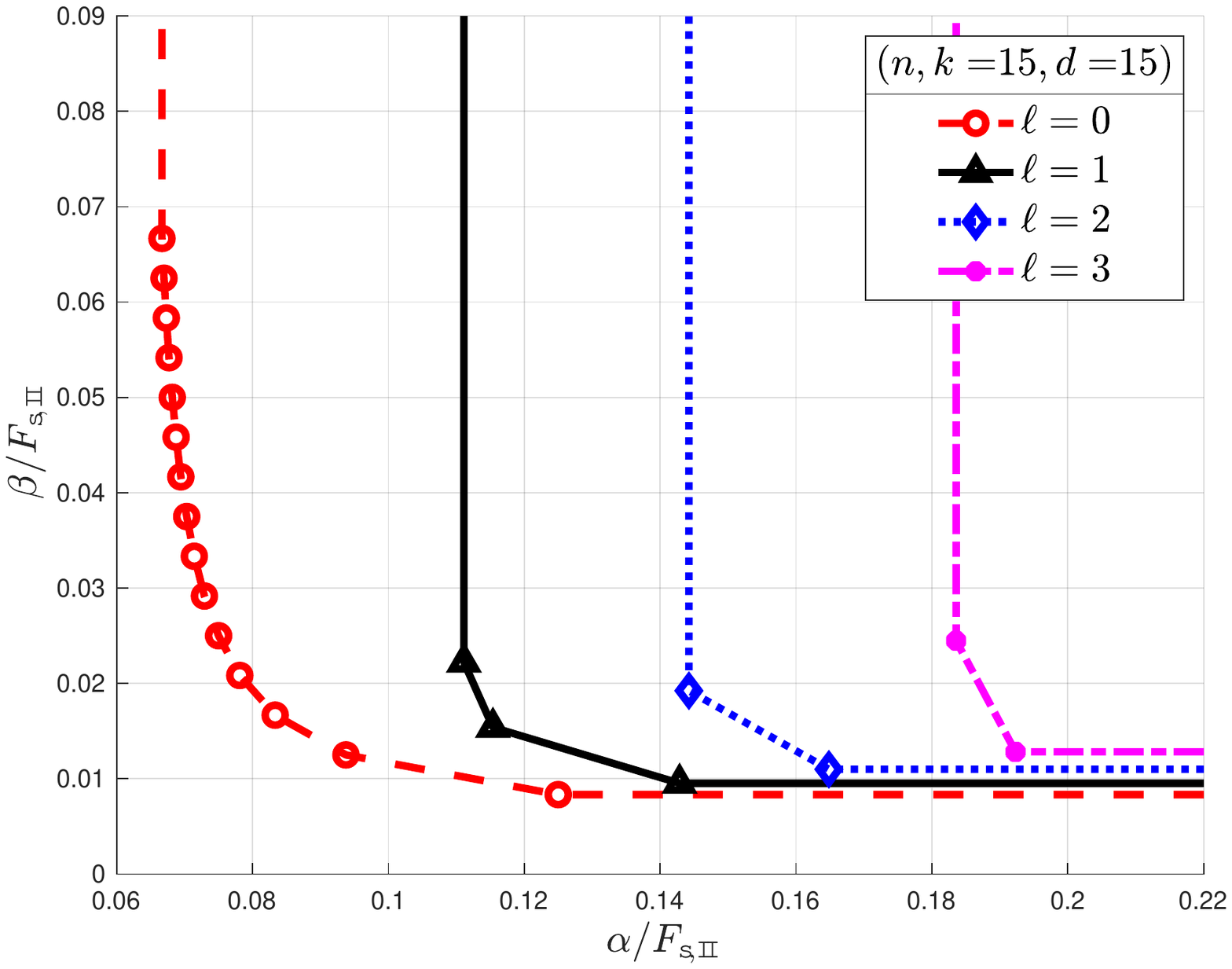}
	\label{fig:type_II_ex}
	}
	\caption{
		(a) The trade-off curve of Type-I secure determinant codes with parameters $(n,k,d,\ell)=(n,15,15,\ell)$ for $\ell \in \{0,1,2,3\}$.
		(b) The trade-off curve of Type-II secure determinant codes with the same system parameters.  
	}
\end{figure}

\begin{figure}
	\centering
	\subfloat[]{
		\includegraphics[width=0.4\textwidth]{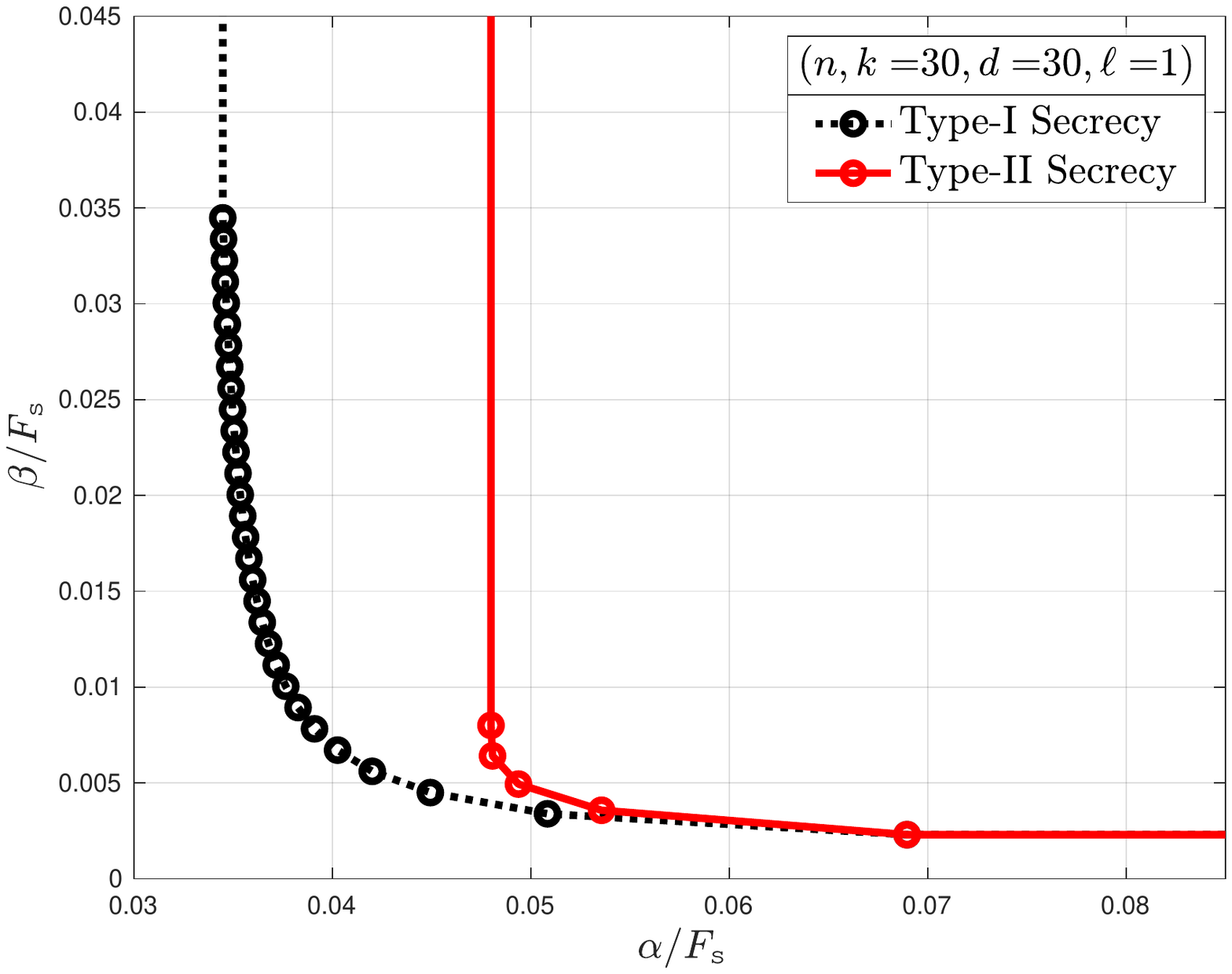}
		\label{fig:type_I_II_ell1}
	}
	\hspace{10mm}
	\subfloat[]{
	\includegraphics[width=0.4\textwidth]{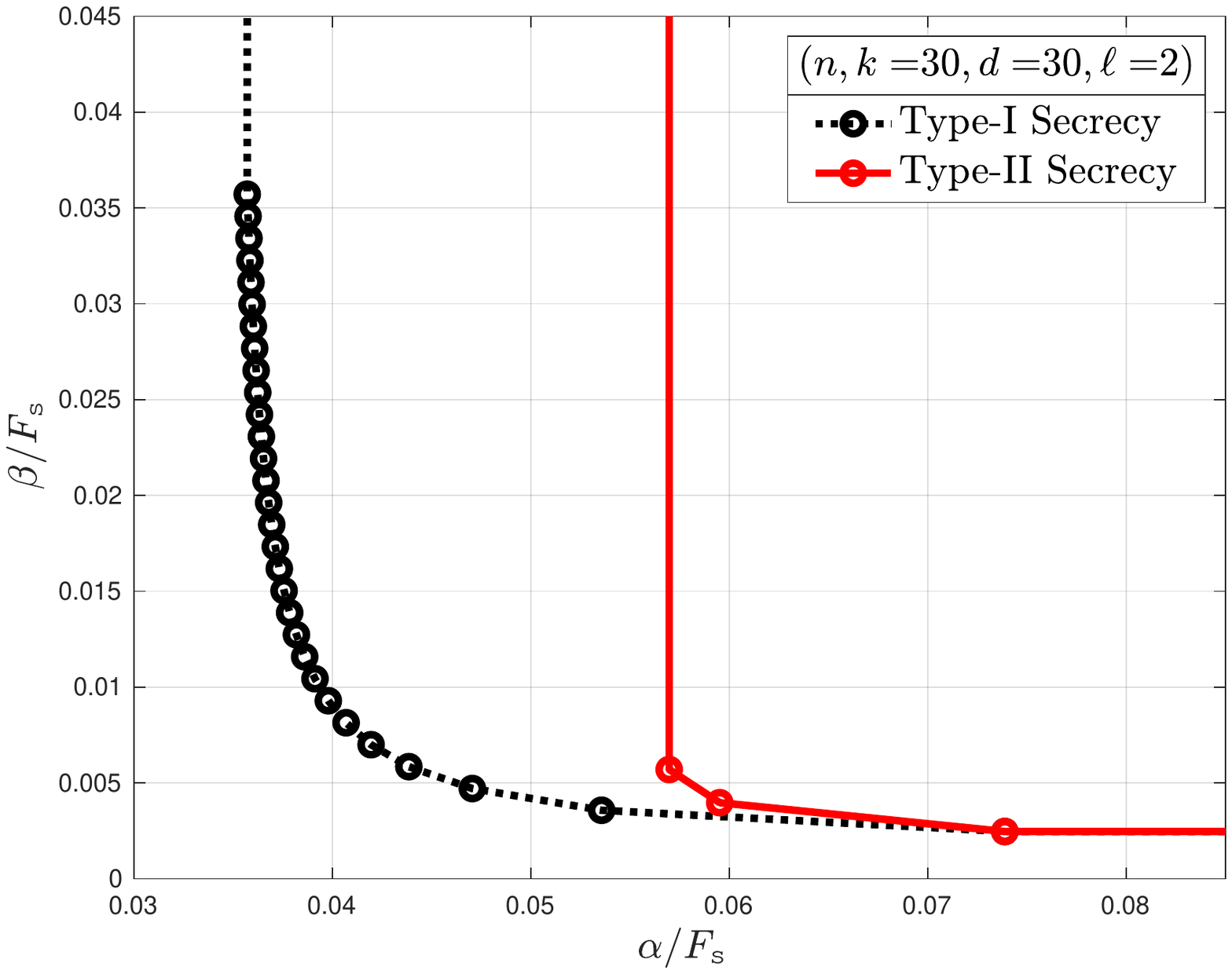}
	\label{fig:type_I_II_ell2}
	}
	\caption{
		(a) The trade-off curve of Type-I and Type-II secure determinant codes with parameters $(n,k,d,\ell)=(n,30,30,1)$.
		(b) The same trade-off curves for determinant codes with parameters $(n,k,d,\ell)=(n,30,30,2)$.
	}
\end{figure}

\subsection{Properties of Secure Determinant Codes}
The next property shows that the file size $\FsI$ in Theorem~\ref{thm:achv_I} is the maximum secure capacity that one can achieve using determinant codes.
\begin{property}
    For an $(n,k=d,d,\ell)$ Type-I secure determinant code operating at mode $m\in [d]$ with parameters $\alpha=\binom{d}{m}$ and $\beta=\binom{d-1}{m-1}$, the maximum secure file size is upper bounded by 
    \begin{align*}
      H(\cS) \leq \FsI^{(m)} = (d-\ell)\binom{d}{m}-\binom{d}{m+1} + \binom{\ell}{m+1}.
    \end{align*}
    \label{thm:converse_I}
\end{property}
The proof of Property~\ref{thm:converse_I} is provided in Section~\ref{sec:converseProof_I}.

\begin{remark}
    Theorem \ref{thm:achv_I} and Property \ref{thm:converse_I} prove the optimality of the proposed Type-I secure code construction under the constraint that the code belongs to the family of determinant codes. It is an open problem to prove that the proposed code construction is optimal across all Type-I secure exact-repair regenerating DSS codes with parameters.
\end{remark}

The next property shows that the  $\FsII$ introduced in  Theorem~\ref{thm:achv_II} is the maximum file size one can achieve using determinant codes.
\begin{property}
    For an $(n,k=d,d,\ell)$ Type-II secure determinant code operating at mode $m\in [d]$ with parameters $\alpha=\binom{d}{m}$ and $\beta=\binom{d-1}{m-1}$, the maximum secure file size is upper bounded by 
    \begin{align*}
        H(\cS) \leq \FsII^{(m)} = m\binom{d-\ell+1}{m+1}.
    \end{align*}
    \label{thm:converse_II}
\end{property}
The proof of Property~\ref{thm:converse_II} is provided in Section~\ref{sec:converseProof_II}.
\begin{remark}
    Theorem \ref{thm:achv_II} and Property \ref{thm:converse_II} prove the optimality of the proposed Type-II secure code construction under the constraint that the code belongs to the family of determinant codes. It is an open problem to prove that the proposed code construction is optimal across all Type-II secure exact-repair regenerating DSS codes with parameters.
\end{remark}
It should be noted that even though the achievable secrecy trade-off defined in Theorem~\ref{thm:achv_II} is characterized by $d$ points (enumerated by $m\in [d]$), the region may indeed have fewer corner points (or Pareto points \cite{ye2019secure}). This is due to the fact that many of the points introduced in~\eqref{eq:thm_achv_I} are interior points, i.e., they lie in the convex hull of other corner points.  Figure~\ref{fig:region} depicts the achievable region for a system with ${(n,k=d=10,\ell=2)}$. The achievable (normalized) trade-off is only characterized by $2$ corner points, associated with $m=1$ and $m=2$. The code associated with $m=3$ (and all other $m>3$) offers an achievable point that belongs to the convex hull of the points for $m=1$ and $m=2$. The following property of Theorem~\ref{thm:achv_II} characterizes the number of Pareto points of the achievable trade-off.
\begin{property}
    The achievable trade-off of the proposed code construction in Theorem~\ref{thm:achv_II} has exactly $t$  Pareto (extreme) points, where $t$ is the largest integer satisfying
    \begin{align}
        t < \frac{ \sqrt{1+4\ell (d+1)}-1}{2\ell}. 
        \label{eq:con:pareto}
    \end{align}
\label{cor:num_pareto}
\end{property}
The proof of Property~\ref{cor:num_pareto} is presented in Appendix~\ref{app:pareto}. 

\begin{figure}[!t]
    \centering
    \includegraphics[width=0.40\textwidth]{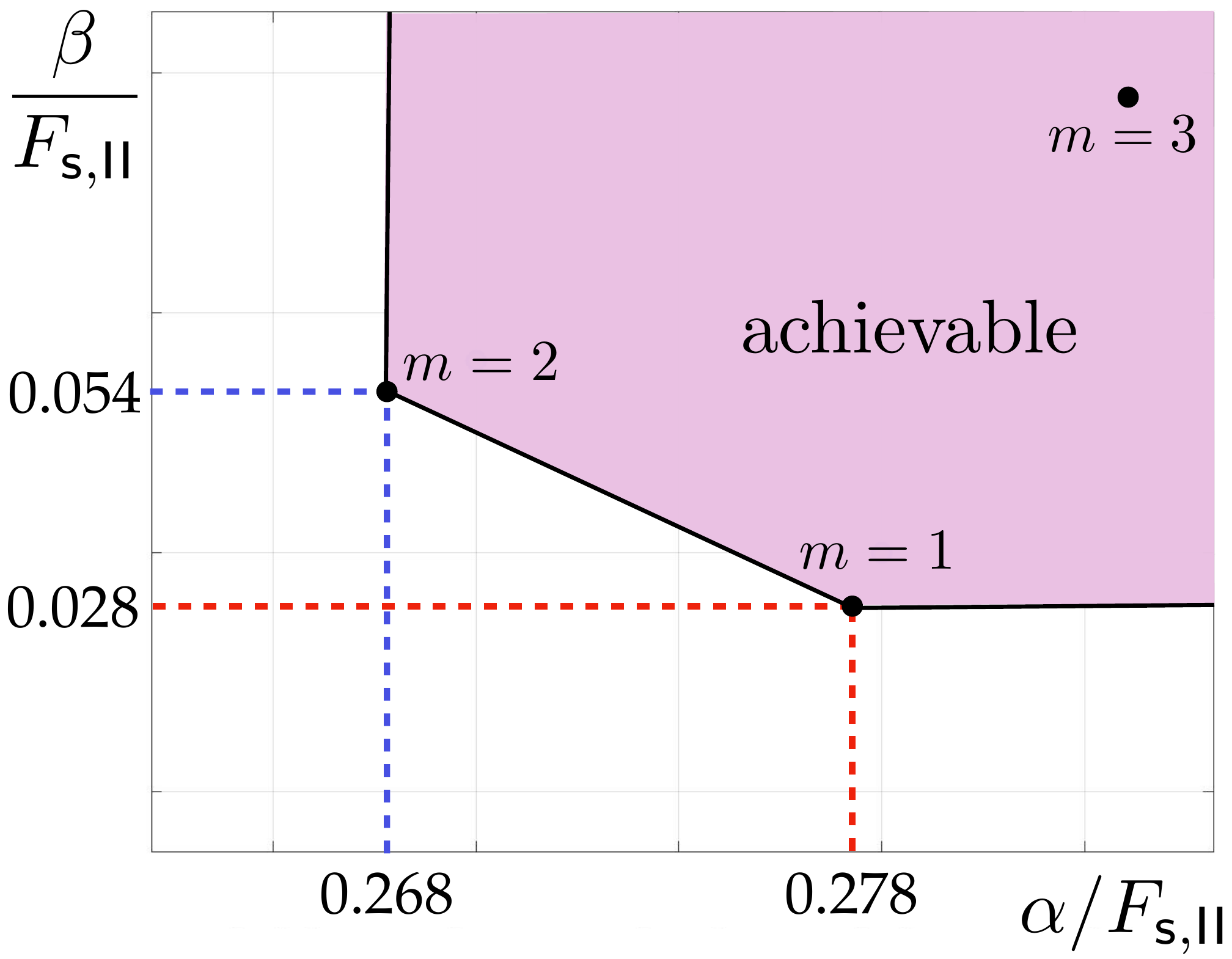}
    \caption{The trade-off curve of a Type-II determinant code with parameters $(n,k,d,\ell)=(n,10,10,2)$.
    }
    \label{fig:region}
\end{figure}

\subsection{Comparison Against other Secure Regenerating Codes}
\label{sec:comp}
In this section, we present a brief comparison between the performance of the existing secure exact-regenerating codes in the literature and the code constructions reported in Theorem~\ref{thm:achv_I} and Theorem~\ref{thm:achv_II}. 
\begin{itemize}
    \item A Type-I secure code construction for the MBR point is presented in~\cite{shah2011information} that achieves the cut-set bound, and hence is optimum. When $d=k$, the parameters of the proposed code in~\cite{shah2011information} satisfy ${\alpha=d\beta}$ and ${\FsI = \left(d^2 - \binom{d}{2}\right)\beta - \left(\ell d - \binom{\ell}{2}\right)\beta}$. It is worth noting the code parameters in Theorem~\ref{thm:achv_I} for mode $m=1$ satisfy $\alpha^{(1)}=d$, $\beta^{(1)}=1$, and ${\FsI^{(1)}=(d-\ell) d - \binom{d}{2}+\binom{\ell}{2}}$, which coincide with those of~\cite{shah2011information}, and thus, the proposed determinant code is optimum. 
    \item It is shown in~\cite{tandon2014new} that the secrecy capacity of any \mbox{Type-II} code with parameters $(n,k,d,\ell=1)$ satisfies 
    \[
    \FsII\leq \frac{k-1}{4}\alpha + \frac{(k-1)(3d-2k)}{4}\beta.
    \]
    For the regime of $k=d$, this bound reduces to ${\FsII \leq (d-1)(\alpha + d\beta)/4}$. For the code parameters of Theorem~\ref{thm:achv_II} with $\ell=1$, we have 
    \begin{align*}
        \frac{d-1}{4}\left(\alpha^{(m)} + d\beta^{(m)}\right) 
        &=\frac{d-1}{4}\left(\binom{d}{m} + d\binom{d-1}{m-1}\right)  
        \nonumber\\
        &= \frac{(d-1)(m+1)}{4}\binom{d}{m} 
        \nonumber\\
        &= \frac{(d-1)(m+1)^2}{4(d-m)m}m\binom{d}{m+1} 
        \nonumber\\
        &\geq m\binom{d}{m+1} = \FsII^{(m)}.
    \end{align*}
    Here, the inequality hold since 
    \begin{align*}
        (d-1)(m+1)^2 &= 4(d-m)m + d(m-1)^2 \\
        &\qquad+ (m-1)(3m+1)\\ 
        &\geq 4(d-m)m,
    \end{align*}
    and hence, the proposed code satisfies the upper bound. Moreover, for $m=1$ the bound is satisfied with equality, and hence the proposed MBR codes are optimum. 
    \item In~\cite[Theorem~1]{tandon2016toward}, the secure capacity is characterized for a DSS with parameters$(n,k=2,d,\ell=1)$. For the sake of comparison, we need to set $d=k=2$, where the result of~\cite{tandon2016toward} reduces to $\FsI= \min(\alpha, \beta)$ and ${\FsII=\min(\alpha/2, \beta)}$. These capacities match the achievable region obtained Theorem~\ref{thm:achv_I} and Theorem~\ref{thm:achv_II}, and hence our codes are optimum for the parameters of interest.  
    \item For parameters ${(n=d+1, k=d,d, \ell=d-1)}$, the secrecy capacity is characterized in~\cite[Theorem~2]{tandon2016toward}, and given by $\FsI=\min(\alpha, \beta)$ and $\FsII=\min(\alpha/d,\beta)$. For this set of parameters, $\FsI^{(m)}$ in Theorem~\ref{thm:achv_I} reduces to 
    \[
    \FsI^{(m)} = \binom{d}{m} - \binom{d}{m+1} + \binom{d-1}{m+1} = \binom{d-1}{m-1}. 
    \]
    In this regime, the achievable tuple at mode $m=d$ is $\left(\alpha^{(d)},\beta^{(d)}, \FsI^{(d)}\right) = (1,1,1)$, which dominates all other achievable tuples and fully characterizes the secrecy capacity region. Moreover, for a Type-II eavesdropper, Theorem~\ref{thm:achv_II} provides a single achievable tuple $\left(\alpha^{(1)},\beta^{(1)}, \FsII^{(1)}\right) = (d,1,1)$ which matches the result of~\cite{tandon2016toward}. 
    \item An upper bound for the Type-II secrecy capacity of an $(n,k,d,\ell)$ system is presented in ~\cite[Theorem~3]{tandon2016toward}, which reduces to 
    \begin{align*}
        \FsII \leq \left\{\begin{array}{ll}
            \!\!\!(d-\ell)^2\alpha/d, & \!1\hspace{-1pt}\leq\hspace{-1pt} \ell \leq \min(n\!-\!d,d/2),\\
            \!\!\!(d-\ell)(d-1)\alpha/d,  &\! \min(n\!-\!d,d/2)\hspace{-1pt}<\hspace{-1pt}\ell \leq d, 
        \end{array}
        \right.
    \end{align*}
    for $k=d$. For the achievable tuples in Theorem~\ref{thm:achv_II} we have 
    \begin{align*}
        \frac{(d-\ell)^2}{d}\alpha^{(m)} &= \frac{(d-\ell)^2}{d} \binom{d}{m}
        \nonumber\\    
        &= (d-\ell)\frac{d-\ell}{m} \binom{d-1}{m-1}
        \nonumber\\ 
        &\geq (d-\ell)\frac{d-\ell +1}{m+1} \binom{d-\ell-1}{m-1}
        \nonumber\\ 
        &= m\binom{d-\ell+1}{m+1}=\FsII^{(m)},
    \end{align*}
    where the inequality holds since $\frac{d-\ell}{m} \leq \frac{d-\ell+1}{m+1}$ for ${m\leq d-\ell}$ and the fact that $\binom{x}{m-1}$ is an increasing function of $x$. Hence, our codes satisfy the upper bound for all values of $m$. Note that the second bound for ${\min(n-d,d/2)<\ell \leq d}$ is looser compared to the first one, and hence, it is clearly satisfied by the proposed codes. 
    \item The secrecy capacity of a $(4,3,3,1)$-DSS is characterized by~\cite[Theorem~4]{tandon2016toward} as 
    \[\FsI = \min(\min(\alpha,2\beta) + \min(\alpha,\beta), (\alpha+6\beta)/3),
    \]
    which has three (normalized) extreme points $(1,1/3)$, $(3/5,2/5)$, and $(1/2, 1/2)$. The exact set of tuples can be achieved using the secure determinant codes with parameters given in Theorem~\ref{thm:achv_I} for $m=1$, $m=2$, and $m=3$, respectively. Moreover, the Type-II secrecy capacity of a $(4,3,3,1)$-DSS is given in~\cite[Theorem~4]{tandon2016toward} by $\FsII=\min(\alpha, 3\beta)$, which has a single normalized extreme point $(1, 1/3)$. This coincides with the achievable tuple of Theorem~\ref{thm:achv_II} for $m=1$. Hence, the proposed secure determinant codes are optimum for a $(4,3,3,1)$ system. 
    \item A class of Type-II secure codes for a DSS with parameters ${(n=d+1, k=d, d, \ell)}$ is proposed in~\cite[Theorem~2]{shao2017tradeoff}, that satisfy 
    \begin{align*}
        (\alpha_t, \beta_t, F_{\mathsf{s,II},t}) = \left(\frac{1}{t\!-\!1}\binom{n\!-\!1}{t\!-\!1}, \frac{1}{d}\binom{n\!-\!1}{t\!-\!1}, \binom{n\!-\!\ell}{t} \!\!\right)\!,\!
    \end{align*}
    where $t\in \{2,\dots, n-\ell\}$. 
    It is worth noting that these parameters exactly match those in Theorem~\ref{thm:achv_II}. More precisely, it is straightforward to show that
    \[(\alpha_t, \beta_t, F_{\mathsf{s,II},t}) = \frac{1}{t-1} \left(\alpha^{(t-1)}, \beta^{(t-1)}, F_{\mathsf{s,II}}^{(t-1)}\right).\] However, while the construction in~\cite{shao2017tradeoff} is limited to $n=d+1$, the proposed construction in this work can be applied to any number of nodes in the system. 
    \item The Type-II secrecy trade-off of a $(7,6,6,1)$-DSS is characterized in~\cite[Theorem~4]{shao2017tradeoff}, which is shown to have two (normalized) extreme points, namely, $(2/5, 1/15)$ and $(3/8,1/8)$. It is easy to verify that the codes introduced in Theorem~\ref{thm:achv_II} achieve ${(\alpha^{(1)},\beta^{(1)}, \FsII^{(1)}) = (6,1,15)}$ and ${(\alpha^{(2)},\beta^{(2)}, \FsII^{(2)}) = (15,5,40)}$, which lead to the same normalized pairs. Note that, however, Theorem~\ref{thm:achv_II} provides codes with the same parameters for an arbitrary $n$, and it is not limited to $n=7$. 
    \item It is reported in~\cite{ye2019secure} that the optimum trade-off of an $(n,k=d,d,\ell)$ system has a single Pareto point (i.e., $t=1$ in Property~\ref{cor:num_pareto}) if and only if $\ell \geq \Bigl\lceil{\frac{d-1}{4}}\Bigr\rceil$. This is a special case of Property~\ref{cor:num_pareto} by rewriting the condition in~\eqref{eq:con:pareto} as $2\geq \frac{\sqrt{1+4\ell (d+1)}-1}{2\ell}$. 
\end{itemize}

\section{A Brief Review of Determinant Codes}
\label{sec:det}
We use the \emph{non-secure} determinant codes \cite{elyasi2019determinant} as the main building block to construct secure exact-repair regenerating codes. Before presenting the proposed construction for Type-I and Type-II security, we start with a brief review of determinant codes. 

Consider a DSS with system parameters $(n,k=d,d)$. A collection of $d$ different determinant codes can be constructed for this system. They are labeled by a \emph{mode} parameter ${m \in [d]}$. The parameters of the determinant code with mode ${m\in [d]}$ are given in~\cite{elyasi2019determinant} by
\begin{align}
    \label{param}
    \left(F^{(m)},\alpha^{(m)},\beta^{(m)}\right)=\left({m{\binom{d+1}{m+1}},\binom{d}{m}},{\binom{d-1}{m-1}}\right).
\end{align}
Determinant codes operate at different corner points of the trade-off curve between $\alpha$ and $\beta$, by varying the mode from the MBR point with $m=1$ to the MSR point with $m=d$. Next, we present the code construction of determinant codes at a given mode $m$. In what follows, we fix $m$ and drop the superscript for ease of notation.

Our goal is to store a total of $F = m\binom{d+1}{m+1}$ source symbols from some $\mathbb{F}_q$ in the DSS. To this end, we need to construct the \emph{message matrix} $\bM$, which has $d$ rows and $\alpha=\binom{d}{m}$ columns. The rows of $\bM$ are labeled by $x \in [d]$, while the columns of $\bM$ are labeled by subsets $\cI\subseteq [d]$ of size $|\cI|=m$, sorted in lexicographical order.

\begin{defi}\label{def:types}
    For fixed parameters $d$ and $m\in [d]$, we define the types (sets)  $\cV$, $\cW$, and $\cW$ as
    \begin{align}\label{eq:types}
        \left\{
        \begin{array}{ll}
            \cV=\{(x,\cI):  x\in \cI \!\subseteq\! [d] ,\: |\cI|\!=\!m \}, \\
            \cW=\{(x,\cI): x\in [d] \!\setminus\! \cI,\:  x\!<\!\max \cI,\: \cI\!\subseteq\! [d] ,\: |\cI|\!=\!m \},\\
            \cP=\{(x,\cI):x\in [d]\!\setminus\! \cI,\: x\!>\!\max \cI,\: \cI\!\subseteq\! [d],\: |\cI|\!=\!m\}.
        \end{array}
    \right.
    \end{align}
    Moreover, for a matrix $\bA$ of size $d\times \binom{d}{m}$, we use $\cV(\bA)$, $\cW(\bA)$, and $\cP(\bA)$ to refer to the collection of entries of $\bA$ at positions belong to $\cV$, $\cW$, and $\cP$, respectively.   
\end{defi}
The following remark specifies the size of the sets defined above. 
\begin{remark}\label{rem:size:type}
    Note that there are $\binom{d}{m}$ choices for $\cI$, and if position $(x,\cI)$ is $\cV$-type as defined in~\eqref{eq:types}, then $x$ can be any element of $\cI$. Hence, $|\cV|=m\binom{d}{m}$. Each $\cW$-type pair $(x,\cI)$ corresponds to a set $\cJ = \cI \cup \{x\} \subseteq [d]$ with $|\cJ|=m+1$, where $x$ can be any element of $\cJ$ except the maximum one. Therefore, we have $|\cW|=m\binom{d}{m+1}$. Finally, each $(x,\cI)\in \cP$ is corresponding to a set $\cJ = \cI \cup \{x\} \subseteq [d]$ with $|\cJ|=m+1$, where $x= \max \cJ$.  Thus, we get $|\cP(\bM)|=\binom{d}{m+1}$. Note that we have  $|\cV| + |\cW| + |\cP|  =  m\binom{d+1}{m+1} + \binom{d}{m+1} = d\binom{d}{m} = d\alpha$, which is the number of entries in a matrix with $d$ rows and $\binom{d}{m}$ columns. 
\end{remark}
 
Next, we determine the entries of the message matrix $\bM$. We fill all the $\cV$-type and $\cW$-type positions of $\bM$ with the information symbols. Therefore, from Remark~\ref{rem:size:type} we have  $|\cV(\bM)|+|\cW(\bM)| = m\binom{d}{m} + m \binom{d}{m+1} = m\binom{d+1}{m+1} = F$.

Each $\cP$-type entry of the message matrix at position $(x,\cI)$ with $x>\max \cI$ will be filled by a \emph{parity symbol}, which is given by
\begin{align*}
    \mat{\bM}{x,\cI} \hspace{-1pt}=\hspace{-1pt} (-1)^{m}\sum_{y\in \cI}   (-1)^{\ind{\cI}{y}} \mat{\bM}{y,\cI\cup\{x\} \hspace{-1pt}\setminus\hspace{-1pt}\{y\}}.
\end{align*}
It is worth noting that  ${y\leq \max \cI < x = \max \cI\cup\{x\} \setminus\{y\}}$ and ${y\notin \cI\cup\{x\} \setminus\{y\}}$, and hence $\mat{\bM}{y,\cI\cup\{x\} \setminus\{y\}}$ is a {$\cW$-type} entry. In other words, for every $\cJ\subseteq [d]$ with ${|\cJ|=m+1}$, the matrix entries in ${\{ \mat{\bM}{y,\cJ \setminus\{y\}}: y\in \cJ\}}$ satisfy a parity equation, given by 
\begin{align} 
    \sum_{y\in\cJ}  (-1)^{\text{ind}_{\cJ}(y)}\mat{\bM}{y,\cJ \setminus\{y\}} = 0.
    \label{eq:parity}
\end{align}
It is worth noting that such a group includes $m$ matrix entries with type $\cW$ and a single $\cP$-type element. We refer to this set as \emph{parity group $\cJ$} in the rest of this paper. It is worth noting that the parity groups are disjoint, and we have exactly one $\cP$-type entry per parity group. We may use $v_{x,\cI}$ to refer to an entry of type $\cV$ at position $(x,\cI)$. Similarly, the entry at position $(x,\cI)$ from type $\cW$ or $\cP$ will be referred to as $w_{x,\cJ}$, where $\cJ=\cI\cup \{x\}$.     

Next, we select an $n\times d$  Vandermonde matrix\footnote{For a general determinant code, the encoder matrix $\mathbf{\Psi}$ can be any $n\times d$ matrix whose all $d\times d$ sub-matrices are  full-rank. However, we set it to be a Vandermonde matrix here, which is more convenient for the secrecy constraints.} $\mathbf{\Psi}$ to be used as the \emph{encoder matrix}. The entries of $\mathbf{\Psi}$ are drawn from a finite field $\mathbb{F}_q$, that includes at least $n$  distinct non-zero entries. Hence, we have $\mat{\mathbf{\Psi}}{i,j}= \psi_j^{i}$, where $\psi_1,\dots, \psi_d$ are distinct elements of $\mathbb{F}_q$.  Finally, the \emph{determinant code matrix} $\bC$ is constructed by multiplying the message matrix $\bM$ by the encoder matrix $\mathbf{\Psi}$, that~is, 
\begin{align} \label{construct}
    \bC_{n\times\alpha} = 
    \begin{bmatrix}
        \cN_1 \\
        \vdots \\
        \cN_n 
    \end{bmatrix} 
    = \mathbf{\Psi}_{n\times d} \cdot \bM_{d\times\alpha}. 
\end{align}
The content of node $i$ is denoted by the row vector $\cN_i$, which is the $i$th row of $\bC$ and consists of $\alpha$ symbols\footnote{It may appear at the first glance that the determinant codes are similar to the product-matrix (PM) codes~\cite{rashmi2011optimal, shah2011information}. First note that while PM code construction is limited to the MBR and MSR points, the determinant codes are capable of operating at the intermediate points on the $\alpha-\beta$ trade-off.  Moreover, while the MBR-PM code is equivalent to a determinant code at mode $m=1$, the MSR-PM code is fundamentally different from a determinant code at mode $m=d$, and the two codes cannot be converted to each other by a change of basis. We refer to~\cite{elyasi2020cascade} for further discussions.}. 

The data recovery property of the code is an immediate consequence of the MDS property of the encoder matrix. More specifically, by accessing the content of any subset of $|\cK|=k=d$, and stacking the corresponding rows of $\bC$, we can recover the matrix ${ \mat{\mathbf{C}}{\cK,:} = \mat{\mathbf{\Psi}}{\cK,:} \cdot \bM}$, where  $\mat{\mathbf{C}}{\cK,:}$ and $\mat{\mathbf{\Psi}}{\cK,:}$ are sub-matrices of $\mathbf{C}$ and $\mathbf{\Psi}$ obtained from the rows whose labels belong to $\cK$. Note that $\mat{\mathbf{C}}{\cK,:}$  is a $d\times d$ Vandermonde matrix, and so it is full-rank. Therefore,  we can recover the message matrix $\bM$ by multiplying $\mat{\mathbf{C}}{\cK,:}$ by $\mat{\mathbf{\Psi}^{-1}}{\cK,:}$, i.e., ${\bM = \mat{\mathbf{\Psi}^{-1}}{\cK,:} \cdot \mat{\mathbf{C}}{\cK,:}}$. 

Upon failure of node $f\in[n]$ and selection of a set of helper nodes  $\cH\subseteq[n]\backslash\{f\}$ with $|\cH|=d$, the repair data from node $h\in \cH$ for node $f$, denoted by $\R{h}{f}$, is given by 
\begin{align}
    \R{h}{f}=\cN_h \cdot\bxi^{f} = \mat{\mathbf{\Psi}}{h,:}\cdot\bM\cdot\bxi^{f}.
    \label{eq:repair-data}
\end{align}
Here, $\cN_h$ is the content of node $h$, $\mat{\mathbf{\Psi}}{h,:}$ is the $h$th row of $\mathbf{\Psi}$, and $\bxi^{f}$ is a $\binom{d}{m}\times\binom{d}{m-1}$ \emph{repair encoder} matrix, whose rows and columns are indexed by subsets $\cI, \cJ \subseteq [d]$ with $|\cI|=m$ and $|\cJ| = m-1$. The entry of $\bxi^{f}$ at position $(\cI,\cJ)$ is given by
\begin{align}
    \mat{\bxi^{f}}{\cI,\cJ}\hspace{-1pt}=\hspace{-1pt} \left\{ 
    \begin{array}{ll} 
        \!(-1)^{\text{ind}_\cI(x)}\mat{\mathbf{\Psi}}{f,x} & \mbox{if\ } \cJ\hspace{-1pt}\cup\hspace{-1pt} \{x\}\hspace{-1pt}=\hspace{-1pt} \cI,\\ 
        \!0 & \mbox{otherwise.}
    \end{array}\right.
    \label{eq:xi-def}
\end{align}
Even though $\R{h}{f}$ is a vector of length $\binom{d}{m-1}$, it is shown in~\cite[Proposition~1]{elyasi2019determinant} that the rank of matrix $\bxi^{f}$ is ${\beta=\binom{d-1}{m-1}}$, and hence the vector $\R{h}{f}$ can be sent from helper node $h$ to the failed node $f$ by communicating  $\beta = \binom{d-1}{m-1}$ entries of $\R{h}{f}$, and hence the per-node repair bandwidth constraint is satisfied.  Also, it is shown in~\cite[Proposition~2]{elyasi2019determinant} that the content of node $f$, i.e., $\cN_f =\mat{\mathbf{\Psi}}{f,:}\cdot  \bM$, can be retrieved from $\{\R{h}{f}: h\in \cH\}$ for any $\cH\subseteq [d]\setminus\{f\}$ with $|\cH|=d$. More precisely, the $\cI$th element of node $f$ can be recovered from 
\begin{align}
    [\mat{\mathbf{\Psi}}{f,:}\cdot\bM]_\cI = \sum_{x\in \cI} (-1)^{\text{ind}_\cI(x)} \mat{\bR^{f}}{x,\cI\setminus\{x\}},
\end{align}
where $\bR^{f}$ is a $d\times \binom{d}{m-1}$ matrix defined as
\begin{align}\label{eq:rep-2}
    \bR^{f} := \mat{\mathbf{\Psi}^{-1}}{\cH,:} \begin{bmatrix}
    \R{h_1}{f} \\ \R{h_2}{f} \\ \vdots \\ \R{h_d}{f}
    \end{bmatrix}.
\end{align}
Note that the matrix  $\mat{\mathbf{\Psi}}{\cH,:}$ is a $d\times d$ sub-matrix of $\mathbf{\Psi}$ which is a full-rank Vandermonde matrix, and hence, is an invertible matrix. Finally, the latter matrix in~\eqref{eq:rep-2} can be formed at the failed node $f$, by stacking all the repair data received from the helper nodes. 

\section{Type-I Secure Determinant Codes}
\label{sec:typeI}
\subsection{Code Construction for Type-I Security}
\label{sec:codeConstruct_I}
In this section, we present the construction of Type-I secure determinant codes for Theorem~\ref{thm:achv_I}. Consider a DSS with system parameters  ${(n,k=d,d,\ell)}$ and a given mode $m\in[d]$.  The goal of Theorem~\ref{thm:achv_I} is to securely store $\FsI=\FsI^{(m)}$ symbols in a determinant code with parameters $(\alpha,\beta)=\left(\alpha^{(m)}, \beta^{(m)}\right)$, as given by~\eqref{eq:thm_achv_I}. The construction of the secure code is similar to that of the (non-secure) determinant code, in which the secure information symbols as well as a set of randomly generated symbols are stored. Let $\cS$ denote the set of secure file symbols, where $|\cS|=\FsI$. Moreover, let  $\key$ be the set of 
\begin{align}
    \label{numberOfKeys_I}
    |\key| = \ell \alpha - \binom{\ell}{m+1} = \ell\binom{d}{m} - \binom{\ell}{m+1} 
\end{align}
random symbols, drawn uniformly and independently (from each other and from the secure file symbols) from $\mathbb{F}_q$. The symbols in $\key$ play the role of random keys in the code construction. Note that 
\begin{align*}
    &|\cS\cup \cQ| 
    \nonumber\\
    &= |\cS|+|\cQ| = \FsI + |\key| 
    \nonumber\\
    &= (d\!-\!\ell) \binom{d}{m} \!-\! \binom{d}{m\!+\!1} \!+\! \binom{\ell}{m\!+\!1} \!+\! \ell\binom{d}{m} \!-\! \binom{\ell}{m\!+\!1}
    \nonumber\\
    &= d \binom{d}{m} - \binom{d}{m+1} = F^{(m)}.
\end{align*}

As mentioned before, the construction of Type-I secure determinant codes is similar in spirit to that of non-secure determinant codes presented in Section~\ref{sec:det}, in which the information symbols comprise the union of the secure file symbols and the random keys. However, a key ingredient in the proposed construction is to opportunistically choose the position of the secure symbols and the random keys in the message matrix $\MsI$, in order to guarantee security against a Type-I eavesdropper. To this end, we fill all the entries in the top $\ell$ rows of $\MsI$ using the key symbols. More precisely, all the entries  $(x,\cI)\in \cV(\MsI) \cup \cW(\MsI)$ with $x\in  [\ell]$ will be by an element from $\key$. Similarly, each entry $(x,\cI) \in \cV(\MsI) \cup \cW(\MsI)$ with $x\in [\ell+1:d]$ will be filled by a secure file symbol from~$\cS$. The parity entries in $\cP(\MsI)$ will be generated according to the parity equation~\eqref{eq:parity}.     

Note that there are $\ell\alpha$ symbols in the top $\ell$ rows. However, if $\ell\geq m+1$, one parity symbol is needed to be introduced for each group of $m+1$ of such rows. Thus,   $\binom{\ell}{m+1}$ of the symbols in the top $\ell$ rows are parity symbols. This leads to 
\begin{align*}
    &|\{ (x,\cI)\in \cV(\MsI) \cup \cW(\MsI): x\in[\ell]\} | 
    \nonumber\\
    &= |\{(x,\cI): x\in[\ell]\} | - |\{(x,\cI)\in \cP(\MsI): x\in [\ell] \}|
    \nonumber\\
    &= 
    \ell \alpha - \binom{\ell}{m+1},
\end{align*}
which is consistent with the number of random keys as given in~\eqref{numberOfKeys_I}. 

The encoder matrix $\mathbf{\Psi}$ is an $n\times d$ matrix with entries from $\mathbb{F}_q$ that satisfies two properties: 
\begin{enumerate}[label= {(C\arabic*)}, ref={(C\arabic*)}]
    \item  any $d\times d$ sub-matrix of $\mathbf{\Psi}$ is full rank;  \label{cond:psi:1}
    \item  and any $\ell\times\ell$ sub-matrix of $\mat{\mathbf{\Psi}}{:,[\ell]}$ is full rank. \label{cond:psi:2}
\end{enumerate}
It is convenient to choose a Vandermonde matrix for $\mathbf{\Psi}$, which satisfies both  properties~\ref{cond:psi:1} and~\ref{cond:psi:2}. Let $\mathbf{\Psi}$ be a Vandermonde matrix generated by distinct (non-zero) elements  $x_1,x_2,\dots, x_n \in \mathbb{F}_q$, i.e., $\mathbf{\Psi}(i,j) = x_i^{j-1}$. Then, Condition~\ref{cond:psi:1} is an immediate property of the Vandermonde structure. Moreover, for an arbitrary set of $\ell$ rows ${\cL=\{i_1,i_2,\dots, i_\ell\} \subset [n]}$, the matrix $\mathbf{\Psi}(\cL, [\ell])$ is also a  Vandermonde matrix generated by  ${x_{i_1}, x_{i_2}, \dots, x_{i_\ell} \in \mathbb{F}_q}$, and we have ${\mathsf{det}(\mathbf{\Psi}) = \prod_{1\leq s <t \leq \ell} (x_{i_t}-x_{i_s})\neq 0}$. Therefore, Condition~\ref{cond:psi:2} is also satisfied.  Note that, in order to construct a Vandermonde matrix $\mathbf{\Psi}$ of size $n\times d$, it is required that $q=|\mathbb{F}_q| >n$. This is the \emph{only} constraint on the field size imposed by the code construction.  

\begin{remark}
    The proposed codes require a fairly small field size since it is only constrained by the existence of an $n \times d$ encoder matrix satisfying Conditions~\ref{cond:psi:1}~and~\ref{cond:psi:2}. This is guaranteed by Vandermonde matrices constructed by $n$ distinct elements of $\mathbb{F}_q$. Hence, $q$ is restricted to be a prime power (for the existence of a finite field of size $q$) and $q>n$ (for the existence of $n$ distinct and non-zero elements in the field). To be more precise, we can use Bertrand's postulate that guarantees the existence of a prime number $q$ satisfying $n<q<2n$ (for~${n>1}$), and conclude that $q<2n$. This shows that $q$ does not need to grow faster than $n$, and we have $q = \Theta(n)$. 
\end{remark}

Finally, once matrices $\MsI$ and $\mathbf{\Psi}$ are generated, the content of nodes will be determined by the rows of $\CsI = \mathbf{\Psi}\cdot\MsI$, similar to~\eqref{construct}. 

\subsection{An Illustrative Example for Type-I Security}
\label{sec:codeEx_I}
In this section, we present an example of the code construction for a Type-I secure determinant code. Consider a $(n,k,d,\ell) = (n,6,6,2)$-DSS operating at mode $m=2$. For illustrative purposes, we first present the code construction of the non-secure determinant code whose parameters are $(F, \alpha, \beta) = (70, 15, 5)$, as given by~\eqref{param}. Figure~\ref{fig:ex_nonSecure} depicts the corresponding message matrix $\bM$, where rows are indexed by integers from $[d]=[6]$ and columns are subsets of size $m=2$ with entries from $[d]=[6]$.  The  $\cV$-type symbols are shown in solid gray boxes, while symbols of type $\cW$ or $\cP$  are depicted in dotted boxes with different background colors, where  each background color indicates one parity group (see~\eqref{eq:parity}). Recall that since $d=6$ and $m=2$, each parity group corresponds to a subset $\cJ\in[6]$ with $|\cJ|=m+1=3$. Note that the (non-secure) determinant code for $n$ storage nodes can be obtained by multiplying $\bM$ by an encoder matrix $\mathbf{\Psi}_{n\times 6}$, as given by~\eqref{construct}.

Now, we shift our attention to the construction of \mbox{Type-I} determinant code where the message should be secured against Type-I eavesdroppers, who can access the coded content of up to $\ell = 2$ nodes. The code parameters are ${(\FsI,\alpha,\beta) = (40, 15, 5)}$, as given in~\eqref{eq:thm_achv_I}. Moreover,~\eqref{numberOfKeys_I} implies that we need to use $|\key| = 30$ random keys. Note that the storage capacity of the system reduces from $F = 70$ to $\FsI = 40$ in order to guarantee security against Type-I eavesdroppers. More precisely, even  though we still use $70$~symbols to fill the entries of matrix $\MsI$, only $\FsI=40$ of them are secure information symbols, and the remaining $30$~symbols are randomly generated keys. 

Let us denote of secure symbols by $\cS = \{u_1, u_2, \cdots, u_{40}\}$, and label the random keys by $\key = \{r_1, r_2, \cdots, r_{30}\}$. Figure~\ref{fig:ex_secure_I} depicts the corresponding message matrix $\MsI$. The random keys (in pink boxes) are placed in the top  $\ell=2$ rows, and the secure symbols (in blue boxes) are placed in the bottom  ${d-\ell = 4}$ rows. Parity symbols (in green dotted boxes) are generated according to the parity equations in~\eqref{eq:parity}. For instance, for the parity group $\cJ=\{1,3,4\}$ the  entries $\mat{\MsI}{4,\{1,3\}}$, $\mat{\MsI}{3,\{1,4\}}$ and $\mat{\MsI}{1,\{3,4\}}$ should satisfy the parity equation~\eqref{eq:parity}, i.e.,  
\begin{align}
    0 &= \sum_{y\in \{1,3,4\}} 
    (-1)^{\text{ind}_{\{1,3,4\}}(y)} \mat{\MsI}{y,\{1,3,4\} \setminus y}
    \nonumber\\
    & = 
    (-1)^1 \mat{\MsI}{1,\{3,4\}} + (-1)^2 \mat{\MsI}{3,\{1,4\}} 
    \nonumber\\
    & \phantom{=} + (-1)^3 \mat{\MsI}{4,\{1,3\}}.
    \nonumber
\end{align}
This determines the parity symbol $w_{4,\{1,3,4\}}$, which will be placed in $\mat{\MsI}{4,\{1,3\}}$ as
\begin{align*}
     \mat{\MsI}{4,\{1,3\}}
     &\hspace{-2pt}= \hspace{-1pt}(-1)^1 \mat{\MsI}{1,\{3,4\}} \hspace{-1pt}+\hspace{-1pt} (-1)^2 \mat{\MsI}{3,\{1,4\}} 
    \nonumber\\
    & = -r_{10} + u_2.
\end{align*}
It is worth noting that since $\ell<m+1$, we have $\binom{\ell}{m+1}=0$, and there is no parity symbol within the top $\ell$ rows of $\MsI$.

Finally, the Type-I secure determinant code for $n$ storage nodes can be obtained by multiplying $\bM$ by a Vandermonde matrix $\mathbf{\Psi}_{n\times 6}$ with $\mathbf{\Psi}=\psi_j^i$, as given by \eqref{construct}. Each node stores $\alpha=15$ coded symbols. For instance, the first and second coded symbols stored in node $i$ are given by 
\begin{align*}
    \cN_{i}(\{1,2\})=&\psi_1^i r_1 +\psi_2^i r_{16} +\psi_3^i (r_{17}-r_6) + \psi_4^i (r_{18}-r_7), \\
    & +\psi_5^i (r_{19}-r_8) +\psi_6^i (r_{20}-r_9)\\ 
    \cN_{i}(\{1,3\})=&\psi_1^i r_2 +\psi_2^i r_{17} +\psi_3^i u_1 + \psi_4^i (u_2-r_{10}) \\ &+\psi_5^i (u_3-r_{11})+\psi_6^i (u_4-r_{12}).
\end{align*} 

It should be noted that the construction of Type-I secure determinant code inherits the data recovery and node repair properties from non-secure determinant code construction. Therefore, all secure symbols as well as random keys can be reconstructed from the contents of any set of $k = 6$ nodes. Moreover, any failed node can be repaired by receiving repair data from any set of $d = 6$ nodes, and downloading $\beta = 5$ repair symbols from each helper node. In the rest of this section, we will prove that the proposed code is secure against any Type-I eavesdropper. 

\subsection{Proof of Theorem \ref{thm:achv_I}}
\label{sec:achvProof_I}

The proposed code construction is a secure version of the determinant code, which is secure against Type-I eavesdroppers. It is shown in Section~\ref{sec:codeConstruct_I} that the code parameters match the values given in~\eqref{eq:thm_achv_I}. The parameters of the codes. Due to its construction, it is evident that it maintains the \textit{Data Recovery} property due to \cite[Proposition~1]{elyasi2016determinant}. It also preserves the \textit{Node Repair} property due to \cite[Proposition~1]{elyasi2019determinant}. It remains to prove that the secure determinant code proposed in Section~\ref{sec:codeConstruct_I} satisfies the Type-I security constraint in~\eqref{req_mut_I}. To this end, we introduce two key lemmas essential for the proof of Type-I security property. We refer to Appendices~\ref{prf:lm:secProof_I_lm2} and~\ref{prf:lm:secProof_I_lm1} for the proof of Lemmas~\ref{lm:secProof_I_lm2} and~\ref{lm:secProof_I_lm1}, respectively. 

\begin{lm}
    \label{lm:secProof_I_lm2}
    For every $\cL \subseteq [n]$ with $|\cL|\leq\ell$, the entropy of the eavesdropper's observation  $\cEI(\cL)$, in an $(n,d,d)$ determinant code of mode $m$ is upper bounded by the number of key symbols,  i.e.,
    \begin{align}
        H(\cEI(\cL))\leq |\key| = \ell \binom{d}{m} - \binom{\ell}{m+1}, \quad 
        \forall \cL  \subseteq  [n],\:  |\cL| \leq \ell.
    \end{align}
\end{lm}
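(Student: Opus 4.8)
The plan is to bound the entropy of the eavesdropper's observation $\cEI(\cL) = \{\cN_i : i \in \cL\}$ by counting the number of key symbols that can possibly appear in these observations. The key structural fact to exploit is that each node's content $\cN_i = \mat{\mathbf{\Psi}}{i,:}\cdot\MsI$ is a linear combination of the rows of the message matrix $\MsI$, so $\cEI(\cL)$ lives in the row space spanned by $\{\mat{\mathbf{\Psi}}{i,:} : i\in\cL\}$ applied to $\MsI$. Since the secure symbols and the keys are placed in distinct rows of $\MsI$ (keys in the top $\ell$ rows, secure symbols below), the first step is to decompose $\MsI = \MsI^{\mathrm{key}} + \MsI^{\mathrm{sec}}$, where $\MsI^{\mathrm{key}}$ holds the key entries (top $\ell$ rows plus their contribution to parity symbols) and $\MsI^{\mathrm{sec}}$ holds the secure symbols. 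Because $H(\cEI(\cL))$ is bounded by the total number of random symbols the observation can depend on, and because conditioning on the secure file does not increase entropy, I would argue $H(\cEI(\cL)) \le H(\cEI(\cL) \mid \cS)$ is \emph{not} the right move; instead I want to bound the raw count of underlying independent key variables feeding into $\cEI(\cL)$.

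First I would reduce to the case $|\cL| = \ell$, since adding nodes only increases entropy and $H(\cEI(\cL))$ is monotone in $\cL$; the worst case is exactly $\ell$ compromised nodes. Next, the central observation is that all the key symbols reside in the top $\ell$ rows of $\MsI$ (the $\cV$- and $\cW$-type entries with row index $x\in[\ell]$, together with the $\binom{\ell}{m+1}$ parity symbols among those rows which are determined by them). The number of \emph{independent} key variables is exactly $|\key| = \ell\binom{d}{m} - \binom{\ell}{m+1}$ as established in~\eqref{numberOfKeys_I}. Since the entropy of any collection of symbols is at most the number of independent random variables they are functions of, and every entry of $\cEI(\cL)$ is a deterministic function of $\cS\cup\key$, while we are measuring only the randomness contributed by the keys, the bound $H(\cEI(\cL)) \le |\key|$ should follow once I establish that $\cEI(\cL)$ as a function of the keys spans a space of dimension at most $|\key|$. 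The cleanest way is to note $H(\cEI(\cL)) \le H(\{$key symbols$\}) = |\key|$ via a data-processing-type argument, but this requires care since $\cEI(\cL)$ also depends on the (non-random, or jointly treated) secure symbols.

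The honest route, I expect, is to write $H(\cEI(\cL)) = H(\cEI(\cL))$ and bound it by the total number of symbols in $\cEI(\cL)$, which is $\ell\alpha = \ell\binom{d}{m}$, and then show this can be tightened by $\binom{\ell}{m+1}$. The reduction from $\ell\alpha$ to $\ell\alpha - \binom{\ell}{m+1}$ is where the real content lies: among the $\ell$ nodes observed, the linear combinations of the top-$\ell$ rows of $\MsI$ that the eavesdropper sees must satisfy the $\binom{\ell}{m+1}$ parity relations~\eqref{eq:parity} restricted to index sets $\cJ\subseteq[\ell]$ with $|\cJ|=m+1$. These parity equations create linear dependencies among the observed symbols, so the effective dimension drops by exactly $\binom{\ell}{m+1}$. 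I would make this precise by identifying, for each such $\cJ\subseteq[\ell]$, a linear relation that holds identically among the entries of $\cEI(\cL)$, and arguing these $\binom{\ell}{m+1}$ relations are linearly independent.

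The main obstacle I anticipate is making the dependency-counting argument rigorous: I must show that the $\binom{\ell}{m+1}$ parity constraints genuinely reduce the entropy (rather than being redundant with each other or with constraints already counted), and that no \emph{additional} dependencies force the bound below $|\key|$ in a way that would make the claimed inequality vacuous — but since this is only an upper bound, the latter is not a concern. Concretely, the hard part is translating the algebraic parity structure of $\MsI$ into a statement about the joint entropy of the \emph{coded} observations $\cN_i$, which are Vandermonde-mixtures of the rows; I expect the argument will route through the fact that the observation $\cEI(\cL)$ is a deterministic function of the top-$\ell$ rows of $\MsI$ \emph{together with} the secure symbols, and that its key-dependence is captured entirely by the $|\key|$ independent key variables. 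I would therefore conclude by the chain $H(\cEI(\cL)) \le H(\cV(\MsI)\cup\cW(\MsI)\text{ in rows }[\ell]) \le |\key|$, with the final equality supplied by the counting in~\eqref{numberOfKeys_I}, deferring the detailed verification of the parity-induced independence to the appendix.
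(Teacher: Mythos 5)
Your proposal has a genuine gap, and it lies exactly where you located the ``real content'': the source of the $\binom{\ell}{m+1}$ dependencies. You conflate the top $\ell$ \emph{rows} $[\ell]$ of the message matrix $\MsI$ with the set $\cL$ of compromised \emph{nodes}. The parity relations~\eqref{eq:parity} with $\cJ\subseteq[\ell]$ are identities among message-matrix entries confined to the top rows; they are what reduces the key count to $\ell\alpha-\binom{\ell}{m+1}$ in~\eqref{numberOfKeys_I}, but they do \emph{not} induce linear identities among the eavesdropper's observed symbols. Each observed symbol $\cN_i(\cI)=\sum_{x\in[d]}\mathbf{\Psi}(i,x)\,\MsI(x,\cI)$ mixes all $d$ rows through the Vandermonde coefficients, so a relation supported on rows $[\ell]$ alone does not survive this mixing. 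The dependencies that actually reduce $H(\cEI(\cL))$ are indexed by $(m+1)$-subsets of the compromised node set $\cL\subseteq[n]$ (note $\sum_{j=m+1}^{|\cL|}\binom{j-1}{m}=\binom{|\cL|}{m+1}$) and arise from the \emph{node-repair structure} of the determinant code; they exist in the non-secure code as well, which is why the lemma is stated for ``an $(n,d,d)$ determinant code'' with no reference to key placement. This is how the paper proceeds: apply the chain rule over the nodes $e_1,\dots,e_{|\cL|}$, and for each $j>m$ choose helpers $\cH\supseteq\{e_1,\dots,e_{j-1}\}$ and invoke the reconstruction formula $\cN_{e_j}=\X{\cH}{e_j}\,\mathbf{\Theta}_{e_j}^{\cH}$; the $\binom{j-1}{m}$ components $\X{\cH}{e_j}(\cI)$ with $\cI\subseteq\{e_1,\dots,e_{j-1}\}$ are deterministic functions of the already-conditioned node contents, giving $H(\cN_{e_j}\,|\,\cN_{e_1},\dots,\cN_{e_{j-1}})\leq\binom{d}{m}-\binom{j-1}{m}$, and summing yields $|\cL|\alpha-\binom{|\cL|}{m+1}\leq\ell\alpha-\binom{\ell}{m+1}=|\key|$. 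For intuition at mode $m=1$: the relation attached to a pair $\{e_1,e_2\}\subseteq\cL$ is $\cN_{e_2}\mathbf{\Psi}(e_1,:)^{T}=\cN_{e_1}\mathbf{\Psi}(e_2,:)^{T}$, which exploits the symmetry of the \emph{entire} matrix $\bM$, i.e., all parity groups, not only those inside $[\ell]$.

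Your concluding chain is also invalid as written: $H(\cEI(\cL))\leq H(\cV(\MsI)\cup\cW(\MsI)\text{ in rows }[\ell])$ is not a data-processing inequality, because $\cEI(\cL)$ is \emph{not} a deterministic function of the top-$\ell$-row entries --- it depends on the secure symbols in the bottom $d-\ell$ rows with nonzero coefficients (e.g., $\cN_i(\cI)$ for $\cI\subseteq[\ell+1:d]$ involves $\cV$-type secure symbols). You flagged this problem yourself but then used the chain anyway, so the proof does not close. The fix is not to count keys at all: the bound must be proved as a structural property of the joint entropy of any $|\cL|\leq\ell$ coded node contents, via the repair-induced dependencies described above, and the coincidence of that bound with $|\key|$ is precisely the design choice of the construction, not the mechanism of the proof.
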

 
\begin{lm}
    \label{lm:secProof_I_lm1}
    For a determinant code generated according to the construction of Section~\ref{sec:codeConstruct_I},  the set of random keys can be fully recovered given the secure message $\cS$ and the eavesdropper's observation $\cEI(\cL)$, for every $\cL \subseteq [n]$ with $|\cL|= \ell$, i.e.,
    \begin{align}\label{eq:I:Q-from-E&S}
        H(\key | \cEI(\cL),\cS)=0, \qquad 
        \forall \cL \subseteq [n] \text{ and } |\cL|=\ell.
    \end{align}
\end{lm}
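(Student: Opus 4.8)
My plan is to prove the lemma constructively: I will exhibit an explicit procedure that, given the secure message $\cS$ and the observation $\cEI(\cL)=\mathbf{\Psi}(\cL,:)\cdot\MsI$ of any $\ell$ compromised nodes, recovers the entire top $\ell$ rows $\MsI([\ell],:)$ of the message matrix, which contain all of the key symbols. Since this exhibits $\key$ as a deterministic function of $(\cEI(\cL),\cS)$, it immediately yields $H(\key\mid\cEI(\cL),\cS)=0$. The central enabling fact is that the $\ell\times\ell$ matrix $\mathbf{\Psi}(\cL,[\ell])$ is invertible by Condition~\ref{cond:psi:2}.

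Splitting $\MsI$ into its top block $\MsI([\ell],:)$ (keys and their internal parities) and its bottom block $\MsI([\ell+1:d],:)$ (secure symbols and their parities), I observe that each column $\cI$ of the observation satisfies a relation that can be solved for the top block as soon as the bottom block in that column is known:
\[
\MsI([\ell],\cI)=\mathbf{\Psi}(\cL,[\ell])^{-1}\Big(\cEI(\cL)(:,\cI)-\mathbf{\Psi}(\cL,[\ell+1:d])\cdot\MsI([\ell+1:d],\cI)\Big).
\]
The difficulty, which I expect to be the main obstacle, is that the bottom block is \emph{not} determined by $\cS$ alone: a $\cP$-type entry of $\MsI$ lying in a bottom row (necessarily from a parity group $\cJ$ with $\max\cJ>\ell$) is, via the parity equation~\eqref{eq:parity}, a linear combination of $\cW$-type entries, some of which sit in the top rows and are therefore unknown key symbols. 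This creates a circular dependency between the unknown top entries and the bottom $\cP$-entries.

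I would resolve this by processing the columns in decreasing lexicographic order. The key structural fact is that the $\cP$-type entry at position $(x,\cI)$ (with $x>\max\cI$) depends only on entries in columns $\cI'=\cI\cup\{x\}\setminus\{y\}$ for $y\in\cI$, and every such $\cI'$ satisfies $\cI\prec\cI'$ because $\min(\cI\setminus\cI')=y<x=\min(\cI'\setminus\cI)$. Hence, when column $\cI$ is reached, all columns on which its bottom $\cP$-entries depend have already been handled. For the base case, the lexicographically largest column $\{d-m+1,\dots,d\}$ has no $\cP$-type entry in any bottom row (no row index exceeds $\max\cI=d$), so its bottom block consists solely of secure symbols and is known from $\cS$; inverting $\mathbf{\Psi}(\cL,[\ell])$ recovers its top block.

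The argument then proceeds by induction on this ordering. For the inductive step, the bottom block of the current column $\cI$ is assembled from secure symbols (its $\cV$- and $\cW$-type entries, known from $\cS$) together with its $\cP$-type entries, each of which is a known linear combination of secure $\cW$-entries in bottom rows and key $\cW$-entries in top rows of lex-larger columns $\cI'\succ\cI$, the latter already recovered. With the bottom block in hand, the displayed inversion returns the top block, and in particular all key symbols in that column. Completing the induction recovers $\MsI([\ell],:)$ in full, establishing the claim. Once the lexicographic monotonicity of the parity support is in place, the recovery is forced and invertibility of $\mathbf{\Psi}(\cL,[\ell])$ via Condition~\ref{cond:psi:2} does the rest; the only delicate part is the bookkeeping of this parity-induced coupling.
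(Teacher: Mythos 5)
Your proposal is correct and takes essentially the same route as the paper's proof: the paper likewise reconstructs $\MsI$ column-by-column in reverse lexicographic order, recovering the bottom block $\MsIUn(:,\cX)$ from $\cS$ together with the already-decoded columns (using exactly your observation that each parity entry's support lies in columns $(\cX\cup\{x\})\setminus\{y\}\succ\cX$), and then inverting $\mathbf{\Psi}(\cL,[1:\ell])$ via Condition~\ref{cond:psi:2} to obtain the top block. The only difference is presentational: the paper packages your induction as a chain-rule expansion of $H(\key\mid\cEI(\cL),\cS)$ into per-column conditional entropies, each shown to vanish, rather than as an explicit recovery algorithm.
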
  

Now, we are ready to prove that the proposed coded construction satisfies the Type-I security constraint in~\eqref{req_mut_I}. For any $\cL \subseteq [n]$ and $|\cL|\leq\ell$, we have
\begin{align}
    I(\cS;\cEI(\cL)) & = H(\cEI(\cL))-H(\cEI(\cL)|\cS) 
    \nonumber\\
    & \stackrel{\rm{(a)}}{\leq} |\key|-H(\cEI(\cL)|\cS) 
    \nonumber\\
    & \stackrel{\rm{(b)}}{=} |\key| - H(\cEI(\cL)|\cS) + H(\cEI(\cL)|S,\key) 
    \nonumber\\
    & = |\key|-I(\cEI(\cL); \key |\cS) 
    \nonumber\\
    & = |\key| - H(\key|\cS) + H(\key|\cEI(\cL),\cS) 
    \nonumber\\
    & \stackrel{\rm{(c)}}{=} |\key|-H(\key|\cS) 
    \nonumber\\
    & \stackrel{\rm{(d)}}{=} |\key|-|\key| = 0,
    \nonumber
\end{align}
where {\rm (a)} follows from Lemma~\ref{lm:secProof_I_lm2},  in~{\rm (b)} we used the fact that the node contents are all deterministic functions of the secure message $\cS$ and the random keys $\key$ and hence $H(\cEI(\cL)|S,\cK)=0$, {\rm (c)} follows from Lemma~\ref{lm:secProof_I_lm1}, and {\rm (d)} holds since the random keys are independent of the secure message. This completes the proof of Theorem~\ref{thm:achv_I}. 
\hfill $\square$

\onecolumn
\begin{sidewaysfigure}
    \includegraphics[width=\columnwidth]{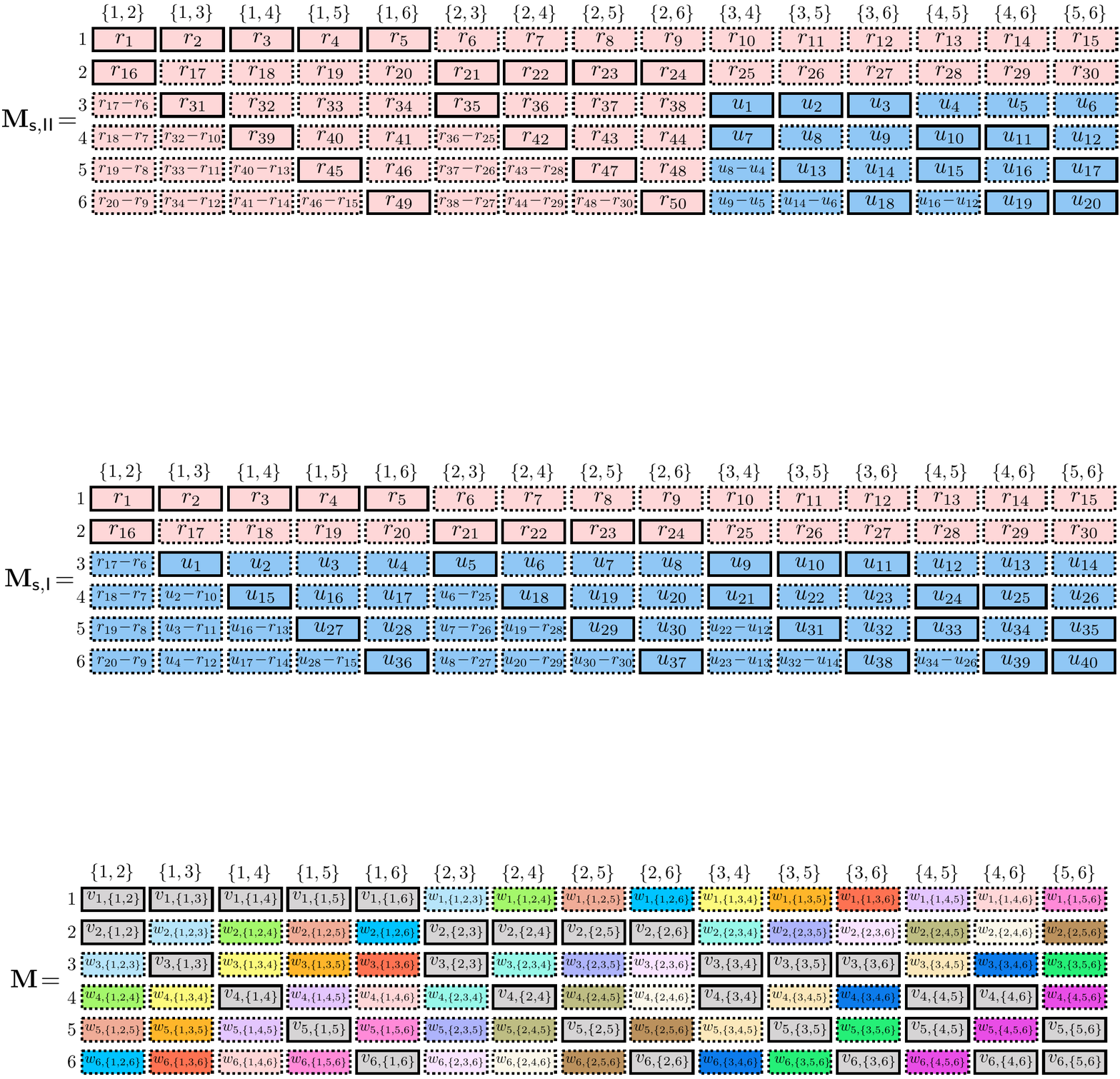}
    \caption{Message matrix $\bM$ (with row and column labels) for a non-secure determinant code with parameters $(n,k,d) = (n,6,6)$, and mode $m = 2$. Symbols from set $\cV$ are in solid gray boxes, while symbols from set $\cW$ are in dotted boxes and colored in different colors such that each set of symbols that satisfies \eqref{eq:parity} is colored with the same color.}
    \label{fig:ex_nonSecure}
    \vspace{1cm}
    \hrule
    \vspace{1cm}
    \includegraphics[width=\columnwidth]{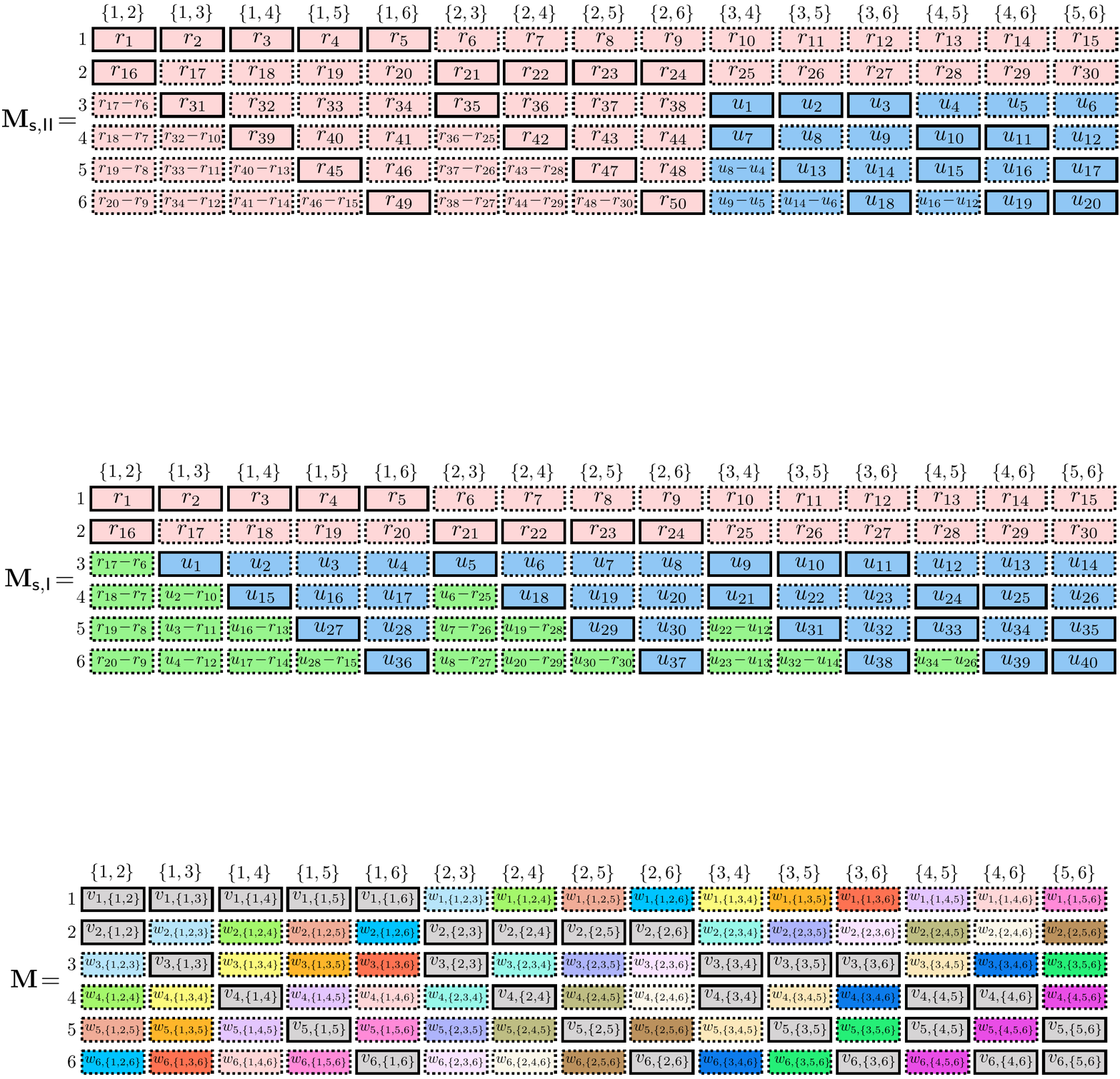}
    \caption{Message matrix $\MsI$ (with row and column labels) for a Type-I secure determinant code with parameters $(n,k,d,\ell) = (n,6,6,2)$, and mode $m = 2$. 
    Random keys are denoted by $r$ (in pink boxes), while secure symbols are denoted by $u$ (in blue boxes). The parity symbols  (in green dotted boxes) are generated such that the parity equations in \eqref{eq:parity} are satisfied.}
    \label{fig:ex_secure_I}
\end{sidewaysfigure}
\twocolumn

\subsection{The Secrecy Capacity of Type-I Secure Determinant Codes}
\label{sec:converseProof_I}
In this subsection, we present the proof of Property \ref{thm:converse_I} and provide a tight upper bound on the maximum file size to guarantee Type-I security for determinant codes.

Consider an $(n,k=d,d,\ell)$ Type-I secure distributed storage system. Without loss of generality, assume $\cL=[\ell]$ and $\cEI(\cL) = \cN_\cL$. From the data recovery property, the secure file is recoverable from the contents of any $k=d$ nodes. In particular, the entire secure message can be retrieved from the content of the nodes in $\cL \cup [\ell+1:d]$. On the other hand, the data repair property implies the content of a failed node $j \in [\ell+1:d]$ can be repaired using the repair symbols downloaded from the node contents in $\cN_{[d+1]\setminus\{j\}}$. Let $\cS$ be the secure message stored in a determinant code of mode $m$, that maintains security against a Type-I eavesdropper with access to $\ell$ nodes. Then, we have
\begin{align}
   H(\cS) 
    &\stackrel{\rm{(a)}}{=} H(\cS) -I(\cS ; \cN_\cL)\nonumber\\
    & = 
    H(\cS \:|\: \cN_{[\ell]}) 
    \nonumber\\
    & \leq 
    H\left(\cS, \cN_{[\ell+1,d]} \:|\: \cN_{[\ell]}\right)
    \nonumber\\
    & = 
    H\left(\cN_{[\ell+1,d]} \:|\: \cN_{[\ell]}\right)
    + 
    H\left(\cS \:\middle|\: \cN_{[\ell]}, \cN_{[\ell+1,d]}\right)
    \nonumber\\
    & \stackrel{\rm{(b)}}{=} 
    H\left(\cN_{[\ell+1,d]} \:|\: \cN_{[\ell]}\right)
    \nonumber\\
    & \leq 
    H\Big(\cN_{[\ell+1,d]},\:   \bigcup_{j=\ell+1}^d 
    \bigcup_{i=j+1}^{d+1}\R{i}{j} \:\Big|\: \cN_{[\ell]}\Big)
    \nonumber\\
    & =  
    H\left(\bigcup_{j=\ell+1}^d \bigcup_{i=j+1}^{d+1} \R{i}{j} \:\middle|\: \cN_{[\ell]}\right) 
    \nonumber\\
    & \phantom{=}
    + H\left(\cN_{[\ell+1:d]} \:\middle|\: \cN_{[\ell]},\: \bigcup_{j=\ell+1}^d \bigcup_{i=j+1}^{d+1} \R{i}{j}\right)
    \nonumber\\
    & \leq  
    H\left(\bigcup_{j=\ell+1}^d \bigcup_{i=j+1}^{d+1} \R{i}{j}\right) 
    \nonumber\\
    & \phantom{=}
    + H\left(\cN_{[\ell+1:d]} \:|\: \cN_{[\ell]},\: \bigcup_{j=\ell+1}^d \bigcup_{i=j+1}^{d+1} \R{i}{j}\right)
    \nonumber\\
    & = 
    H\left(\bigcup_{i=\ell+2}^{d+1} \bigcup_{j=\ell+1}^{i-1} \R{i}{j}\right)
    \nonumber\\
    & \phantom{=}
    + \sum_{u=\ell+1}^d H\left(\cN_{u} \:\middle|\: \cN_{[u-1]},\: \bigcup_{j=\ell+1}^d \bigcup_{i=j+1}^{d+1} \R{i}{j}\right)
    \nonumber\\
    & \stackrel{\rm{(c)}}{\leq}  
    \sum\nolimits_{i=\ell+2}^{d+1} H\left( \bigcup_{j=\ell+1}^{i-1}  \R{i}{j}\right)
    \nonumber\\
    & \phantom{=}
    + \sum_{u=\ell+1}^d H\left(\cN_{u} \:\middle|\: \bigcup_{i=1}^{u-1} \R{i}{u},\:  \bigcup_{i=u+1}^{d+1} \R{i}{u}\right)
    \nonumber\\
    & \stackrel{\rm{(d)}}{=} 
    \sum_{i=\ell+2}^{d+1} \left[\binom{d}{m} \!-\! \binom{d-(i-\ell-1)}{m}\right]
    \nonumber
\end{align}
\begin{align}
    & \stackrel{\rm{(e)}}{=}  
    (d-\ell)\binom{d}{m} - \sum_{j=\ell}^{d-1} \binom{j}{m} 
    \nonumber\\
    & = 
    (d-\ell)\binom{d}{m} - \left[ \sum_{j=0}^{d-1} \binom{j}{m} - \sum_{j=0}^{\ell-1} \binom{j}{m} \right]
    \nonumber\\
    & \stackrel{\rm{(f)}}{=} 
    (d-\ell)\binom{d}{m} -\lb \binom{d}{m+1} - \binom{\ell}{m+1}\rb = \FsI^{(m)},
\end{align}
where {\rm (a)} follows from the Type-I security constraint in \eqref{req_mut_I}, {\rm (b)} follows from the data recovery property in \eqref{eq:recpvery}, and in~{\rm (c)} we have used the fact that $\R{i}{u}$ is a function of $\cN_i$. The first summation in  the RHS of~{\rm (c)} consists of $(d-\ell)$ terms, where the term corresponding to $i$ is the repair data that is sent from node $i$ in order to   simultaneously repair $i-\ell-1$ failed node. It is shown in~\cite[Theorem~2]{elyasi2019determinant} that in a determinant code of mode $m$ and for a set of simultaneously failed nodes $\cA$, the entropy of repair data sent from node $i$ satisfies \[H\Big(\bigcup\nolimits_{j\in \cA} \R{i}{j}\Big) = \beta_{|\cA|}^{(m)} = \binom{d}{m} - \binom{d-|\cA|}{m}.\] Moreover, each term in the second summation in the RHS of {\rm(c)} is zero due to  the node repair property in \eqref{eq:repair}, where $\cN_u$ can be retrieved from $\{\R{i}{u}:  i\in[d+1]\setminus\{u\}\}$. These together lead to {\rm(d)}. The equality in~\rm{(e)} is due to the change of variable $j=d-i+\ell+1$. Finally, we have used the binomial coefficient identity  $\sum_{i=0}^{a} \binom{i}{b}=\binom{a+1}{b+1}$ in~\rm{(f)}. This shows that the size of the secure message stored in a determinant code of mode $m$ cannot exceed $\FsI^{(m)}$, and hence the construction in Section~\ref{sec:codeEx_I} is optimum.  This completes the proof of Property~\ref{thm:converse_I}.  
\hfill$\square$

\section{Type-II Secure Determinant Codes}
\label{sec:typeII}
\subsection{Code Construction for Type-II Security}
\label{sec:codeConstruct_II}
In this subsection, we present the  construction for \mbox{Type-II} secure determinant codes and show that the proposed construction satisfies the parameters of Theorem~\ref{thm:achv_II}. Consider an $(n,k=d,d,\ell)$-DSS and a given mode $m\in[d]$. The goal is to securely store $\FsII^{(m)}$  symbols in a determinant code with parameters $\left(\alpha^{(m)}, \beta^{(m)}\right)$, which are given in~\eqref{eq:thm_achv_II}. Following the procedure for constructing Type-I secure determinant codes, let $\cS$ denote the set of secure file symbols where $|\cS|=\FsII$, and assume $\key$ denote the set of 
\begin{align}
    |\key|=m \binom{d+1}{m+1} -  m\binom{d-\ell+1}{m+1} 
    \label{numberOfKeys_II}
\end{align}
random keys, drawn independently (from each other and from the secure file symbols) and uniformly at random from some finite field $\mathbb{F}_q$. The construction of Type-II secure determinant codes is similar in spirit to the one of non-secure determinant codes presented in Section~\ref{sec:det}, where we fill the massage matrix $\MsII$ using the symbols in $\cS\cup \key$. It is easy to verify that 
\begin{align*}
    |\cS\cup \key|&= |\cS| + |\key| = \FsII+|\key|
    \nonumber\\ 
    &= m \binom{d\!-\!\ell\!+\!1}{m\!+\!1} +  \left(m \binom{d\!+\!1}{m\!+\!1} -  m\binom{d\!-\!\ell\!+\!1}{m\!+\!1} \right)
    \nonumber\\
    &= m\binom{d+1}{m+1}=F^{(m)}, 
\end{align*}
which is the number of symbols required to fill in the message matrix of a determinant code of mode $m$. However, a key ingredient in the proposed construction is to opportunistically choose the position of the secure symbols and the random keys in the message matrix $\MsII$ to guarantee security against Type-II eavesdroppers.

Consider a block decomposition of the message matrix $\MsII$, given by 
\begin{equation}
\label{eq:D-decompos}
\begin{tikzpicture}[
style1/.style={
  matrix of math nodes,
  every node/.append style={text width=#1,align=center,minimum height=5ex},
  nodes in empty cells,
  left delimiter=[,
  right delimiter=],
  },
style2/.style={
  matrix of math nodes,
  every node/.append style={text width=#1,align=center,minimum height=5ex},
  nodes in empty cells,
  left delimiter=\lbrace,
  right delimiter=\rbrace,
  }
]

\matrix[style1=0.5cm] (1mat)
{
  & & & & & &   \\
  & & & & & &   \\
  & & & & & &   \\
  & & & & & &   \\
  & & & & & &   \\
};
\draw[solid]
  (1mat-2-1.south west) -- (1mat-2-7.south east);
\draw[solid]
  (1mat-1-3.north east) -- (1mat-5-3.south east);
\node[font=\large] 
  at (1mat-1-2.south) {$\mathbf{A}$};
\node[font=\large] 
  at ([xshift=9pt]1mat-1-5.south) {$\mathbf{B}$};
\node[font=\large] 
  at (1mat-4-5.5) {$\mathbf{D}$};
\node[font=\large] 
  at (1mat-4-2) {$\mathbf{C}$};
\draw[decoration={brace,mirror,raise=5pt},decorate]
  (1mat-5-1.south west) -- 
  node[below=7pt] {$\binom{d}{m}-\binom{d-\ell}{m}$} 
  (1mat-5-3.south east);
\draw[decoration={brace,mirror,raise=5pt},decorate]
  (1mat-5-4.south west) -- 
  node[below=7pt] {$\binom{d-\ell}{m}$} 
  (1mat-5-7.south east);
\draw[decoration={brace,raise=12pt},decorate]
  (1mat-1-7.north east) -- 
  node[right=15pt] {$\ell$} 
  (1mat-2-7.south east);
\draw[decoration={brace,raise=12pt},decorate]
  (1mat-3-7.north east) -- 
  node[right=15pt] {$d-\ell$} 
  (1mat-5-7.south east);

\node at ([xshift=-27pt,yshift=-1.2pt]1mat.west) {$\MsII=$};

\end{tikzpicture}
\end{equation}
where the top part (submatrices $\bA$ and $\bB) $ has $\ell$ rows and the bottom part (submatrices $\bC$ and $\bD$) consists of $d-\ell$ rows. Similarly, the section on the left (submatrices $\bA$ and $\bC$) includes the first   $\binom{d}{m}-\binom{d-\ell}{m}$ columns of $\MsII$, while the section on the right (submatrices $\bB$ and $\bD$) consists of the last $\binom{d-\ell}{m}$ columns of $\MsII$. Then, any entry ${(x,\cI)\in \left(\cV(\MsII) \cup \cW(\MsII)\right) \cap \mathbf{D}}$ will be filled by the secure symbols. Similarly, an entry $(x,\cI)$ that lies in ${\left(\cV(\MsII) \cup \cW(\MsII)\right) \setminus \mathbf{D}}$ will be filled by a random key symbol. Finally, the parity symbols in $\cP(\MsII)$ will be filled according to the parity equation in~\eqref{eq:parity}. 

Similar to the Type-I code construction, we use an encoder matrix $\mathbf{\Psi}$ that satisfies conditions~\ref{cond:psi:1} and~\ref{cond:psi:2}. Lastly, having the message matrix $\MsII$ and $\mathbf{\Psi}$, the code will be generated as $\CsII = \mathbf{\Psi}\cdot\MsII$, and the $i$th row of matrix $\CsII$ will be stored in node $i$ of the DSS. 

\begin{remark}\label{rmk:bD}
    Consider an entry $(x,\cI)$ that lies in the submatrix~$\bD$. Clearly, we have $x\in [\ell+1:d]$. Moreover, since the columns of matrix $\MsII$ are labeled by subsets of $[d]$ of size $m$, sorted in lexicographical order, then we have $\cI\subseteq [\ell+1:d]$ for the column label $\cI$. Furthermore, for a parity symbol $(x,\cI)\in \cP(\MsII) \cap \bD$ with $x>\max \cI$ we have 
    \begin{align} 
        \bM(x, \cI) \!=\! 
        (-1)^m \sum_{y\in\cI} (-1)^{\text{ind}_{\cI}(y)} \: \bM(y,\cI\cup \{x\} \hspace{-1pt}\setminus \hspace{-1pt}\{y\}).
        \label{eq:parity-Ds}
    \end{align}
    Note that $y\in \cI$ and $\cI\subseteq [\ell+1:d]$ imply that $y\in [\ell+1:d]$. Moreover, $x\in[\ell+1:d]$ and $\cI\subseteq [\ell+1:d]$ imply that $\cI\cup \{x\} \setminus \{y\}$. Therefore, each entry $\bM(y,\cI\cup \{x\} \setminus \{y\})$ in~\eqref{eq:parity-Ds} is a secure information symbol, and hence the parity symbols in $\bD$ depend only on the information symbols, and not the random keys. In other words, the structure of submatrix $\bD$ is identical to that of the message matrix of a determinant code with parameter $\tilde{d}=d-\ell$. 
\end{remark}

\subsection{Illustrative Example for Type-II Security}
\label{sec:codeEx_II}
In this subsection, we present an example of the code construction for Type-II secure determinant codes. Consider a $(n,k,d,\ell) = (n,6,6,2)$ secure DSS operating at mode $m=2$. Note that this setting is the same as the one considered in the illustrative example for Type-I secure determinant codes in Section~\ref{sec:codeEx_I}. Hence, the code construction of the non-secure determinant code remains the same, as depicted by Fig~\ref{fig:ex_nonSecure}.

For a determinant code that is secure  against Type-II eavesdroppers who can access the incoming repair data from all nodes to up to $\ell = 2$ nodes, the parameters of the code are  $(\FsII,\alpha,\beta) = (20, 15, 5)$, as claimed in~\eqref{eq:thm_achv_II}. We also need $|\key| = 50$ random symbols, as determined in~\eqref{numberOfKeys_II}. It is worth noting that the Type-II security constraint is stronger than the Type-I security constraint, and hence the secrecy capacity of the system reduces from $\FsI = 40$ to $\FsII = 20$. Let the set of secure symbols be $\cS = \{u_1, u_2, \cdots, u_{20}\}$, and the set of random keys be $\key = \{r_1, r_2, \cdots, r_{50}\}$. Figure~\ref{fig:ex_secure_II} depicts the corresponding message matrix $\MsII$, where the symbols in $\cV(\MsII) \cup \cW(\MsII)$ are shown with solid boxes and the parity symbols in $\cP(\MsII)$ are identified with dashed boxes. The placement of the random keys (in solid pink boxes) and the secure information symbols (in solid blue boxes) follow the block matrix decomposition for the data matrix $\MsII$ in \eqref{eq:D-decompos}. More specifically, the matrix $\MsII$ is decomposed into four submatrices, where the secure symbols only appear in the bottom right block designated by row labels $\{3,4,5,6\}$, and column labels $\{\{3,4\}, \{3,5\}, \{3,6\}, \{4,5\}, \{4,6\}, \{5,6\}\}$. Furthermore, the parity symbols in pink dotted boxes are (only) functions of random keys, while the parity symbols in blue dashed boxes are (only) functions of secure symbols. For instance, matrix entries  $\mat{\MsII}{6,\{2,5\}}$ in submatrix $\bC$ can be found from \eqref{eq:parity} as 
\begin{align*}
    &\mat{\MsII}{6,\{2,5\}} 
    \nonumber\\
    &\quad\quad = (-1)^1 \mat{\MsII}{2,\{5,6\}} + (-1)^2 \mat{\MsII}{5,\{2,6\}} 
    \nonumber\\
    &\quad\quad = -r_{30} + r_{48}.
\end{align*}
Finally, the Type-II secure determinant code for $n$ storage nodes can be obtained by multiplying $\MsII$ by a Vandermonde matrix $\mathbf{\Psi}_{n\times 6}$, as given by \eqref{construct}. Similar to Type-I secure determinant codes, the construction of Type-II secure determinant codes inherits the data recovery and node repair properties from non-secure determinant code construction. Therefore, all secure symbols and random keys can be reconstructed from the contents of any set of $k =d= 6$ nodes. Moreover, any failed node can be repaired by downloading $\beta = 5$ repair symbols from each of $d = 6$ helper nodes. 

\begin{figure*}[!t]
    \centering
    \includegraphics[width=1\textwidth]{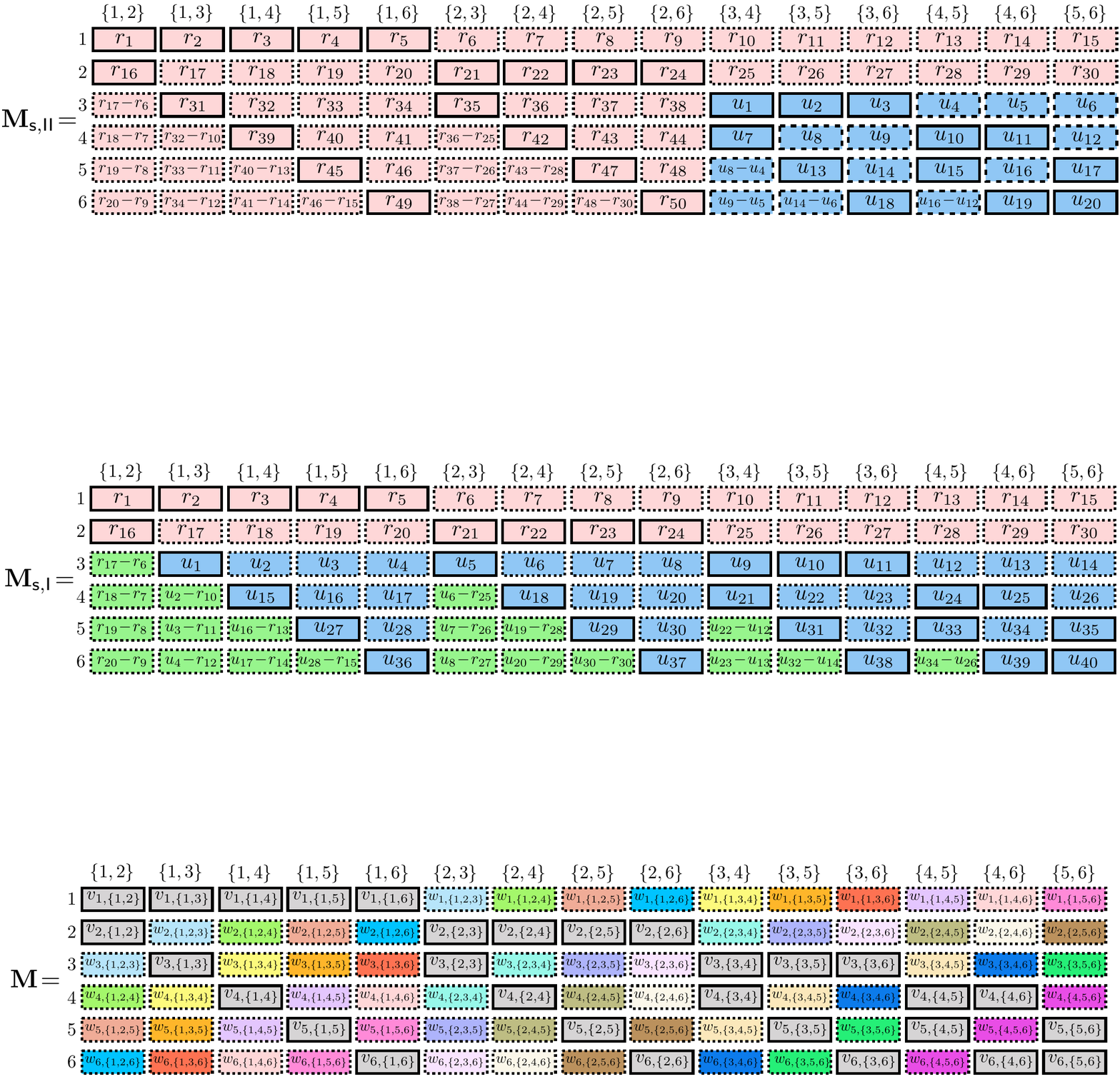}
    \caption{
    Message matrix $\MsII$ (with row and column labels) for a Type-II secure determinant code with parameters $(n,k,d,\ell) = (n,6,6,2)$, and mode $m = 2$. 
    Random keys are denoted by $r$ (in pink boxes), while secure symbols are denoted by $u$ (in blue boxes). 
    According to block matrix decomposition of $\MsII$ in~\eqref{eq:D-decompos}, parity symbols in pink dotted boxes are functions of random keys, while parity symbols in blue dashed boxes are functions of secure symbols.
    All parity symbols are placed such that the parity equations in~\eqref{eq:parity} are satisfied.
    }
    \label{fig:ex_secure_II}
    \vspace{5pt}
    \hrule
\end{figure*}

\subsection{Proof of Type-II Security Constraint of Theorem \ref{thm:achv_II}}
\label{sec:achvProof_II}

The proposed code construction is a secure version of the determinant code that is secure against Type-II eavesdroppers. Similar to the Type-I secure determinant code construction, it is evident that it maintains the \textit{Data Recovery} property due to~\cite[Proposition~1]{elyasi2016determinant}, as well as the \textit{Node Repair} property due to~\cite[Proposition~1]{elyasi2019determinant}. It remains to prove that the secure determinant code proposed in Section~\ref{sec:codeConstruct_II} satisfies the Type-II security constraint in~\eqref{req_mut_II}. To this end, we introduce three key lemmas essential for the proof of Type-II security property. 

\begin{lm}
    \label{lm:secProof_II_lm2}
    For every set of compromised nodes $\cL \subseteq [n]$ with $|\cL|\leq\ell$, the entropy of the eavesdropper's observation $\cEII(\cL)$ is upper bounded by the number of keys, i.e.,
    \begin{align*}
        H(\cEII(\cL))\leq |\key|, \qquad 
        \forall \cL \subseteq [n] \text{ with } |\cL|\leq\ell.
    \end{align*}
\end{lm}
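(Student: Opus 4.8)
The plan is to reduce the eavesdropper's observation to a clean algebraic object and then bound its entropy by a rank computation over $\mathbb{F}_q$. Fix $\cL\subseteq[n]$ and assume $|\cL|=\ell$ (the case $|\cL|<\ell$ only weakens the eavesdropper). For each compromised node $i\in\cL$, the incoming repair data is $\{\R{h}{i}=\mat{\mathbf{\Psi}}{h,:}\bM\bxi^{i}:h\in[n]\setminus\{i\}\}$, which stacked over all helpers equals $\mat{\mathbf{\Psi}}{[n]\setminus\{i\},:}\,\bM\,\bxi^{i}$ by \eqref{eq:repair-data}. Since $\mat{\mathbf{\Psi}}{[n]\setminus\{i\},:}$ contains a $d\times d$ full-rank Vandermonde block by Condition~\ref{cond:psi:1}, it has full column rank $d$, so knowing this product is equivalent to knowing $\bM\bxi^{i}$. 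Hence $H(\cEII(\cL))=H\!\left(\{\bM\bxi^{i}:i\in\cL\}\right)$; writing $\bXi_{\cL}=[\,\bxi^{i_1}\mid\cdots\mid\bxi^{i_\ell}\,]$ for the horizontal concatenation, this is $H(\bM\bXi_{\cL})$.

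Next I would exploit that every entry of $\bM$, including the parity positions via \eqref{eq:parity}, is an $\mathbb{F}_q$-linear form in the free symbols $\cS\cup\key$. Consequently $\bM\bXi_{\cL}$ takes values in a subspace whose dimension is the $\mathbb{F}_q$-rank of this linear map, so $H(\bM\bXi_{\cL})$ is \emph{at most} that rank, regardless of the distribution of $\cS$. The crucial observation is that this rank depends only on the code structure and on $\cL$, and \emph{not} on which free symbols are labelled secure and which are keys. Therefore the desired bound is a statement about the non-secure mode-$m$ determinant code: it suffices to show that the total repair data flowing into any $\ell$ nodes has at most $|\key|=m\binom{d+1}{m+1}-m\binom{d-\ell+1}{m+1}$ linearly independent coordinates.

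To carry this out I would expand by the chain rule over the compromised nodes, $H(\bM\bXi_\cL)=\sum_{j=1}^{\ell}H(\bM\bxi^{i_j}\mid \bM\bxi^{i_1},\ldots,\bM\bxi^{i_{j-1}})$, and control each term through the column space of $\bxi^{i_j}$ as given by \eqref{eq:xi-def}. Two phenomena drive the count. First, the repair data entering a single node already reveals \emph{more} than that node's content $\cN_i$ (this is exactly why Type-II dominates Type-I), and its entropy is governed by the repair-data entropy identities of \cite{elyasi2019determinant}. Second, the parity (reciprocity) structure \eqref{eq:parity} forces the repair data into $i_j$ to overlap substantially with that into the previously processed nodes $i_1,\dots,i_{j-1}$; for $m=1$ this is the symmetry $v_{i}^{\top}\bM v_{i'}=v_{i'}^{\top}\bM v_{i}$ of the message matrix, and it is precisely this overlap that collapses the naive sum of single-node contributions down to $|\key|$. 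Here Condition~\ref{cond:psi:2} guarantees that the $\ell$ relevant encoder rows are in general position, so no spurious extra cancellations occur.

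The main obstacle is this second phenomenon: pinning down the exact dimension of the overlap among the $\ell$ repair spaces and showing the telescoped sum collapses to the closed form $m\binom{d+1}{m+1}-m\binom{d-\ell+1}{m+1}$. I expect the cleanest route is to align the column space of $\bXi_\cL$ with the block decomposition \eqref{eq:D-decompos}: the combined repair space should coincide with the directions spanned by the left $\binom{d}{m}-\binom{d-\ell}{m}$ columns, so that after the parity relations are accounted for, the independent linear forms in $\bM\bXi_\cL$ are in bijection with the free symbols lying \emph{outside} the secure block $\bD$, whose count is exactly $|\key|$ by Remark~\ref{rmk:bD}. Establishing this correspondence rigorously is a binomial-identity bookkeeping of the $\cV$-, $\cW$-, and $\cP$-type positions across the four blocks $\bA,\bB,\bC,\bD$, and is the technical heart of the argument.
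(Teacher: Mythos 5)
Your opening reduction is exactly the paper's: stack the incoming repair data over all helpers and all $f\in\cL$ to obtain $\mathbf{\Psi}\cdot\MsII\cdot\bxi^{\cL}$, use Condition~\ref{cond:psi:1} to conclude that this determines and is determined by $\MsII\cdot\bxi^{\cL}$, and hence $H(\cEII(\cL))=H(\MsII\cdot\bxi^{\cL})$. Your further observation that the rank of the linear map from free symbols to $\MsII\cdot\bxi^{\cL}$ is insensitive to which free symbols are labelled secure and which are labelled keys is correct and worthwhile: it makes explicit why the lemma is really a statement about the \emph{non-secure} mode-$m$ determinant code.

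The gap is everything after that. The entire quantitative content of the lemma --- that the repair data flowing into any $\ell$ nodes contains at most $m\binom{d+1}{m+1}-m\binom{d-\ell+1}{m+1}$ linearly independent coordinates --- is precisely the step you defer as ``the technical heart of the argument.'' Your chain-rule-plus-overlap plan only sketches it: the exact dimension of the overlap among the $\ell$ repair spaces, and the claimed bijection between the independent linear forms in $\MsII\cdot\bxi^{\cL}$ and the free symbols outside the block $\bD$, are asserted rather than established, and establishing them \emph{is} the lemma (note also that the rank of $\bxi^{\cL}$, which Lemma~\ref{prop:fullrank_uxi} pins at $\binom{d}{m}-\binom{d-\ell}{m}$, is a different and much smaller quantity than the number of independent forms in $\MsII\cdot\bxi^{\cL}$, because the entries of $\MsII$ are themselves coupled through the parities~\eqref{eq:parity}; any rigorous accounting has to juggle both effects). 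The paper does not re-derive this count at all: it closes the proof in one line by invoking \cite[Theorem~3]{elyasi2019determinant}, which states that among all repair data incoming to a set of $|\cL|$ nodes of a determinant code, exactly $m\binom{d+1}{m+1}-m\binom{d-|\cL|+1}{m+1}$ symbols are linearly independent. Since your reduction shows this known result suffices, citing it (you already cite the same reference, but only its single-node repair identities) would have completed your argument and made it essentially identical to the paper's; as written, the proposal is an unfinished re-derivation of that theorem. A minor side point: Condition~\ref{cond:psi:2} plays no role in this upper bound, since unexpected additional cancellations could only \emph{decrease} $H(\cEII(\cL))$; (C2) is what is needed in Lemma~\ref{lm:secProof_II_lm1}, where the keys must be exactly recovered, via the full-rankness result of Lemma~\ref{prop:fullrank_uxi}.
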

The proof of Lemma~\ref{lm:secProof_II_lm2} is presented in Appendix~\ref{prf:lm:secProof_II_lm2}. 

\begin{lm}
    \label{lm:secProof_II_lm1}
    For the determinant code construction in Section~\ref{sec:codeConstruct_II}, for every subset of compromised nodes $\cL \subseteq [n]$ with $|\cL|=\ell$, the set of random keys can be fully recovered from the secure message $\cS$ and the eavesdropper's observation $\cEII(\cL)$, i.e.,
    \begin{align*}
        H(\key | \cEII(\cL),\cS)=0. 
    \end{align*}
\end{lm}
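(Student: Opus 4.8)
The plan is to recast this information-theoretic claim as a purely linear-algebraic kernel statement and then prove it by a carefully ordered elimination. Since every entry of $\MsII$ is an $\mathbb{F}_q$-linear form in the secure symbols $\cS$ and the keys $\key$, and the keys are uniform and independent of $\cS$, the identity $H(\key\mid\cEII(\cL),\cS)=0$ holds if and only if, for each fixed value of $\cS$, the observation $\cEII(\cL)$ determines $\key$. By linearity it suffices to treat the homogeneous case $\cS=0$: I would set all secure symbols to zero, which by Remark~\ref{rmk:bD} forces the whole block $\bD$ (information and parity entries alike) to vanish, and then show that $\cEII(\cL)=0$ implies $\MsII=0$, i.e.\ that the map $\key\mapsto\cEII(\cL)$ has trivial kernel.

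First I would reduce the observation to a clean per-row system. From \eqref{eq:repair-data}, each $\R{h}{i}=\mat{\mathbf{\Psi}}{h,:}\MsII\bxi^{i}$; since $n-1\ge d$ and $\mathbf{\Psi}$ satisfies Condition~\ref{cond:psi:1}, the rows $\{\mat{\mathbf{\Psi}}{h,:}:h\neq i\}$ span $\mathbb{F}_q^{d}$, so $\cEII(\cL)=0$ is equivalent to $\MsII\bxi^{i}=0$ for all $i\in\cL$. Reading off column $\cJ$ (with $|\cJ|=m-1$) of this identity using \eqref{eq:xi-def}, I obtain, for every row $y\in[d]$, every such $\cJ$, and every $i\in\cL$,
\begin{align*}
\sum_{x\in[d]\setminus\cJ}(-1)^{\ind{\cJ\cup\{x\}}{x}}\mat{\mathbf{\Psi}}{i,x}\,\MsII(y,\cJ\cup\{x\})=0 .
\end{align*}
For each fixed row $y$ this is a family of Vandermonde equations in the entries of that single row, and the key point is that the $\ell$ coefficient vectors indexed by $i\in\cL$ are rows of $\mathbf{\Psi}$ restricted to the columns $[d]\setminus\cJ$.

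Next I would peel the message matrix apart following the block decomposition \eqref{eq:D-decompos}, recovering the rows in a specific order. For a bottom row $y\in[\ell+1:d]$ I would induct on $t=|\cI\cap[\ell]|$: the base case $t=0$ is exactly the block $\bD$, which already vanishes, and in the inductive step, fixing $\cJ$ with $|\cJ\cap[\ell]|=t$, every term with $x\in[\ell+1:d]$ is already known to be zero, leaving the $\ell-t$ unknowns $\{\MsII(y,\cJ\cup\{x\}):x\in[\ell]\setminus\cJ\}$ constrained by the $\ell$ equations above. The relevant $\ell\times(\ell-t)$ coefficient matrix is a submatrix of $\mat{\mathbf{\Psi}}{\cL,[\ell]}$, which has full column rank by Condition~\ref{cond:psi:2}; hence these entries vanish. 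Iterating shows every bottom row is zero, so $\bC=\bD=0$.

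Finally I would handle the top $\ell$ rows. Because each right column is a subset $\cI\subseteq[\ell+1:d]$, every entry of $\bB$ sits in a parity group $\cJ'=\cI\cup\{y\}$ that contains the single top index $y$ together with $m$ bottom indices; all other members of that group are bottom-row entries, now known to be zero, so the parity relation \eqref{eq:parity} forces the $\bB$-entry to vanish as well. With $\bB=0$ in hand, I would repeat the within-row induction on $|\cI\cap[\ell]|$ for each top row $y\in[\ell]$, using the vanishing of $\bB$ as the new base case, to conclude $\bA=0$. This gives $\MsII=0$ and closes the kernel argument. I expect the main obstacle to be organizing this elimination in the correct order --- bottom rows first, then $\bB$ via the parity equations, then $\bA$ --- and verifying at each stage that Condition~\ref{cond:psi:2} supplies exactly the full-rank Vandermonde blocks needed; the interplay between the observed repair equations (which act within a single row) and the parity equations (which couple rows) is what makes the ordering delicate.
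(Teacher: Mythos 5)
Your proof is correct, and it takes a genuinely different route from the paper's. The paper works directly with entropies: it bounds $H(\key\mid\cEII(\cL),\cS)\le H(\MsII\mid\cEII(\cL),\cS)$ and reconstructs the message matrix block by block in the order $\bD$, then $\bC$, then $(\bA,\bB)$ --- $\bD$ from $\cS$ alone; $\bC$ by right-inverting $\ubxi^{\cL}$ in the identity for $\bC\ubxi^{\cL}+\bD\dbxi^{\cL}$ extracted from the observation, which requires the standalone full-rank statement of Lemma~\ref{prop:fullrank_uxi}, itself proved in Appendix~\ref{app:fullRank} via a domination order and a permutation to block lower-triangular form; and $(\bA,\bB)$ from the reconstructed node contents $\mathbf{\Psi}(\cL,:)\,\MsII$ (available by the node-repair property) together with the invertibility of $\mathbf{\Psi}(\cL,[1:\ell])$ from Condition~\ref{cond:psi:2}. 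You instead reduce the claim to a homogeneous kernel statement, which is legitimate here because the observation is $\mathbb{F}_q$-linear in $(\cS,\key)$ and the keys are uniform and independent of $\cS$, so the conditional entropy equals $\log_q$ of the kernel size of the map $\key\mapsto\cEII(\cL)$ at $\cS=0$; you then eliminate entries row by row. Your induction on $t=|\cI\cap[\ell]|$, which at each step solves an $\ell\times(\ell-t)$ system whose coefficient matrix is (up to column signs) a column subset of the invertible block $\mathbf{\Psi}(\cL,[\ell])$, is in effect an alternative, more elementary proof of the content of Lemma~\ref{prop:fullrank_uxi}, embedded in the main argument rather than isolated; and your use of the parity equations~\eqref{eq:parity} to kill $\bB$ (each $\bB$-entry lies in a parity group whose remaining members all sit in $\bC$) replaces the paper's appeal to reconstructed node contents. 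What the paper's route buys is modularity: Lemma~\ref{prop:fullrank_uxi} is a reusable structural fact about the repair encoder matrices, and given it the main proof is a few lines of block algebra, though the heaviest work is hidden in the appendix. What your route buys is self-containment and transparency: the only rank fact needed is that column subsets of an invertible Vandermonde block are independent, the argument makes explicit which observed equations pin down which entries, and you correctly isolated the one place where the within-row repair equations cannot suffice --- the base case for the top rows --- and where the cross-row parity equations are indispensable.
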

We refer to Appendix~\ref{prf:lm:secProof_II_lm1} for the proof of Lemma~\ref{lm:secProof_II_lm1}. 

Now, we are ready to prove that the proposed coded construction satisfies the Type-II security constraint in~\eqref{req_mut_II} as follows. First, note that if less than $\ell$ nodes are compromised, we can enhance the eavesdropper by providing her with the incoming data to $\ell-|\cL|$ nodes. Therefore, without loss of generality, we may assume  $|\cL|=\ell$. Thus, 
\begin{align}
    I(\cS;\cEII(\cL)) & = H(\cEII(\cL))-H(\cEII(\cL)|\cS) 
    \nonumber\\
    & \stackrel{\rm{(a)}}{\leq} |\key|-H(\cEII(\cL)|\cS) 
    \nonumber\\
    & \leq |\key| - H(\cEII(\cL)|\cS) + H(\cEII(\cL)|\cS,\key) 
    \nonumber\\
    & = |\key|-I(\cEII(\cL); \key |\cS) 
    \nonumber\\
    & = |\key| - H(\key|\cS) + H(\key|\cEII(\cL),\cS) 
    \nonumber\\
    & \stackrel{\rm{(b)}}{=} |\key|-H(\key|\cS) \nonumber\\
    & \stackrel{\rm{(c)}}{=} |\key|-|\key| = 0, \nonumber
\end{align}
where {\rm (a)} and~{\rm(b)} follow from Lemma~\ref{lm:secProof_II_lm2} and Lemma~\ref{lm:secProof_II_lm1}, respectively, and {\rm (c)} holds since the random keys are independent of the secure source symbols. This completes the proof of Theorem~\ref{thm:achv_II}. 
\hfill $\square$

\subsection{The Secrecy Capacity of Type-II Secure Determinant Codes
}
\label{sec:converseProof_II}
In this section, we present the proof of Property \ref{thm:converse_II} and provide a tight upper bound on the maximum file size to guarantee Type-II security for determinant codes. Consider an $(n,k=d,d,\ell)$ Type-II secure distributed storage system. Then, if $\cS$ is a file securely stored in the system, we have
\begin{align}
    & H\!\left(\cS\right)
    \nonumber\\
    &\stackrel{\rm{(a)}}{=} H\left(\cS\right) - I(\cS; \cEII(\cL))
    \nonumber\\
    & \!= 
    H\!\left(\cS\:|\:\cEII(\cL)\right)
    \nonumber\\
    & = 
    H\!\left(\cS \:|\: \{\R{i}{j}: j \in [\ell],\: i \in [n] \} \right)
    \nonumber\\
    & \leq
    H\!\left(\cS,\: \cN_{[d]}   \:|\: \{\R{i}{j}: j \in [\ell],\: i \in [n] \!\setminus\! j\}\right)
    \nonumber\\
    & = 
    H\left(\cN_{[d]} \:|\: \{\R{i}{j}: j \in [\ell],\: i \in [n] \}\right)
    \nonumber\\
    &\phantom{=}
    + H\left(\cS \:|\: \cN_{[d]},\:  \{\R{i}{j}: j \in [\ell],\: i \in [n] \}\right)
    \nonumber\\
    & \stackrel{\rm{(b)}}{=}
    H\!\left(\cN_{[\ell]}, \cN_{[\ell+1: \ell+m]}, \cN_{[\ell+m+1: d]}  | \{\R{i}{j}\!: j \!\in\! [\ell], i \!\in\! [n] \}\!\right)\!
    \nonumber\\
    & \stackrel{\rm{(c)}}{=}
    H\left( \cN_{[\ell+1: \ell+m]},\: \cN_{[\ell+m+1: d]}  \:|\: \{\R{i}{j}: j \in [\ell],\: i \in [n] \}\right)
    \nonumber\\
    & \stackrel{\rm{(d)}}{\leq} H\left( \cN_{[\ell+1: \ell+m]}, \cN_{[\ell+m+1: d]} , \cT \:|\: \{\R{i}{j}\!: j \!\in\! [\ell], i \!\in\! [n] \}\right)
    \nonumber\\
    &=  \!H\!\left( \cN_{[\ell+1: \ell+m]} \:| \{\R{i}{j}: j \in [\ell], i \in [n] \}\right)
    \nonumber\\
    &\phantom{=}
    +
    H\!\left(  \cT  \:| \cN_{[\ell+1: \ell+m]}, \{\R{i}{j}: j \in [\ell], i \in [n] \}\right) \nonumber\\
    &\phantom{=} + H\!\left(  \cN_{[\ell+m+1: d]}  \: |\:    \cN_{[\ell+1: \ell+m]}, \{\R{i}{j}\!: j \!\in\! [\ell], i \!\in\! [n] \} ,  \cT\right)\!,\!
    \nonumber\\
    \label{eq:lowerB_II}
\end{align}
where~{\rm (a)} follows from the secrecy constraint in~\eqref{req_mut_II}, the equality in~{\rm (b)} holds due to the data recovery property in \eqref{eq:recpvery} which implies that all the entries of the message matrix can be recovered from the content of any $d=k$ nodes, the equality in~{\rm (c)} follows form the node repair property in \eqref{eq:repair}, where the content of each node $j\in[\ell]$ can be retrieved from the repair data coming from all other nodes, and in~{\rm (d)} we introduce a tuple of random variables $\cT$ given by ${\cT \hspace{-2pt}\coloneqq\hspace{-1.5pt} \bigcup_{i=\ell+m+2}^{d+1} \bigcup_{j=\ell+m+1}^{i-1} \R{i}{j} \hspace{-1pt}=\hspace{-1.6pt} \bigcup_{j=\ell+m+1}^{d} \bigcup_{i=j+1}^{d+1} \R{i}{j}}$. 

Next, we bound each term in~\eqref{eq:lowerB_II}. For the first term, we can write
\begin{align}
    &H\left( \cN_{[\ell+1: \ell+m]} | \{\R{i}{j}: j \in [\ell], i \in [n] \}\right)
    \nonumber\\
    &\leq
    \sum_{u =\ell+1}^{\ell+m} H\left(\cN_{u} \:|\: \{\R{i}{j}: j \in [\ell],\: i \in [n]  \}\right)
    \nonumber\\
    &\leq
    \sum_{u =\ell+1}^{\ell+m}H\left(\cN_{u} \:|\: \{\R{u}{j}: j \in [\ell]\}\right)
    \nonumber
\end{align}
\begin{align}
    &= \sum_{u =\ell+1}^{\ell+m} \left[ H\left(\cN_{u}\right) + H(\{\R{u}{j}: j \in [\ell]\} \:|\: \cN_u) \right.
    \nonumber\\
    &\hspace{50pt}
    \left.
    - H\left( \{\R{u}{j}: j \in [\ell]\}\right)\right] \nonumber\\
    &\stackrel{\rm{(e)}}{=}
    \sum_{u =\ell+1}^{\ell+m} \lb \binom{d}{m} + 0 - \lp \binom{d}{m} - \binom{d-\ell}{m}\rp\rb 
    \nonumber\\
    &= \sum_{u =\ell+1}^{\ell+m} \binom{d-\ell}{m} = m \binom{d-\ell}{m},
    \label{eq:lowerB_II:T1}
\end{align}
where in~{\rm (e)} we used the fact that in a determinant code of mode $m$, the entropy of repair data sent from node $u$ to repair the nodes in $\cA$ satisfies \[H(\{ \R{u}{j}:j\in \cA\}) = \beta_{|\cA|}^{(m)} = \binom{d}{m} - \binom{d-|\cA|}{m},\] as proved in~\cite[Theorem~2]{elyasi2019determinant}. 

The second term in~\eqref{eq:lowerB_II} can be bounded as 
\begin{align}
    \label{eq:lowerB_II:T2}
    & H \left(  \cT  \:| \cN_{[\ell+1: \ell+m]}, \{\R{i}{j}: j \in [\ell], i \in [n] \}\right) 
    \nonumber\\
    &=
    H\Bigg(\bigcup_{u=\ell+m+2}^{d+1} \bigcup_{j=\ell+m+1}^{u-1}  \R{u}{j} \nonumber\\
    & \qquad\qquad\qquad\qquad\Bigg| \cN_{[\ell+1:\ell+m]}, \{\R{i}{j}: j \in [\ell], i \in [n]\}\Bigg)
    \nonumber\\
    &\leq
    \sum_{u=\ell+m+2}^{d+1} H\bigm(\{\R{u}{j} : j\in[\ell + m + 1:u - 1]\} \nonumber\\
    & \hspace{125pt} \bigm| \{\R{i}{j}: j \in [\ell], i \in [n] \}\bigm)
    \nonumber\\
    &\leq
    \!\!\!\!\sum_{u=\ell+m+2}^{d+1} \!\!\!\!H\!\left(\{\R{u}{j}\!: j\in[\ell\!+\!m\!+\!1:u\!-\!1]\} \big| \{\R{u}{j}\!:\! j \!\in\! [\ell]\}\right)
    \nonumber\\
    &=
    \sum_{u=\ell+m+2}^{d+1} \Big[ H\big(\{\R{u}{j}: j\in[\ell] \cup [\ell+m+1:u-1] \}\big)  
   \nonumber\\
       &\phantom{=======} 
    - H\big( \{\R{u}{j}: j\in[\ell]\}\big) \Big]
    \nonumber\\
    &\hspace{-3pt}\stackrel{\rm{(f)}}{=}
    \sum_{u=\ell+m+2}^{d+1} \!\lb\!
    \lp\!\!\binom{d}{m} \!-\! \binom{d\!+\!m+\!1\!-\!u}{m} \!\!\rp \!-\! \lp\!\! \binom{d}{m} \!-\! \binom{d\!-\!\ell}{m} \!\!\rp\!
    \rb\!\nonumber\\
    &= (d-\ell-m) \binom{d-\ell}{m} - \sum_{t=m}^{d-\ell-1} \binom{t}{m}
    \nonumber\\
    &= (m+1)\binom{d-\ell}{m+1} - \binom{d-\ell}{m+1} = m\binom{d-\ell}{m+1}, 
\end{align} 
where in {\rm (f)} we used the fact that \[H(\{ \R{u}{j}:j\in \cA\}) =  \binom{d}{m} - \binom{d-|\cA|}{m},\] which is proved in~\cite[Theorem~2]{elyasi2019determinant}. In order to bound the third term in~\eqref{eq:lowerB_II} we can write
\begin{align}
    &H\left(\cN_{[\ell+m+1:d]} \right. \left. \:|\: \{\R{i}{j}: j \in [\ell],\: i \in [n] \},  \cN_{[\ell+1:\ell+m]}, \cT \right)
    \nonumber\\
    &\stackrel{\rm{(g)}}{\leq}
    H\left(\cN_{[\ell+m+1:d]} \:|\: \cN_{[\ell]},  \cN_{[\ell+1:\ell+m]},\cT \right)
    \nonumber\\
    &=
    \sum_{u=\ell+m+1}^{d} 
    H\left(\cN_{u} \:|\:  \cN_{[\ell]}, \cN_{[\ell+1:\ell+m]},  \cN_{[\ell+m+1:u-1]}, \:\cT \right)
    \nonumber\\
    &\leq
    \!\!\!\!\!\sum_{u=\ell+m+1}^{d}
    \!\!\!\!\!\!\! H\left(\cN_{u} | \{\R{i}{u}\!: i\!\in\![u\!-\!1]\}, \{\R{i}{u}\!: i\!\in\! [u\!+\!1\!:\!d\!+\!1]\}  \right)
    \nonumber\\
    &\stackrel{\rm{(h)}}{=} 0,\label{eq:lowerB_II:T3}
\end{align} 
where~{\rm (g)} follows from the node repair property in \eqref{eq:repair}, which implies $\cN_{[\ell]}$ can be retrieved from the repair data coming to nodes in $[\ell]$, and similarly, we used the node repair property in {\rm (h)} to conclude that $\cN_u$ can be recovered from $d$ repair data coming from helper nodes in $[d+1]\setminus \{u\}$. Finally, plugging~\eqref{eq:lowerB_II:T1}--\eqref{eq:lowerB_II:T3} into~\eqref{eq:lowerB_II} we obtain
\begin{align*}
    H(\cS) \leq m\binom{d\!-\!\ell}{m} + m \binom{d\!-\!\ell}{m\!+\!1} = m\binom{d\!-\!\ell\!+\!1}{m\!+\!1} = \FsII^{(m)}.
\end{align*}
This completes the proof of Property~\ref{thm:converse_II}.
\hfill $\square$
    
\section{Conclusion}
\label{sec:conclusion}
In this paper, we develop information-theoretic secure determinant codes against Type-I and Type-II eavesdroppers. For system parameters $(n,k=d,d)$, we provide code constructions and characterize the achievable trade-offs for \mbox{Type-I} and Type-II secure determinant codes. We show that the proposed code constructions data recovery and node repair properties, along with the security constraints. Finally, we prove that the proposed construction is optimal, within  the  class of determinant codes. The general proof of optimality (without a constraint in the construction scheme) remains open for future works. Another related research problem is to develop secure codes for general $(n,k,d)$ parameters. We believe such a construction can be obtained using the non-secure cascade codes proposed in \cite{elyasi2020cascade}. However, the details of the construction and proof secrecy are not straightforward. Another interesting research direction is to prove whether the proposed secure determinant codes are optimal over all secure exact-repair regenerating DSS codes with parameters $(n,k,d)$.

\appendices

\section{Proof of Property~\ref{cor:num_pareto}}
\label{app:pareto}
In this section, we prove Property~\ref{cor:num_pareto}, in which the number of Pareto optimum points of the achievable region of \mbox{Type-II} secure determinant codes is characterized. Recall that Theorem~\ref{thm:achv_II} provides a set of $d$ achievable tuples \[\left\{\left(\alpha^{(m)}, \beta^{(m)}, \FsII^{(m)}, \right): m \in [d]\right\},\] or equivalently a set of $d$ achievable normalized pairs 
\[
\left\{\left(\bar{\alpha}{(m)}, \bar{\beta}^{(m)}\right) =  \left( \frac{\alpha^{(m)}}{ \FsII^{(m)}} ,  \frac{\beta^{(m)}}{ \FsII^{(m)}}  \right): m \in [d]\right\},
\] 
and any point in the convex hull of these pairs is achievable. However, not all of these $d$ points lie on the boundary of the achievable region, and some of them can be interior points of the region. Our goal is to characterize the exact number of corner points on the boundary of the achievable region. 

We call an achievable point of mode $m$ with parameters $\left(\bar{\alpha}^{(m)},\bar{\beta}^{(m)}\right)$ a \textit{Pareto point} if it is on the boundary of the achievable region, and call it an \textit{interior point} otherwise. In other words, an interior point is a pair $\left(\bar{\alpha}, \bar{\beta}\right)$ where each parameter is greater than or equal to an affine combination of the corresponding parameter of some Pareto points. In the single Pareto point, the only active corner point is the MBR point ($m=1$), which was shown in \cite{ye2017rate} and \cite{tandon2016toward}. Next, we examine the case of multiple Pareto points.

First consider $\bar{\beta}^{(m)}$ and $\bar{\beta}^{(m+1)}$. We have 
\begin{align*}
    \frac{\bar{\beta}^{(m)}}{\bar{\beta}^{(m+1)}} 
    &= \frac{{\beta^{(m)}} / {\FsII^{(m)}} }{ {\beta^{(m+1)}}/{\FsII^{(m+1)}} } 
    \nonumber\\
    &= \frac{\binom{d-1}{m-1}/m\binom{d-\ell+1}{m+1} }{ \binom{d-1}{m}/(m+1)\binom{d-\ell+1}{m+2}} \\
    &= \frac{m+1}{m+2} \frac{d-m-\ell}{d-m} <1.
\end{align*}
This implies that $\bar{\beta}^{(m)} < \bar{\beta}^{(m+1)}$, i.e., $\bar{\beta}^{(m)}$ is increasing with respect to $m$, and hence $m=1$ provides the lowest value of $\bar{\beta}$. Consequently, $\bar{\beta}^{(1)}$ is always a Pareto point. 

Next, assume both $\left(\bar{\alpha}^{(m)},\bar{\beta}^{(m)}\right)$ and $\left(\bar{\alpha}^{(m+1)},\bar{\beta}^{(m+1)}\right)$ are Pareto points. Then, they should satisfy  $\bar{\alpha}^{(m)} > \bar{\alpha}^{(m+1)}$, otherwise $\left(\bar{\alpha}^{(m+1)},\bar{\beta}^{(m+1)}\right) \geq \left(\bar{\alpha}^{(m)},\bar{\beta}^{(m)}\right)$, which is in contradiction with $\left(\bar{\alpha}^{(m+1)},\bar{\beta}^{(m+1)}\right)$  being a Pareto point. This implies that
\begin{align}
    1 < \frac{\bar{\alpha}^{(m)} }{\bar{\alpha}^{(m+1)}} 
    & = \frac{{\alpha^{(m)}} / {\FsII^{(m)}} }{ {\alpha^{(m+1)}}/{\FsII^{(m+1)}} } 
    \nonumber\\
    & = \frac{\binom{d}{m}/m\binom{d-\ell+1}{m+1} }{ \binom{d}{m+1}/(m+1)\binom{d-\ell+1}{m+2}} 
    \nonumber\\
    & = \frac{(m+1)^2 (d-m-\ell)}{m(m+2)(d-m)}.
    \label{eq:pareto_inequality}
\end{align}
Then,~\eqref{eq:pareto_inequality} holds if and only if 
\begin{align}
    \ell &< \frac{d-m}{(m+1)^2}\left[(m+1)^2 -m(m+1) \right]\nonumber\\ 
    &= \frac{d+1-(m+1)}{(m+1)^2}. 
    \label{eq:pareto_bound_ell}
\end{align}
Note that the RHS of~\eqref{eq:pareto_bound_ell} is a decreasing function of $m$, and hence, if it is not satisfied for $m$, then it will not hold for $m+1$. In other words, the set of Pareto points are those corresponding to $\{1,2,\dots, t\}$, where $t$ is the largest integer satisfying $\ell < \frac{d+1-t}{t^2}$. Solving the quadratic equation $t$, we can conclude that $t$ is the largest integer that satisfies
\begin{align}
    t < \frac{-1 + \sqrt{1+4\ell (d+1)}}{2\ell}, 
    \label{eq:pareto_condition}
\end{align}
which is the claim of the property. 

Note that~\eqref{eq:pareto_condition} reduces to the result in \cite{ye2019secure} for a single Pareto point. In order to have \emph{only} one Pareto point, we need that $t=2$ violates the condition in~\eqref{eq:pareto_condition}. This implies ${4\ell+1 \geq \sqrt{1+4\ell(d+1)}}$, or equivalently, 
\begin{align*}
    \ell \geq \frac{d-1}{4},
\end{align*}
which subsumes the result of \cite[Theorem 1]{ye2019secure}.  \hfill $\square$

\section{Proof of Lemma~\ref{lm:secProof_I_lm2}} 
\label{prf:lm:secProof_I_lm2}
    Let $\cL=\{e_1,e_2,\dots, e_{|\cL|}\}$ be the set of nodes accessed by the eavesdropper. Then, using the chain rule we can write
    \begin{align}\label{eq:prf:lm1:1}
        H&(\cEI(\cL)) = H(\cN_{e_1}, \cN_{e_2}, \dots, \cN_{e_|\cL|}) \nonumber\\
        &= H(\cN_{e_1},   \dots, \cN_{e_m})  +\!\! \sum_{j=m+1}^{|\cL|}H(\cN_{e_j}| \cN_{e_1},   \dots, \cN_{e_{j-1}}) \nonumber\\
        &\leq \sum_{j=1}^m H(\cN_{e_j}) +\!\! \sum_{j=m+1}^{|\cL|} H(\cN_{e_j}|  \cN_{e_1},   \dots, \cN_{e_{j-1}}).
    \end{align}
Let us focus on each term in the summation in~\eqref{eq:prf:lm1:1}. Fix some $j\in[m+1:|\cL|]$, and consider a set of nodes ${\cP\subseteq [n]\setminus \{e_1,e_2,\dots, e_j\}}$ with $|\cP|=d-(j-1)$, and let $\cH=\cP \cup \{e_1,e_2,\dots, e_{j-1}\}$. Since $|\cH|=d$ and $e_j \notin \cH$, the content of node $e_j$ can be exactly repaired by the repair data sent from nodes in $\cH$. Using~\cite[Proposition~2]{elyasi2016determinant}, we have
\begin{align}\label{eq:prf:lm1:2}
    \cN_{e_j}  &= \X{\cH}{e_j} \mathbf{\Theta}_{e_j}^{\cH},
\end{align}
where $\mathbf{\Theta}_{e_j}^{\cH}$ is a $\binom{d}{m}\times\binom{d}{m}$ matrix,  whose entries only depend on the encoder matrix $\mathbf{\Psi}$, and $\X{\cH}{e_j}$ is a row vector of length~$\binom{d}{m}$, where its entries are labeled by subsets of $\cH$ of size $m$. In particular, the entry at position $\cI$ (with $\cI\subseteq \cH$ and $|\cI|=m$) of $\X{\cH}{e_j}$ is given by~\cite[Proposition~2]{elyasi2016determinant}
\begin{align}\label{eq:prf:lm1:3}
    \X{\cH}{e_j}(\cI) \hspace{-1pt}=\hspace{-2pt} \sum_{i\in \cI} \sum_{\substack{\cJ\subseteq [d]\\ |\cJ|=m}} \hspace{-3pt}\cN_{i}(\cJ) \hspace{-1pt}\cdot\hspace{-1pt} \det{\mathbf{\Psi} (\cI \hspace{-1pt}\cup\hspace{-1pt}\{e_j\}\hspace{-1pt}\setminus\hspace{-1pt}\{i\} , \cJ)}.
\end{align}
Here, $\cN_{i}(\cJ)$ is the $\cJ$th coded symbol stored in node $i$, and $\mathbf{\Psi} (\cI \cup\{e_j\}\setminus\{i\} , \cJ)$ is an $m\times m$ submatrix of $\mathbf{\Psi}$ obtained by the set of rows in~$\cI \cup\{e_j\}\setminus\{i\}$ and the set of columns in~$\cJ$. Therefore, we can write

\begin{align}\label{eq:prf:lm1:4}
     H(\cN_{e_j} |  &\cN_{e_1}, \cN_{e_2}, \dots, \cN_{e_{j-1}}) 
     \nonumber\\
     &\stackrel{\rm{(a)}}{\leq}   H(\X{e_j}{H}|  \cN_{e_1}, \cN_{e_2}, \dots, \cN_{e_{j-1}}) 
     \nonumber\\
     &\leq \sum_{\substack{\cI\subseteq \cH\\ |\cI|=m}} H(\X{\cH}{e_j}(\cI)|  \cN_{e_1}, \cN_{e_1}, \dots, \cN_{e_{j-1}})
     \nonumber\\
     &= \sum_{\substack{\cI\subseteq \cH\\ |\cI|=m \\ \cI\subseteq \{e_1,\dots, e_{j-1}\}}} \hspace{-5mm}H(\X{\cH}{e_j}(\cI)|  \cN_{e_1}, \cN_{e_1}, \dots, \cN_{e_{j-1}})
     \nonumber\\
     & \qquad + \hspace{-19pt}
     \sum_{\substack{\cI\subseteq \cH\\ |\cI|=m\\ \cI \nsubseteq  \{e_1,\dots, e_{j-1}\} }} \hspace{-5mm}H(\X{\cH}{e_j}(\cI)|  \cN_{e_1}, \cN_{e_1}, \dots, \cN_{e_{j-1}})\nonumber\\
     &\stackrel{\rm{(b)}}{\leq}  0 + \binom{d}{m} - \binom{j-1}{m} = \alpha - \binom{j-1}{m},
\end{align}
where {\rm (a)} follows from~\eqref{eq:prf:lm1:2} and the fact that the encoder matrix $\mathbf{\Psi}$ is a public information. We have to consider two cases for {\rm (b)}: If $\cI\hspace{-1pt}\subseteq\hspace{-1pt}\{e_1,\dots, e_{j-1}\}$, then~\eqref{eq:prf:lm1:3} implies that $\X{\cH}{e_j}(\cI)$ is a deterministic function of $(\cN_{e_1},\dots, \cN_{e_{j-1}})$, and hence,  $H(\X{\cH}{e_j}(\cI)|  \cN_{e_1},  \dots, \cN_{e_{j-1}})=0$. Moreover, when ${\cI\nsubseteq\{e_1,\dots, e_{j-1}\}}$, then the conditional entropy of $\X{\cH}{e_j}(\cI)$ is at most~$1$. These lead to the inequality in~{\rm (b)}. Plugging~\eqref{eq:prf:lm1:4} into~\eqref{eq:prf:lm1:1}, we get

\begin{align}
        H(\cEI(\cL))  
        &\leq \sum_{j=1}^m H(\cN_{e_j}) 
        + \sum_{j=m+1}^{|\cL|} H(\cN_{e_j}|  \cN_{e_1},  \dots, \cN_{e_{j-1}})\nonumber\\
        & \leq \sum_{j=1}^m \alpha + \sum_{j=m+1}^{|\cL|} \left[ \alpha -\binom{d}{j-1}\right]\nonumber\\
        &= |\cL| \alpha - \sum_{j=m+1}^{|\cL|} \binom{j-1}{m}\nonumber\\
        & =  |\cL| \alpha - \binom{|\cL|}{m+1}\nonumber\\
        &\leq \ell \alpha - \binom{\ell}{m+1} = |\cQ|.\nonumber
\end{align}
This completes the proof of Lemma~\ref{lm:secProof_I_lm2}. \hfill $\square$

\section{Proof of Lemma~\ref{lm:secProof_I_lm1}}
\label{prf:lm:secProof_I_lm1}
Recall that ${\cN_i=\mat{\mathbf{\Psi}}{i,:} \MsI}$ denotes the contents of node~$i$ for $i \in [n]$. Moreover, for every $\cL \subseteq [n]$ and $|\cL|=\ell$, let $\cEI(\cL)$ be the data observed by the eavesdropper, that is, the content of all nodes $i\in \cL$. We can stack all such rows in a matrix, to  construct $\bEI = \mat{\bC_\mathsf{s,I}}{\cL,:} =\mat{\mathbf{\Psi}}{\cL,:}\cdot\MsI$, where $\mat{\bC_\mathsf{s,I}}{\cL,:}$ and $\mat{\mathbf{\Psi}}{\cL,:}$ are, respectively, submatrices of $\bC_\mathsf{s,I}$ and~$\mathbf{\Psi}$ generated by all columns and only rows with indices belong to $\cL$. Recall that each column of $\MsI$ and $\bEI$ is indexed by a subset $\cX$ where $\cX \subseteq [d]$ and $|\cX|=m$. Let ${\mat{\MsI}{:,\cX}}$ and $\bEI(:,\cX)$ denote the $\cX$th column of $\MsI$ and $\bEI$, respectively. For fixed parameters $(d,\ell)$, let $\mat{\MsIOv}{:,\cX}$ be an $\ell \times 1$ column vector that includes the top $\ell$ entries of ${\mat{\MsI}{:,\cX}}$, i.e., ${\mat{\MsIOv}{:,\cX}=\mat{\MsI}{[1:\ell],\cX}}$, and $\mat{\MsIUn}{:,\cX}$ be a $(d-\ell)\times 1$ column vector that consists the bottom $(d-\ell)$ entries of $\mat{\MsI}{:,\cX}$, that is, $\mat{\MsIUn}{:,\cX}= \mat{\MsI}{[\ell+1:d],\cX}$. Therefore, the $\cX$th column of $\bEI$ can be written as 
\begin{align}
    \bEI(:,\cX) 
    &= 
    \mat{\mathbf{\Psi}}{\cL,:} \: \mat{\MsI}{:,\cX}
    \nonumber\\
    &= \mat{\mathbf{\Psi}}{\cL,[1:\ell]} \: \mat{\MsIOv}{:,\cX} 
    \nonumber\\
    &\phantom{=} +
    \mat{\mathbf{\Psi}}{\cL,[\ell+1:d]} \: \mat{\MsIUn}{:,\cX}. 
    \label{eq:eavs-col}
\end{align} 
In order to show~\eqref{eq:I:Q-from-E&S}, we decode the random keys from the secure message and the eavesdropper's observation by reconstructing the message matrix $\MsI$, from which all key symbols can be retrieved. The reconstruction of $\MsI$ is performed column-by-column, in a recursive manner, in \emph{reverse lexicographical} order of the column labels (i.e., from right to left). More precisely, we start with the last column with index ${[d-m+1:d]=\{d-m+1, d-m+2, \dots, d\}}$ and decode its entries. Due to the order of the reconstruction, by the time we start decoding column $\cX$,  all columns $\MsI(:,\cY)$ with $\cY \succ \cX$ are already decoded. 

Now, we can expand $H(\key|\cEI(\cL),\cS)$ as  
\begin{align}
    &H(\key \:|\: \cEI(\cL),\cS)
    \nonumber\\
    & = 
    H(\key \:|\: \bEI,\cS)
    \nonumber\\
    & = 
    H\lp\MsI \:\Big|\: \bEI,\cS\rp
    \nonumber\\
    & =
    H\left(\{\mat{\MsI}{:,\cX} : \cX\subseteq [d], |\cX|=m\} \:\Big|\: \bEI, \cS\right)
    \nonumber\\
    & \stackrel{\rm{(a)}}{=}
    \sum_{\substack{\cX\subseteq [d],\\ |\cX|=m } } 
    H\lp\mat{\MsI}{:,\cX} \:|\: \bEI, \cS, \{\mat{\MsI}{:,\cY} : \cY \succ \cX\} \rp
    \nonumber\\
    & \stackrel{\rm{(b)}}{=}
    \sum_{\substack{\cX\subseteq [d],\\ |\cX|=m } }
    \Bigg[ H(\mat{\MsIUn}{:,\cX} \:|\: \bEI, \cS, \{\mat{\MsI}{:,\cY}: \cY \succ \cX\} )  
    \nonumber\\
    & \phantom{=}
    +\!  H\!\!\left( \MsIOv(:,\cX) \Big| \bEI, S, \{\MsI(:,\cY): \cY \succ \cX\}, \MsIUn(:,\cX) \right) \!\!\Bigg]\!,\!\!
    \label{eq:secProof_decode2}
\end{align}
where~{\rm (a)} and~{\rm (b)} follow from the chain rule, and the fact that ${\MsI(:,\cX) = \left\{\MsIOv(:,\cX), \MsIUn(:,\cX)\right\}}$. 
Next, we show that each term in the summation in \eqref{eq:secProof_decode2} is equal to zero.

Note that $\MsIUn(:,\cX)=\lc\MsI(x,\cX) : x \in [\ell+1:d]\rc$, and recall from Section~\ref{sec:codeConstruct_I} that the entries in the bottom~${(d-\ell)}$ rows of $\MsI$ are either secure source symbols or parity symbols. Hence, each entry of $\MsIUn(x,\cX)$ can be categorized into three groups as follows:
\begin{itemize}
    \item If $x\in \cX$, then $(x,\cX)\in \cV(\MsI)$ and thus, $\mat{\MsI}{x,\cX}$ is a secure symbol. This implies   $H(\MsI(x,\cX) | \cS)=0$. 
    \item If $x\notin \cX$ and $x<\max \cX$, then ${(x,\cX)\in\cW(\MsI)}$. This implies tha $\MsI(x,\cX)$ is a secure symbol, and hence,  ${H(\MsI(x,\cX)|\cS)=0}$.
    \item Finally, when $x\notin \cX$ and $x>\max \cX$ we have  ${(x,\cX)\in\cP(\MsI)}$, and thus, $\MsI(x,\cX)$ is a parity symbol.  The parity equation~\eqref{eq:parity} for the parity group $ \cX\cup \{x\}$ implies that 
    \begin{align}
        \MsI&(x,\cX)\nonumber\\
        &=  (-1)^{m}\hspace{-1pt} \sum_{y\in \cX } 
        (-1)^{\ind{\cX}{y}} \MsI(y,(\cX \cup \{x\})\!\setminus\! \{y\}).
        \nonumber
    \end{align}
 Note that for every ${y\in \cX}$ we have ${y\leq \max \cX <x}$, which implies ${\cY=(\cX\cup\{x\})\setminus \{y\} \succ \cX}$. Therefore, all symbols $\MsI(y,(\cX \cup \{x\})\!\setminus\! \{y\})$ with ${y\in \cX}$ appear in $\{\mat{\MsI}{:,\cY}: \cY \succ \cX\}$, and thus $\MsI(x,\cX)$ can be evaluated from the variables in the condition of the entropy expression. That is, 
 \[{H(\MsI(x,\cX) | \{\mat{\MsI}{:,\cY}: \cY \succ \cX\})=0}.\] 
\end{itemize}
This can be formalized as
\begin{align}
    &H\lp\mat{\MsIUn}{:,\cX} \:\cnd\: \bEI, \cS, \{\mat{\MsI}{:,\cY}: \cY \succ \cX\} \rp \nonumber\\
    &= \sum_{x\in[\ell+1:d]} H\lp\mat{\MsIUn}{x,\cX} \:\cnd\: \bEI, \cS, \{\mat{\MsI}{:,\cY}: \cY \succ \cX\} \rp \nonumber
\end{align}
\begin{align}
    &\leq \!\!\!\sum_{\substack{x\in[\ell+1:d] \\ x\leq \max \cX}} \!\! H\lp\mat{\MsIUn}{x,\cX} \:\cnd\: \bEI, \cS, \{\mat{\MsI}{:,\cY}: \cY\! \succ\! \cX\} \rp  
    \nonumber\\
    &\phantom{\leq}  + \!\!\!\sum_{\substack{x\in[\ell+1:d]\\ x> \max \cX}} \!\!H\lp\mat{\MsIUn}{x,\cX} \:\cnd\: \bEI, \cS, \{\mat{\MsI}{:,\cY}: \cY \!\succ\! \cX\} \rp 
    \nonumber\\
    &\leq \!\!\!\!\sum_{\substack{x\in[\ell+1:d]\\ x\leq \max \cX}} \!\!\!H\hspace{-2pt}\lp\mat{\MsIUn}{x,\cX} \!\:\cnd\:\! \cS \rp  \nonumber\\
    & \phantom{\leq} +\!\!\!\!\!\sum_{\substack{x\in[\ell+1:d]\\ x> \max \cX}} \!\!\!\!\! H\hspace{-2pt}\lp\mat{\MsIUn}{x,\cX} \!\:\cnd\:\!
    \{\mat{\MsIUn}{y,(\cX\hspace{-1pt}\cup\hspace{-1pt}\{x\})\hspace{-2pt}\setminus\hspace{-2pt}\{y\}}\hspace{-1pt}:\hspace{-1pt} y\in \cX\} \hspace{-1pt} \rp \nonumber\\
   & =0, 
   \label{eq:decode:lower}
\end{align}
which implies that the first term in the summation in \eqref{eq:secProof_decode2} is zero.

Next, recall from Condition~\ref{cond:psi:2} that $\mathbf{\Psi}(\cL,[1:\ell])$ is full-rank, and thus, invertible. This, together with~\eqref{eq:eavs-col}, implies that 
\begin{align*}
    \MsIOv(:,\cX) 
    &\hspace{-2pt}=\hspace{-2pt} \mathbf{\Psi}^{-1}({\cL,[1\hspace{-2pt}:\hspace{-2pt}\ell]}) \left(\bEI(:,\cX) \hspace{-1pt}-\hspace{-1pt} \mathbf{\Psi}_{\cL,[\ell+1:d]}  \MsIUn(:,\cX) \right)\hspace{-1pt},
\end{align*}
i.e., one can decode $\MsIOv(:,\cX)$ from $\MsIUn(:,\cX)$ and the eavesdropper observation. Therefore, 
\begin{align}
    & H\Big(\MsIOv(:,\cX) \:\cnd\: \bEI, \cS,\{\MsI(:,\cY): \cY \succ \cX\}, \MsIUn(:,\cX)\Big)
    \nonumber\\
    & \leq H\lp\MsIOv(:,\cX),  \:\cnd\: \bEI, \MsIUn(:,\cX)\rp
    = 0,
    \label{eq:decode:upper}
\end{align}
for every $\cX\subseteq [d]$ with $|\cX|=m$. Plugging \eqref{eq:decode:lower} and \eqref{eq:decode:upper} into \eqref{eq:secProof_decode2}, we conclude that $H(\key \:|\: \cEI(\cL),\cS) = 0$. This completes the proof of Lemma~\ref{lm:secProof_I_lm1}. 
\hfill $\square$

\section{Proof of Lemma~\ref{lm:secProof_II_lm2}}
\label{prf:lm:secProof_II_lm2}
Recall from~\eqref{eq:repair-data} that the repair data sent from  a helper node $h$ to a compromised node $f\in \cL$ can be determined by ${\R{h}{f}= \mat{\mathbf{\Psi}}{h,:}\cdot\MsII\cdot\bxi^{f}}$. Concatenating all such vector for all possible helpers and every $f\in \cL$, we arrive at $\mathbf{\Psi}\cdot\MsII\cdot\bxi^{\cL}$, where $\bxi^\cL$  is a matrix of size $\binom{d}{m} \times \ell \binom{d}{m-1}$ obtained by concatenating matrices $\bXi^f$ for all $f\in \cL$. Note that $\mathbf{\Psi}$ is a tall matrix, and all its $d\times d$ submatrices are full-rank (by Condition~\ref{cond:psi:1}). Hence, there is a one-to-one mapping between $\cEII(\cL)$ and $\MsII\cdot\bxi^{\cL}$. Next, note that even the matrix product $\MsII \bxi^\cL$ has $d$ rows and $\ell \binom{d}{m-1}$ columns, there are some linear dependencies among its entry. It is shown in~\cite[Theorem 3]{elyasi2019determinant}  that among all the repair data incoming to a set of $|\cL|$ failed nodes, only $m\binom{d+1}{m+1} - m\binom{d-|\cL|+1}{m+1}$ symbols are informative linearly independent. This immediately implies that 
\begin{align}
    H(\cEII(\cL)) &= H(\MsII\cdot\bxi^{\cL}) \nonumber\\
    &= m\binom{d+1}{m+1} - m\binom{d-|\cL|+1}{m+1} \nonumber\\
    &\leq 
    m\binom{d+1}{m+1} - m\binom{d-\ell+1}{m+1} =
    |\key|.
\end{align}
This completes the proof of Lemma~\ref{lm:secProof_II_lm2}.\hfill $\square$

\section{Proof of Lemma~\ref{lm:secProof_II_lm1}} 
\label{prf:lm:secProof_II_lm1}
Before we prove the desired claim, we present the following lemma, which plays an important role in characterizing the amount of data observed by the eavesdroppers. We present the proof of this lemma in Appendix~\ref{app:fullRank}.
\begin{lm}
    \label{prop:fullrank_uxi}
        Let $\bxi^\cL$ be the concatenation of all matrices ${\{\bxi^f: f\in \cL\}}$, where $\bxi^f$ is the repair encoder matrix introduced in~\eqref{eq:xi-def}, $\cL$ is an arbitrary set of $\ell=|\cL|$  distinct nodes, and $\ubxi^\cL$ be the submatrix of $\bxi^\cL$ obtained from its top $\binom{d}{m}-\binom{d-\ell}{m}$ rows. Then $\ubxi^\cL$  is full-rank, i.e.,  ${\rk{\ubxi^{\cL}}= \binom{d}{m}-\binom{d-\ell}{m}}$.
\end{lm}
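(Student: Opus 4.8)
The plan is to prove that the $\binom{d}{m}-\binom{d-\ell}{m}$ rows of $\ubxi^\cL$ are linearly independent, which is equivalent to the claimed rank equality. First I would pin down which rows constitute $\ubxi^\cL$. The rows of each block $\bxi^f$, and hence of $\bxi^\cL$, are indexed by the size-$m$ subsets $\cI\subseteq[d]$ in lexicographic order, exactly as the columns of $\MsII$ are. By the block decomposition in~\eqref{eq:D-decompos} together with Remark~\ref{rmk:bD}, the last $\binom{d-\ell}{m}$ of these indices are precisely the subsets $\cI\subseteq[\ell+1:d]$, so the top $\binom{d}{m}-\binom{d-\ell}{m}$ rows forming $\ubxi^\cL$ are those indexed by subsets with $\cI\cap[\ell]\neq\varnothing$. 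I would then record the sparsity pattern from~\eqref{eq:xi-def}: inside the block $\bxi^f$, row $\cI$ has a nonzero entry in a column $\cJ$ (with $|\cJ|=m-1$) if and only if $\cJ=\cI\setminus\{x\}$ for some $x\in\cI$, and that entry equals $(-1)^{\ind{\cI}{x}}\mathbf{\Psi}(f,x)$.

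Next I would posit a hypothetical linear dependence $\sum_{\cI:\,\cI\cap[\ell]\neq\varnothing} c_\cI\,(\text{row }\cI)=\mathbf{0}$ and read it off column by column. For a fixed column index $\cJ$ with $|\cJ|=m-1$ and a fixed helper $f\in\cL$, the scalar equation reads $\sum_{x\in[d]\setminus\cJ} c_{\cJ\cup\{x\}}(-1)^{\ind{\cJ\cup\{x\}}{x}}\mathbf{\Psi}(f,x)=0$, where by convention $c_\cI=0$ whenever $\cI\subseteq[\ell+1:d]$ (a bottom row, absent from $\ubxi^\cL$). The central idea is to prove $c_\cI=0$ by induction on $s=|\cI\cap[\ell]|$. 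At level $s$ I would use only the columns $\cJ$ with $|\cJ\cap[\ell]|=s-1$: for such a $\cJ$, a term indexed by $x\in[\ell+1:d]\setminus\cJ$ contributes $\cI=\cJ\cup\{x\}$ with $|\cI\cap[\ell]|=s-1$, whose coefficient is already zero (a bottom row when $s=1$, covered by the induction hypothesis when $s\ge2$). The surviving terms are exactly those with $x\in[\ell]\setminus\cJ$, giving $\cI$ with $|\cI\cap[\ell]|=s$. Concatenating this surviving equation over all $f\in\cL$ yields the homogeneous system $\mathbf{\Psi}(\cL,[\ell]\setminus\cJ)\,\mathbf{d}=\mathbf{0}$, whose unknowns $d_x=c_{\cJ\cup\{x\}}(-1)^{\ind{\cJ\cup\{x\}}{x}}$ range over $x\in[\ell]\setminus\cJ$.

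The decisive step is then Condition~\ref{cond:psi:2}: since $\mathbf{\Psi}(\cL,[\ell])$ is a full-rank $\ell\times\ell$ matrix, any column-submatrix $\mathbf{\Psi}(\cL,[\ell]\setminus\cJ)$ has full column rank, forcing $\mathbf{d}=\mathbf{0}$ and hence $c_{\cJ\cup\{x\}}=0$ for every $x\in[\ell]\setminus\cJ$. Letting $\cJ$ range over all size-$(m-1)$ sets with $|\cJ\cap[\ell]|=s-1$ then recovers every $\cI$ with $|\cI\cap[\ell]|=s$, since each such $\cI$ can be written as $(\cI\setminus\{x_0\})\cup\{x_0\}$ for any $x_0\in\cI\cap[\ell]$ with $\cJ=\cI\setminus\{x_0\}$ having $|\cJ\cap[\ell]|=s-1$. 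Running the induction from the smallest admissible value of $s$ up to $s=\min(m,\ell)$ exhausts all top rows and shows every $c_\cI$ vanishes, establishing linear independence and thus $\rk{\ubxi^\cL}=\binom{d}{m}-\binom{d-\ell}{m}$. I expect the main obstacle to be the bookkeeping that isolates the correct surviving terms in each column equation, namely verifying that precisely the $x\in[\ell]\setminus\cJ$ terms remain so that the coefficient matrix is genuinely a column-submatrix of the Vandermonde block $\mathbf{\Psi}(\cL,[\ell])$ to which Condition~\ref{cond:psi:2} applies; once the elimination is organized by $|\cI\cap[\ell]|$, the full-column-rank argument closes each level cleanly.
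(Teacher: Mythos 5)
Your proof is correct, but it takes a genuinely different route from the paper's. The paper proceeds by exhibiting an explicit invertible square submatrix: it selects the columns $\lbl{j,\cJ}$ with $\cJ\subseteq[j+1:d]$ to form a square matrix $\sbxi^\cL$, verifies by a counting identity that this selection is indeed square, reorders rows and columns according to the dominance order $\nsl$ so that the resulting matrix $\pbxi^\cL$ is block lower-triangular with blocks indexed by the groups $\cG(\cJ)$, and shows that each diagonal block is (up to sign and transposition) the submatrix $\mathbf{\Psi}(\{q_1,\dots,q_z\},[z])$ with $z=\min\{\min\cJ-1,\ell\}$; the determinant of $\sbxi^\cL$ is then the product of the diagonal blocks' determinants, hence nonzero. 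You select no columns and permute nothing: you posit a linear dependence among the rows of $\ubxi^\cL$ and annihilate its coefficients by induction on $s=|\cI\cap[\ell]|$, noting that in every column equation $\lbl{f,\cJ}$ with $|\cJ\cap[\ell]|=s-1$ the terms coming from bottom rows (absent, coefficient zero by convention) or from rows already killed at earlier levels drop out, leaving the homogeneous system $\mathbf{\Psi}(\cL,[\ell]\setminus\cJ)\,\mathbf{d}=\mathbf{0}$, which forces $\mathbf{d}=\mathbf{0}$ since these columns are a subset of the columns of the invertible matrix $\mathbf{\Psi}(\cL,[\ell])$. Your bookkeeping checks out: the identification of the top rows as $\lc \cI : \cI\cap[\ell]\neq\varnothing\rc$, the surviving-term analysis, the coverage step (any $\cI$ at level $s$ equals $\cJ\cup\{x_0\}$ for any $x_0\in\cI\cap[\ell]$ with $\cJ=\cI\setminus\{x_0\}$ at level $s-1$), and the termination at $s=\min(m,\ell)$ are all sound. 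Comparing the two: your elimination uses Condition~\ref{cond:psi:2} exactly as stated (a column subset of the full-rank $\ell\times\ell$ matrix $\mathbf{\Psi}(\cL,[\ell])$ has full column rank), whereas the paper's diagonal blocks are $z\times z$ matrices with $z\leq\ell$ whose invertibility actually leans on the Vandermonde structure of $\mathbf{\Psi}$ rather than on the literal statement of Condition~\ref{cond:psi:2}; your argument also avoids the partition/counting identity and the proof of the block-triangular structure (the paper's Proposition~\ref{prop:block-diagonal}). What the paper's construction buys in exchange is explicitness: a concrete invertible square submatrix of $\ubxi^\cL$ together with a factorization of its determinant into minors of $\mathbf{\Psi}$, structural information that your existential linear-independence argument does not produce.
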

 
Now, we are ready to prove Lemma~\ref{lm:secProof_II_lm1}. 
For every $\cL \subseteq [n]$ with  $|\cL|=\ell$, we can upper bound $H(\key|\cEII(\cL),\cS)$ as 
\begin{align}
     H(\key|&\cEII(\cL),\cS) 
    \nonumber\\
    &\leq H(\key, \MsII |\cEII(\cL),\cS)
    \nonumber\\
    &= H(\MsII|\cEII(\cL), \cS) + H(\key|\MsII, \cEII(\cL), \cS)
    \nonumber\\
    &  \stackrel{\rm{(a)}}{=}  H(\MsII|\cEII(\cL),\cS)
    \nonumber\\
    &  \stackrel{\rm{(b)}}{=}  H(\bA,\bB,\bC,\bD|\cEII(\cL),S) 
    \nonumber\\
    &  \stackrel{\rm{(c)}}{=}  H(\bD|\cEII(\cL),\cS) + H(\bC|\bD, \cEII(\cL),\cS) 
    \nonumber\\
    &\phantom{=} + H(\bA,\bB|\bC, \bD, \cEII(\cL),\cS),
    \label{eq:H(J|E,S)}
\end{align}
where~{\rm (a)} follows from the fact that all secure symbols and random keys can be retrieved from the message matrix $\MsII$; in~{\rm (b)} we replaced $\MsII$ by its  block decomposition given in~\eqref{eq:D-decompos}; and \eqref{eq:H(J|E,S)} follows from the chain rule. Next, we show that each term in \eqref{eq:H(J|E,S)} equals to zero.

For the first term in \eqref{eq:H(J|E,S)}, recall that all entries of $\bD$ in either $\cV$-type or $\cW$-type position is a secure information symbol, which is known given $\cS$. Also, Remark~\ref{rmk:bD} and~\eqref{eq:parity-Ds} imply that the $\cP$-type symbols with matrix  $\bD$ only depend on the symbols in $\cS$.  Therefore, we have 
\begin{align}
    H(\bD|\cEII(\cL),\cS) \leq H(\bD|\cS) =0. 
    \label{eq:1stTerm_secProof_II}
\end{align}

For the second term in \eqref{eq:H(J|E,S)}, recall that any set of compromised nodes  $\cL\subseteq [n]$ accessed by the eavesdropper, with $|\cL|=\ell$, the observation of the eavesdropper is of the form $\cEII(\cL) = \{\R{h}{f}: f\in \cL, h\in [n] \}$, where ${\R{h}{f}= \mat{\mathbf{\Psi}}{h,:}\cdot\MsII\cdot\bxi^{f}}$ is a vector of length $\beta=\binom{d}{m-1}$. For each helper node $h\in \cH$, concatenating all such vector for all $f\in \cL$, we arrive at $\mat{\mathbf{\Psi}}{h,:}\cdot\MsII\cdot\bxi^{\cL}$, where $\bxi^\cL$ is a matrix of size $\binom{d}{m} \times \ell \binom{d}{m-1}$ obtained by concatenating matrices $\bXi^f$ for all $f\in \cL$. Then, for a set of helper nodes $\cH\subseteq [n]$ with $|\cH|=d$, we can stack the repair data going from $h$ to all the nodes in $\cL$, and obtain a $d \times \ell\binom{d}{m-1}$ block matrix $\bX$,  given by
\begin{align}\label{eq:X-from-EII}
    \bX
    = \mathbf{\Psi}(\cH,:)  \cdot  \MsII \cdot \bxi^\cL, 
\end{align}
where $\mathbf{\Psi}(\cH,:)$ is a submatrix of $\mathbf{\Psi}$ of size  $d\times d$ obtained by rows $h\in \cH$. Since $\mathbf{\Psi}(\cH,:)$ is a Vandermonde matrix, it is full-rank, and hence invertible.  
Therefore, we have 
\begin{align}
    \mathbf{\Psi}^{-1}(\cH,:) \cdot \bX = \MsII \cdot \bxi^\cL.
\end{align}
Then, we decompose matrix $\bxi^\cL$  into two submatrices. We denote the submatrix of $\bxi^\cL$ consisting of the top $\binom{d}{m} - \binom{d-\ell}{m}$ rows of $\bxi^\cL$ by $\ubxi^\cL$, and the submatrix of $\bxi^\cL$ consisting of the bottom~$\binom{d-\ell}{m}$ rows of $\bxi^\cL$ by $\dbxi^\cL$. This allows us to write 
\begin{align} 
    \mathbf{\Psi}^{-1}(\cH,:)\cdot \bX 
     = \MsII \cdot \bxi^\cL 
    &=  \left[
    \begin{array}{c|c}
        \bA & \bB \\
        \hline
        \bC & \bD 
    \end{array}\right] \cdot \left[
    \begin{array}{c}
        \ubxi^{\cL} \\
        \hline
        \dbxi^{\cL}
    \end{array} \right]\nonumber\\
    &= \left[
    \begin{array}{c}
        \bA\ubxi^{\cL}+\bB \dbxi^{\cL}   \\
        \hline
        \bC\ubxi^{\cL}+\bD \dbxi^{\cL} 
    \end{array}
    \right].
  \label{eq:expand}
\end{align}
Therefore, we can write 
\begin{align}\label{eq:2ndTerm_secProof_II}
    &H(\bC|\bD, \cEII(\cL),\cS) 
    \nonumber\\
    &\leq H(\bC, \bX|\bD, \cEII(\cL),\cS) \nonumber\\
    &= H(\bX|\bD, \cEII(\cL),\cS) + H(\bC|\bX, \bD, \cEII(\cL),\cS)\nonumber\\
    &\leq H(\bX| \cEII(\cL)) + H(\bC, \bX|\bD, \cEII(\cL),\cS) \nonumber\\
    &\stackrel{\rm{(a)}}{\leq}  0 + H(\bC| \bA\ubxi^{\cL}+\bB \dbxi^{\cL}, \bC\ubxi^{\cL}+\bD \dbxi^{\cL}, \bD, \cEII(\cL),\cS)\nonumber\\
    &\stackrel{\rm{(b)}}{=} 0,
\end{align}
where~{\rm (a)} follows from~\eqref{eq:X-from-EII}  and~\eqref{eq:expand}, and~{\rm (b)} is due to the facts that $\bC = \left( (\bC\ubxi^{\cL}+\bD\dbxi^{\cL}) - \bD\dbxi^{\cL} \right)  \left(\ubxi^{\cL}\right)^{-1}$, and $\ubxi^{\cL}$ is full-rank, as shown in Lemma~\ref{prop:fullrank_uxi}.

Finally, we bound the third term in \eqref{eq:H(J|E,S)}. Let $\bEII(\cL)$ be an $\ell\times\alpha$ matrix obtained by stacking the content of the $\ell$ compromised nodes. We have   $\bEII(\cL) =\mathbf{\Psi}(\cL,:) \MsII$. Recall that the eavesdropper's observation is characterized by $\cEII(\cL)$, which is the incoming repair data to all nodes in $\cL$, and hence the eavesdropper can recover the content of the nodes in $\cL$, i.e., we have 
\begin{align}\label{eq:bEII-cEII}
    H(\bEII(\cL)|\cEII(\cL))=0.    
\end{align}

Next, using the decomposition of $\MsII$ in~\eqref{eq:D-decompos} we have
\begin{align}\label{eq:bEII-decompose}
    \bEII(\cL) &= 
    \mathbf{\Psi}(\cL,:) \cdot \MsII
    \nonumber\\
    &= \left[
    \begin{array}{c|c}
        \mathbf{\Psi}(\cL,[1:\ell]) &
        \mathbf{\Psi}(\cL,[\ell+1:d])
    \end{array} \right]
    \cdot \left[
    \begin{array}{c|c}
        \bA & \bB \\
        \hline
        \bC & \bD 
    \end{array}\right] 
    \nonumber\\ 
    &=
    \Big[
    \mathbf{\Psi}(\cL,[1:\ell]) \bA + \mathbf{\Psi}(\cL,[\ell+1:d]) \bC  
    \Big| \nonumber\\
    &\hspace{40pt} \mathbf{\Psi}(\cL,[1:\ell]) \bB + \mathbf{\Psi}(\cL,[\ell+1:d]) \bD 
    \Big],
\end{align}
where $\mathbf{\Psi}(\cL:[1:\ell])$ is a submatrix of $\mathbf{\Psi}$ obtained from the intersection of the rows with index in $\cL$ and the first $\ell$ columns, and $\mathbf{\Psi}(\cL:[\ell+1:d])$ is a submatrix of $\mathbf{\Psi}$ obtained from the intersection of rows with label in $\cL$ and the last~$(d-\ell)$ columns. On the other hand, we have     
\begin{align*}
    \bA = \mathbf{\Psi}^{-1}(\cL,[1:\ell]) \bigm[ &\lp \mathbf{\Psi}(\cL,[1:\ell]) \bA + \mathbf{\Psi}(\cL,[\ell+1:d]) \bC \rp \nonumber\\
    &- \mathbf{\Psi}(\cL,[\ell+1:d]) \bC \bigm],
\end{align*}
where $\mathbf{\Psi}(\cL,[1:\ell])$ is full-rank (due to Condition~\ref{cond:psi:2}), and $ \mathbf{\Psi}(\cL,[1:\ell]) \bA + \mathbf{\Psi}(\cL,[\ell+1:d]) \bC$ is given in the first $\binom{d}{m}-\binom{d-\ell}{m}$ columns of $\bEII(\cL)$ as shown in~\eqref{eq:bEII-decompose}.  This implies 
\begin{align}\label{eq:A-C-bEII}
    H(\bA | \bC, \bEII(\cL) )=0.
\end{align}
Similarly, since 
\begin{align*}
    \bB = \mathbf{\Psi}^{-1}(\cL,[1:\ell]) \bigm[ &\lp \mathbf{\Psi}(\cL,[1:\ell]) \bB + \mathbf{\Psi}(\cL,[\ell+1:d]) \bD \rp \nonumber\\
    &- \mathbf{\Psi}(\cL,[\ell+1:d]) \bD \bigm],
\end{align*}
and $\mathbf{\Psi}(\cL,[1:\ell]) \bB + \mathbf{\Psi}(\cL,[\ell+1:d]) \bD$ is given in the last $\binom{d-\ell}{m}$ columns of $\bEII(\cL)$, we get 
\begin{align}\label{eq:B-D-bEII}
    H(\bB | \bD, \bEII(\cL) )=0.
\end{align}

Therefore, from~\eqref{eq:bEII-cEII}, \eqref{eq:A-C-bEII}, and~\eqref{eq:B-D-bEII}, we can conclude
\begin{align}
    H(\bA,\bB &| \bC, \bD, \cEII(\cL), \cS)\nonumber\\ 
    &\leq H(\bA,\bB,\bEII(\cL) | \bC, \bD, \cEII(\cL), \cS)\nonumber\\
    &= H(\bEII(\cL) | \bC, \bD, \cEII(\cL), \cS) \nonumber\\
    &\qquad +  H(\bA,\bB | \bC, \bD, \cEII(\cL), \cS, \bEII(\cL))\nonumber\\
    &\leq H(\bEII(\cL) |  \cEII(\cL)) \nonumber\\
    &\qquad +  H(\bA | \bC,  \bEII(\cL)) +
    H(\bB | \bD,  \bEII(\cL))\nonumber\\
    &=0. \label{eq:3rdTerm_secProof_II}
    \end{align}
Plugging the inequalities in~\eqref{eq:1stTerm_secProof_II}, \eqref{eq:2ndTerm_secProof_II}, and \eqref{eq:3rdTerm_secProof_II} into \eqref{eq:H(J|E,S)}, we arrive at $ H(\key \:|\: \cEII(\cL),\cS) = 0$, which concludes the proof. \hfill $\square$

\section{Proof of Lemma~\ref{prop:fullrank_uxi}}
\label{app:fullRank}
Let $\cL=\{q_1,q_2,\dots,q_\ell\}$ be the set of comprised nodes accessed by the eavesdropper, and denote by $\bxi^{f}$ the repair encoder matrix of node $f$, as defined in~\eqref{eq:xi-def}. Then, the repair encoder matrix $\bxi^{\cL}$ is formed by concatenating matrices ${\{\bxi^f: f\in \cL\}}$, given by 
\begin{align}
    \bxi^\cL=
    \left[
    \begin{array}{c|c|c|c}
        \bxi^{q_1} & \bxi^{q_2} & \cdots & \bxi^{q_\ell}
    \end{array}
    \right]. 
\end{align}
Recall that $\bxi^\cL$ has $\binom{d}{m}$ rows, labeled by subsets of $[d]$ of size $m$. Moreover, $\ubxi^\cL$ consists of the top ${\binom{d}{m}-\binom{d-\ell}{m}}$ rows of $\bxi^\cL$. Since the rows of $\bxi^\cL$ are sorted in lexicographical order, the bottom $\binom{d-\ell}{m}$ rows are exactly the $m$-subsets of ${[\ell+1:d]}$. This means that the top   $\binom{d}{m}-\binom{d-\ell}{m}$ rows to be included in $\ubxi^\cL$ are those whose labels appear in 
\begin{align}
    \bbI := \{\cI: \cX\subseteq [d], |\cI|=m, |\cI|\nsubseteq[\ell+1:d]\},
    \label{eq:def:I}
\end{align}
i.e., those who contain at least one element from $[\ell]$. Thus, the submatrix $\ubxi^\cL$ of $\bxi^\cL$  is given by 
\begin{align}
    \ubxi^\cL= \bxi^\cL(\bbI,:) =
    \left[
    \begin{array}{c|c|c|c}
        \ubxi^{q_1} & \ubxi^{q_2} & \cdots & \ubxi^{q_\ell}
    \end{array}
    \right],
\end{align}
where each submatrix $\ubxi^{q_j}$ consists of the top $\binom{d}{m}-\binom{d-\ell}{m}$ rows of matrix $\bxi^{q_j}$, for $j\in[\ell]$. 

In order to prove that $\ubxi^\cL$ is full-rank, we identify a square submatrix of $\ubxi^\cL$ of size $\binom{d}{m}-\binom{d-\ell}{m}$, and show that it is full-rank. Recall from~\eqref{eq:xi-def}  that the columns of each $\ubxi^{q_j}$ are indexed by subsets of $\cJ\subseteq[1:d]$ of size $m-1$. We label each column of $\ubxi^\cL$ by a pair $\lbl{j,\cJ}$, where $j\in[\ell]$ and $\cJ\subseteq[d]$ with $|\cJ|=m-1$. Thus, column $\lbl{j,\cJ}$ of $\ubxi^\cL$ is indeed column $\cJ$ of $\ubxi^{q_j}$.  Then, we define $\sbxi^\cL$ to the submatrix of~$\ubxi^\cL$, where column $\lbl{j,\cJ}$ appears in $\sbxi^\cL$ if and only if $\cJ\subseteq [j+1:d]$ and $|\cJ|=m-1$, i.e., $\sbxi^\cL := \ubxi^\cL (:, \bbJ) = \bxi^\cL (\bbI, \bbJ)$, where 
\begin{align}
    \bbJ&=\{\lbl{j,\cJ}: j\in [\ell], \cJ\subseteq [j+1:d], |\cJ|=m-1\}.
    \label{eq:def:J}
\end{align}
Note that the number of columns in $\sbxi^\cL$ is given by
\begin{align*}
    |\bbJ| &= \sum_{j=1}^{\ell} \binom{d-j}{m-1}\stackrel{\rm{(a)}}{=} \sum_{t=d-\ell}^{d-1} \binom{t}{m-1} \\
    &=\sum_{t=0}^{d-1} \binom{t}{m-1} - \sum_{t=0}^{d-\ell-1}
    \binom{t}{m-1}\\
    &\stackrel{\rm{(b)}}{=}
    \binom{d}{m} - \binom{d-\ell}{m},
\end{align*}
where in~{\rm (a)} we have $t=d-j$, and~{\rm (b)} follows from the identity $\sum_{i=0}^{a} \binom{i}{b}=\binom{a+1}{b+1}$. This shows that $\sbxi^\cL$ is a square matrix. 

Next, note that for each $\cI\in \bbI$, we have $\min \cI \in [\ell]$ and $\cI\setminus \{\min \cI\} \subseteq [\min \cI:d]$ and  $|\cI\setminus \{\min \cI\}| = |\cI|-1 = m-1$. Therefore, the pair $\lbl{\min \cI, \cI\setminus\{\min \cI\}}$ belongs to $\bbJ$. This implies that there is a homomorphism between $\bbI$ and $\bbJ$, i.e., $\bbJ \equiv\bbI$ and we have $\sbxi^\cL =  \bxi^{\cL}(\bbJ, \bbJ)$. In the following, we use the format given in~\eqref{eq:def:J} to refer to the rows and column labels of $\bxi^{\cL}$. 

Recall that the rank of a matrix is invariant (subject to a sign) under the permutation of its rows and columns. In order to show that $\sbxi^\cL$ is full-rank, we permute the rows and the columns of $\sbxi^\cL$ to obtain a new matrix $\pbxi^\cL$, and then we show that $\det{\pbxi^\cL}\neq 0$.  To this end, we define a new order on the row and column labels in $\bbJ$, and then sort them with respect to the new order. 

\begin{defi}
    For two pairs $\lbl{i,\cI}, \lbl{j,\cJ} \in \bbJ$, we say $\lbl{j,\cJ}$ dominates $\lbl{i,\cI}$ and write
    \[\lbl{i,\cI} \nsl \lbl{j,\cJ},\]
    if either $\cI \setl \cJ$,  or $\cI = \cJ$ and  $i < j$. 
    \label{defn:orderRelation}
\end{defi} 

In the following, we use $\pJ$ to refer to a sequence of all the pairs $\lbl{j,\cJ}$ in $\bbJ$ (see~\eqref{eq:def:J}), which are sorted with respect to~$\nsl$. Subsequently, we define $\pbxi^\cL$ as a permuted version of $\sbxi^{\cL}$ in which all the rows and columns are permuted with respect to~$\nsl$, i.e., $\pbxi^\cL:= \bxi^{\cL} (\pJ, \pJ)$. Hence, the entry of $\pbxi^{\cL}$ at row $\lbl{i,\cI}$ and column $\lbl{j,\cJ}$ is given by \[\pbxi^{\cL}(\lbl{i,\cI},\lbl{j,\cJ}) = \bxi^{q_j}(\{i\}\cup \cI, \cJ).\] Figure~\ref{fig:flowchart_xi} demonstrates the construction of matrix $\pbxi^\cL$, and Example~\ref{ex:pbxi} illustrates one instance of $\pbxi^\cL$. 
\begin{figure*}[htbp]
    \centering
    \includegraphics[width=0.8\textwidth]{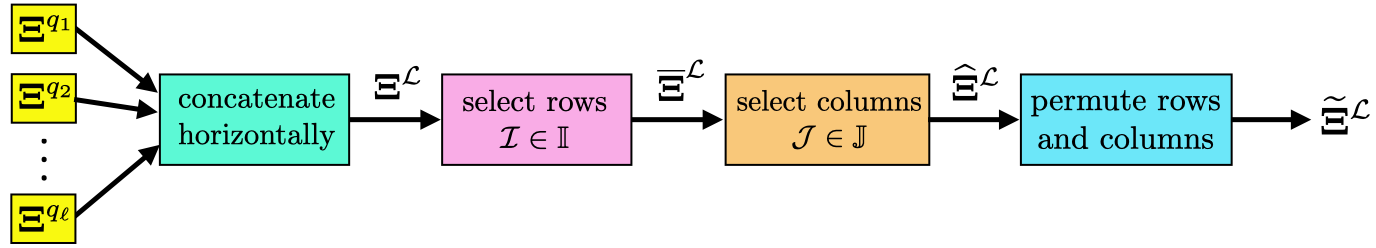}
    \caption{Construction of matrix $\pbxi^\cL$.}
    \label{fig:flowchart_xi}
\end{figure*}

Consider a subset $\cJ\subseteq [d]$ with $|\cJ|=m-1$. Recall from~\eqref{eq:def:J} that every $\lbl{j,\cJ}$ with $j<\min \cJ$ and $j\in [\ell]$ is a column/row label in $\bbJ$. The order $\nsl$ in Definition~\ref{defn:orderRelation} arranges the labels such that all pairs with a common $\cJ$ appear next to each other. This motivates us to define the group of labels associated with each $\cJ\subseteq [d]$ with $|\cJ|=m-1$ as
\begin{align}\label{eq:def:G}
    \cG(\cJ):=\{\lbl{j,\cJ}: j< \min \cJ, j\leq \ell\}.
\end{align}
The following proposition specifies the structure of the matrix $\pbxi^\cL$, and plays a crucial role in the proof of the full-rankness of $\ubxi^\cL$. We present the proof of the proposition at the end of this section. 

\begin{prop}
    The matrix $\pbxi^\cL= \bxi^\cL(\pJ,\pJ)$ is a block lower-triangular matrix. That is, the rows and columns of~$\pbxi^\cL$ can be decomposed into groups ${\{\cG(\cJ): \cJ\subseteq [d], |\cJ|=m-1 \}}$, such that each diagonal block $\bxi^\cL(\cG(\cJ), \cG(\cJ))$ is full-rank, and each block on the right side of each diagonal block is an all-zero matrix. 
    \label{prop:block-diagonal}
\end{prop}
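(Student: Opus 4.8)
The plan is to prove the proposition by exhibiting the claimed block structure explicitly and then reducing the full-rankness of $\ubxi^\cL$ to the nonsingularity of the diagonal blocks of the permuted matrix $\pbxi^\cL$. Recall from the homomorphism $\bbI\equiv\bbJ$ that every row and column of $\pbxi^\cL$ carries a label $\lbl{i,\cI}\in\bbJ$, and that the entry of $\pbxi^\cL$ at row $\lbl{i,\cI}$ and column $\lbl{j,\cJ}$ equals $\bxi^{q_j}(\{i\}\cup\cI,\cJ)$. The entire argument rests on the sparsity pattern dictated by~\eqref{eq:xi-def}: this entry is nonzero only if $\cJ\cup\{x\}=\{i\}\cup\cI$ for some $x$, i.e.\ only if $\cJ$ is obtained from the $m$-set $\{i\}\cup\cI$ by deleting a single element. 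First I would group the labels by their set component, assigning $\lbl{i,\cI}$ to the group $\cG(\cI)$ of~\eqref{eq:def:G}, and sort the groups by the lexicographic order $\setl$ on the $(m-1)$-subsets; the domination order $\nsl$ refines this by the node index inside each group. Since labels sharing a common set component are contiguous under $\nsl$, this grouping induces exactly the block partition in the statement.

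Next I would establish the block lower-triangular shape, i.e.\ that the block indexed by row-group $\cG(\cI)$ and column-group $\cG(\cJ)$ vanishes whenever $\cI\setl\cJ$. Fix a nonzero entry $\bxi^{q_j}(\{i\}\cup\cI,\cJ)\neq 0$ and write $\cI'=\{i\}\cup\cI$; by construction of the homomorphism $i=\min\cI'$, so $\cI=\cI'\setminus\{\min\cI'\}$. The sparsity condition forces $\cJ=\cI'\setminus\{x\}$ for some $x\in\cI'$. If $x=i$ then $\cJ=\cI$ and the entry sits in a diagonal block; if $x\neq i$ then $x>i$, and a direct computation of the symmetric difference gives $\cI\setminus\cJ=\{x\}$ and $\cJ\setminus\cI=\{i\}$, so $\min(\cJ\setminus\cI)=i<x=\min(\cI\setminus\cJ)$, which by definition means $\cJ\setl\cI$. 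Hence every nonzero entry has column-group $\preceq$ row-group, so all blocks strictly to the right of the diagonal are zero, as claimed.

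It then remains to show each diagonal block $\bxi^\cL(\cG(\cJ),\cG(\cJ))$ is full-rank. Its rows and columns are both labeled by $\{\lbl{i,\cJ}: i<\min\cJ,\ i\leq\ell\}$, so the block is square of order $s:=\min(\min\cJ-1,\ell)$. For row $\lbl{i,\cJ}$ and column $\lbl{j,\cJ}$ the $m$-set is $\{i\}\cup\cJ$ and, since $i<\min\cJ$, the deleted element is $x=i$, the minimum of $\{i\}\cup\cJ$; thus $\ind{\{i\}\cup\cJ}{i}=1$ and the entry equals $(-1)^{1}\mat{\mathbf{\Psi}}{q_j,i}=-\mat{\mathbf{\Psi}}{q_j,i}$. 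Therefore the diagonal block is, up to sign and transposition, the submatrix $\mathbf{\Psi}(\{q_1,\dots,q_s\},[s])$. As the compromised nodes $q_1,\dots,q_s$ are distinct and $s\leq\ell$, this is an $s\times s$ submatrix of $\mathbf{\Psi}(:,[\ell])$, which is full-rank by Condition~\ref{cond:psi:2} (equivalently, a Vandermonde matrix generated by distinct elements). Hence every diagonal block is nonsingular.

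Combining the last two steps, $\pbxi^\cL$ is block lower-triangular with nonsingular diagonal blocks, so $\det{\pbxi^\cL}=\prod_{\cJ}\det{\bxi^\cL(\cG(\cJ),\cG(\cJ))}\neq 0$; since $\pbxi^\cL$ is obtained from the square matrix $\sbxi^\cL$ by permuting rows and columns, this gives $\rk{\ubxi^\cL}\geq\rk{\sbxi^\cL}=\binom{d}{m}-\binom{d-\ell}{m}$, i.e.\ $\ubxi^\cL$ is full-rank. I expect the main obstacle to be careful bookkeeping rather than any deep idea: one must verify that the homomorphism $\bbI\equiv\bbJ$ makes the row- and column-index sets of each diagonal block coincide (so the blocks are genuinely square), and that the lexicographic comparison in the triangularity step is carried out with the correct orientation of $\setl$. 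The sign factor $(-1)^{\ind{\cI}{x}}$ and the precise identification of which element is deleted are the error-prone points, but none of them obstructs the argument.
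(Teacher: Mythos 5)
Your proof is correct and takes essentially the same approach as the paper: the same partition of $\bbJ$ into groups $\cG(\cJ)$ ordered by $\nsl$, the same identification of each diagonal block (up to sign and transposition) with an $s\times s$ submatrix of $\mathbf{\Psi}(:,[\ell])$ that is full-rank by Condition~\ref{cond:psi:2}, and the same sparsity argument from~\eqref{eq:xi-def} for the off-diagonal zero blocks. The only cosmetic differences are that you argue triangularity contrapositively (a nonzero entry forces $\cJ=\cI$ or $\cJ\setl\cI$) where the paper shows directly that entries at positions with $\cI\setl\cJ$ vanish via $\left|(\{i\}\cup\cI)\setminus\cJ\right|\geq 2$, and that you treat the fact that the groups partition $\bbJ$ as immediate from the definitions rather than verifying it through the paper's binomial-identity count.
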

The following example demonstrates the operations we apply on the matrix $\ubxi^{\cL}$ to convert it to a block  lower-triangular matrix. 
\begin{ex}\label{ex:pbxi}
    Consider an $(n,k=6,d=6,\ell=3)$ secure system that operates at mode $m=3$. The repair encoder matrix for each node is a matrix with $\binom{d}{m}=\binom{6}{3}=20$ rows and $\binom{d}{m-1}=\binom{6}{2}=15$ columns. Let $\cL=\{q_1, q_2, q_3\}$ be the set of compromised nodes. Then, $\bxi^{\cL} = [\bxi^{q_1} | \bxi^{q_2} | \bxi^{q_3} ]$ is a $20\times 45$ matrix, and $\ubxi^{\cL}$ is a sub-matrix, including only the top $\binom{d}{m}- \binom{d-\ell}{m}= 20- 1=19$ rows, i.e., all the rows, except the one labeled by $\{4,3,5\}$.  Then, $\sbxi^{\cL}$ will be generated by selecting a subset of columns from each of $\bxi^{q_1}$, $\bxi^{q_2}$ and $\bxi^{q_3}$. As determined in~\eqref{eq:def:J}, the set of columns selected from each $\bxi^{q_j}$ are given by 
    \begin{align*}
        \bxi^{q_1}: & \{2,3\}, \{2,4\}, \{2,5\}, \{2,6\}, \{3,4\}, \{3,5\}, \\
        &\{3,6\}, \{4,5\}, \{4,6\}, \{5,6\}, \\
        \bxi^{q_2}: &  \{3,4\}, \{3,5\}, \{3,6\}, \{4,5\}, \{4,6\}, \{5,6\}, \\
        \bxi^{q_3}: &  \{4,5\}, \{4,6\}, \{5,6\}. 
    \end{align*}
    Hence, we have a total of $10+6+3=19$ columns, and  $\sbxi^{\cL}$ will be a square matrix. Rearrangement of these $19$ columns (as well as the $19$ rows) according to $\nsl$ order provides us with matrix $\pbxi^\cL$ given by
    \begin{align*}
        \includegraphics[width= 0.485\textwidth]{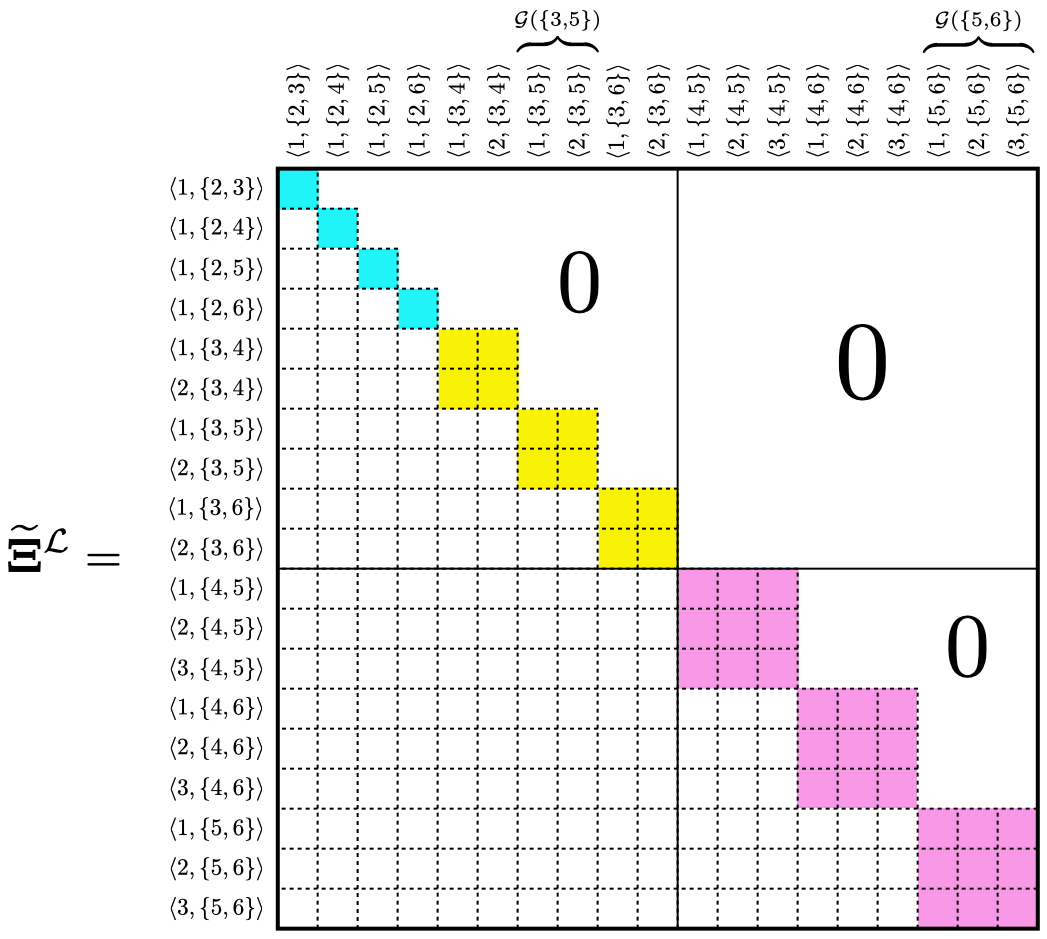}
    \end{align*}
    which is a block lower-triangular matrix consisting of full-rank diagonal blocks. \hfill $\diamond$
\end{ex}

By Proposition~\ref{prop:block-diagonal}, we can write
\begin{align}
    \det{\sbxi^\cL} &\stackrel{\rm{(a)}}{=} \det{\pbxi^\cL}\nonumber \\
    &\!\stackrel{\rm{(b)}}{=} \prod_{\substack{\cJ\subseteq[d] \\ |\cJ|=m-1}} \det{  \pbxi^{\cL}(\cG(\cJ), \cG(\cJ))} \stackrel{\rm{(c)}}{\neq} 0, \nonumber
\end{align}
where~{\rm (a)} holds since the determinant of a matrix is invariant (subject to a sign) to a permutation of its rows and columns, {\rm (b)} follows  from~\cite[Section~0.9.4]{horn2012matrix}, and~{\rm (c)} is a consequence of Proposition~\ref{prop:block-diagonal}. This completes the proof of Lemma~\ref{prop:fullrank_uxi}. \hfill $\square$

\begin{proof}[Proof of Proposition~\ref{prop:block-diagonal}]
First, we show that $\bbJ$ is fully decomposed into the union of the groups $\cG(\cJ)$ over all $\cJ$'s with $\cJ\subseteq [d]$ and $|\cJ|=m-1$. It is clear that $\cG(\cI) \cap \cG(\cJ)=\varnothing$, for distinct $\cI$ and $\cJ$. Next, recall from~\eqref{eq:def:G} that if $\cJ\subseteq [\ell+1:d]$, then ${\cG(\cJ)=\{\lbl{j,\cJ}: j\in [\ell]\}}$ and $|\cG(\cJ)| = \ell$. Similarly, if $\cJ\nsubseteq [\ell+1:d]$, then ${\cG(\cJ)=\{\lbl{j,\cJ}: j \leq \min\cJ -1  \}}$ and $|\cG(\cJ)| = \min \cJ-1$. Note that if $\cJ\nsubseteq [\ell+1:d]$, then it should have at least one element from $\ell$ and hence, we have $x=\min \cJ\in [\ell]$ and $|\cG(\cJ)| = x-1$. Thus, we can write
\begin{align}
    &\sum_{\substack{\cJ\subseteq [d]\\ |\cJ|=m-1}}|\cG(\cJ)| 
    = \sum_{\substack{\cJ\subseteq  [d]\\ |\cJ|=m-1\\ \cJ\subseteq  [\ell+1:d] }}|\cG(\cJ)| + 
    \sum_{\substack{\cJ\subseteq [ d]\\ |\cJ|=m-1\\ \cJ\nsubseteq  [\ell+1:d]}}|\cG(\cJ)|\nonumber\\
    &= \binom{d-\ell}{m-1}\ell + 
    \sum_{x=1}^{\ell}\sum_{\substack{\cJ\subseteq [d]\\ |\cJ|=m-1\\ \min \cJ =x}}|\cG(\cJ)| \nonumber\\
    &= \ell \binom{d-\ell}{m-1} +  \sum_{x=1}^{\ell}  \binom{d-x}{m-2} (x-1)
       \nonumber\\
    &= \ell \binom{d-\ell}{m-1} + \sum_{y=d-\ell}^{d-1}  \binom{y}{m-2} (d
    -(y+1))
     \nonumber\\
    &= \ell \binom{d-\ell}{m-1} \!+\!  d\!\sum_{y=d-\ell}^{d-1} \! \binom{y}{m-2} - \!\sum_{y=d-\ell }^{d-1} (y+1) \binom{y}{m-2}     \nonumber\\
    &= \ell \binom{d-\ell}{m-1} \!+\! d \!\sum_{y=d-\ell }^{d-1} \! \binom{y}{m-2} - (m\hspace{-1pt}-\hspace{-1pt}1)\!\sum_{y=d-\ell}^{d-1}  \binom{y+1}{m-1}      \nonumber\\
    &= \ell \binom{d-\ell}{m-1} + d\lb \binom{d}{m-1} - \binom{d-\ell}{m-1}\rb\nonumber\\
    & \qquad - (m-1) \lb \binom{d+1}{m} - \binom{d-\ell+1}{m}\rb  \nonumber\\
    &\stackrel{\rm{(a)}}{=} \binom{d}{m}-\binom{d-\ell}{m} = |\bbJ|, \label{eq:G-partition}
 \end{align}
where in~{\rm (a)} we have used identities ${\binom{a}{b}=\binom{a-1}{b}+\binom{a-1}{b-1}}$ and ${\binom{a}{b}=\frac{a}{b}\binom{a-1}{b-1}}$. Then, \eqref{eq:G-partition}, together with $\cG(\cI) \cap \cG(\cJ)=\varnothing$, implies $\bigcup_{\substack{\cJ\subseteq[d] \\ |\cJ|=m-1}} \cG(\cJ)= \bbJ$. 
 
Now, consider a diagonal block of $\pbxi^{\cL}$ associated to a set $\cJ$, that is, $\pbxi^{\cL}(\cG(\cJ), \cG(\cJ))$. Recall that each row label in $\cG(\cJ)$ is a pair $\lbl{i,\cJ}$ and each column label in $\cG(\cJ)$ is another pair $\lbl{j,\cJ}$, for some $i,j\leq z\coloneqq\min\{\min \cJ-1, \ell\}$. Then, using~\eqref{eq:xi-def}, entry at position $(\lbl{i,\cJ}, \lbl{j,\cJ})$ is given by 
\begin{align*}
    \pbxi^{\cL}(\lbl{i,\cJ}, \lbl{j,\cJ}) &= \bxi^{q_j} (\{i\}\cup \cJ, \cJ) \\
    &= (-1)^{\ind{\{i\}\cup \cJ}{i}} \mathbf{\Psi}(q_j, i) = - \mathbf{\Psi}(q_j, i),
\end{align*}
where the last equality follows from the fact that ${i< \min \cJ}$, and therefore, $i$ is the smallest entry of $\{i\}\cup\cJ$, that is, ${\ind{\{i\}\cup \cJ}{i}=1}$. This shows that the corresponding block is a submatrix of $\mathbf{\Psi}$ (subject to a negative sign), corresponding to the rows in $\{q_1,\dots, q_{z}\}$ and columns in $\{1,2,\dots, z\}$. Then, Condition~\ref{cond:psi:2} implies that $\pbxi^{\cL}(\cG(\cJ), \cG(\cJ))$ is full-rank. 
 
Next, consider an entry at row $\lbl{i,\cI}$ and column $\lbl{j,\cJ}$ that appears on the right side of a diagonal block $\pbxi^{\cL}(\cG(\cI), \cG(\cI))$. This mean the column label $\lbl{j,\cJ}$ dominates the row label $\lbl{i,\cI}$, that is,  ${\lbl{i,\cI} \nsl\lbl{j,\cJ}}$. Since $\cI \neq \cJ$, we have $\cI \setl \cJ$, and hence, $\min \cI \leq \min  \cJ$. These yield to ${i< \min \cI \leq \min \cJ}$, and thus, $i\notin \cJ$. Therefore, 
\begin{align*}
    \left|(\{i\}\cup \cI )\setminus \cJ\right| = |\{i\}| + |\cI \setminus \cJ| \geq 2.
\end{align*}
This, together with the definition of $\bxi^{q_j}$ in~\eqref{eq:xi-def}, implies $\pbxi^{\cL}(\lbl{i,\cI}, \lbl{j,\cJ}) = \bxi^{q_j} (\{i\}\cup \cI, \cJ)=0$. This completes the proof of Proposition~\ref{prop:block-diagonal}. 
\end{proof}

\bibliographystyle{IEEEtran}
\bibliography{bib_secureDSS}

\end{document}